\theoremstyle{plain}
\newtheorem{theorem}{Theorem}[section]
\newtheorem{lemma}[theorem]{Lemma}
\newtheorem{corollary}[theorem]{Corollary}
\theoremstyle{definition}
\pgfplotsset{compat=1.11}
\newcommand{\Exx}[2]{\mbox{\rm\bf E}_{#1}\bigl[#2\bigr]}
\newcommand{\OPT}{\textsc{Opt}}
\newcommand{\ALG}{\textsc{Alg}}
\newcommand{\lfrac}[2]{#1/#2}
\newcommand{\Ex}[1]{\mbox{\rm\bf E}\left[#1\right]}
\newcommand{\Exp}[1]{\mbox{\rm\bf E}\left[#1\right]}
\renewcommand{\P}{\mbox{\rm\bf Pr}}
\newcommand{\Prob}[1]{\mbox{\rm\bf Pr}\left[#1\right]}
\newcommand{\intd}{\normalfont{\mbox{\,d}}}
\newcommand{\U}{U}
\newcommand{\E}{\mbox{\rm\bf E}}
\newcommand{\Hn}{\mathcal{H}}
\newcommand{\e}{e}
\newcommand{\ct}{\ensuremath{\tau}}
\newcommand{\set}{\leftarrow}
\newcommand{\notnicefrac}[2]{#1/#2}
\newcommand{\ratiouniform}{2.965}
\newcommand{\ratioarbitrary}{6.052}
\newcommand{\ratiouniformbad}{8.122}
\newcommand{\ratiolower}{2.148}
\newcommand{\ratiounknown}{48}
\newcommand{\Halmos}{}
\begin{document}
 
\title[Hiring Secretaries over Time]{Hiring Secretaries over Time:\\ The Benefit of Concurrent Employment}

\author{Yann Disser}
\address[Y.~Disser]{TU Darmstadt, Graduate School CE, Germany}
\email[Y.~Disser]{disser@mathematik.tu-darmstadt.de}
\thanks{First author supported by the `Excellence Initiative' of the German Federal and State Governments and the Graduate School~CE at TU~Darmstadt.}

\author{John Fearnley}
\author{Martin Gairing}
\address[J.~Fearnley and M.~Gairing]{University of Liverpool, U.K.}
\email{john.fearnley@liverpool.ac.uk}\email{gairing@liverpool.ac.uk}

\author{Oliver G\"obel}
\address[O.~G\"obel]{RWTH Aachen University, Germany}
\email{goebel@cs.rwth-aachen.de}

\author{Max Klimm}
\address[M.~Klimm]{Humboldt University Berlin, Germany}
\email{max.klimm@hu-berlin.de}
\thanks{Research of the fourth author was carried out in the framework of {\sc Matheon} supported by Einstein Foundation Berlin.}

\author{Daniel Schmand}
\address[D.~Schmand]{RWTH Aachen University, Germany}
\email{daniel.schmand@oms.rwth-aachen.de}

\author{Alexander Skopalik}
\address[A.~Skopalik]{University of Paderborn, Germany}
\email{skopalik@mail.uni-paderborn.de}

\author{Andreas T\"onnis}
\address[A.~T\"onnis]{University of Bonn, Germany}
\email{atoennis@uni-bonn.de}

\begin{abstract}

We consider a stochastic online problem where $n$ applicants arrive over time, one per time step. 
Upon arrival of each applicant their cost per time step is revealed, and we have to fix the duration of employment, starting immediately. This decision is irrevocable, i.e., we can neither extend a contract nor dismiss a candidate once hired. In every time step, at least one candidate needs to be under contract, and our goal is to minimize the total hiring cost, which is the sum of the applicants' costs multiplied with their respective employment durations.
We provide a competitive online algorithm for the case that the applicants' costs are drawn independently from a known distribution. Specifically, the algorithm achieves a competitive ratio of $\ratiouniform$ for the case of uniform distributions. For this case, we give an analytical lower bound of~$2$ and a computational lower bound of $\ratiolower$.
We then adapt our algorithm to stay competitive even in settings with one or more of the following restrictions: $(i)$ at most two applicants can be hired concurrently; $(ii)$~the distribution of the applicants' costs is unknown; $(iii)$ the total number $n$ of time steps is unknown.
On the other hand, we show that concurrent employment is a necessary feature of competitive algorithms by proving that no algorithm has a competitive ratio better than $\Omega(\sqrt{n} / \log n)$ if concurrent employment is forbidden.

\end{abstract}

\keywords{
online algorithm, stopping problem, prophet inequality, Markov chain, secretary problem
}

\subjclass[2000]{Primary: 60G40, 62L15, 68W27; secondary: 68W40, 68Q87}


\maketitle

\markleft{DISSER ET AL.}

\newpage
\section{Introduction}
The theory of optimal stopping is concerned with problems of finding the best points in time to take a certain action based on a sequence of sequentially observed random variables. Problems of this kind are ubiquitous in the area of operations research, e.g., when hiring, selling, purchasing, or procurement decisions are made based on the partial observation of a sequence of offers with known statistical properties. In one of the most basic stopping problems, a gambler sequentially observes realizations $x_1 \sim X_1$, $x_2 \sim X_2$, \dots of a series of independent random variables. After being presented a realization $x_i \sim X_i$, the gambler has to decide immediately whether to keep the realization $x_i$ as a prize, or to continue gambling hoping for a better realization. For this setting, the famous prophet inequality due to Krengel, Sucheston, and Garling (cf.~\cite{krengel77,krengel78}) asserts that the best stopping rule of the gambler achieves in expectation at least half the optimal outcome of a prophet that foresees the realizations of all random variables and, thus, gains the expected maximal realization of all variables.

After the surprising result of Krengel et al., prophet-type inequalities were provided for several generalizations of their model, including settings where both the gambler and the prophet may stop multiple times (cf.~Kennedy~\cite{kennedy1987}, Alaei~\cite{alaei2014}), settings where both choose a set subject to matroid constraint (cf.~Kleinberg and Weinberg~\cite{kleinberg2012}), polymatroid constraints (cf.~D\"utting and Kleinberg~\cite{duetting2015}), and general constraints (cf.~Rubinstein~\cite{rubinstein2016}).

In light of this remarkable progress in establishing prophet-type inequalities for various stochastic environments, two remarks are in order. First, the known results consider maximization problems only. While obviously important as a model for situations where, e.g., items are to be sold and offers for the items arrive over time, they do not capture the ``dual'' problem where items need to be procured. In fact, minimization problems in stochastic environments turn out to be much harder, as any stopping rule does not allow for a constant factor approximation compared to the prophet's outcome, even in the most basic case of single stopping and i.i.d.\ distributions (cf.~Esfandiari et al.~\cite{esfandiari2015}). Second, the models above are inherently \emph{static} in the sense that the objective depends only on the set of chosen realizations \emph{at the end} of the sequence. This is a reasonable assumption when the underlying selling or purchase decisions have a long-term impact, and the time during which the sequence of random variables is observed can be neglected. On the other hand, they fail to capture the natural situation where realizations are observed for a long period of time and selling or procurement decisions are taking effect even while further offers are observed.

To illustrate the key differences between static and dynamic settings, consider a firm that in each time step needs to be able to perform a certain task in order to be operational. Traditionally, the firm could advertise a position and hire an applicant able to perform the task. Assuming that the firm strives to minimize labour cost, this leads to a (static) prophet-type problem where the costs of the applicants are drawn from a distribution and the firm strives to minimize the realized costs. Alternatively, online marketplaces like oDesk.com and Freelance.com provide the opportunity to hire applicants with a limited contract duration and to possibly hire another contractor when a new offer with lower cost arrives. The constant rise of the revenue generated by these platforms (reaching 1 billion USD in 2014) suggests that the latter approach has growing economic importance \cite{verroios2015}.

Hiring employees for a limited amount of time leads to a new kind of stopping problem where 
the ongoing observation period overlaps with the duration of contracts, and active contracts need to be maintained over time while receiving new offers.
To model these situations, we study a natural setting where at least one contract needs to be active at each point in time, while there is no additional benefit of having more than one active contract.\footnote{We discuss a relaxation of the strict covering constraint in Section~\ref{sec:conclusion}.}
This covering constraint renders it beneficial to accept good offers even when other contracts are still active, and a key challenge is to manage the tradeoff between accepting good offers while avoiding contract overlaps.



Specifically, we assume that in every time step $i \in [n]$ we observe the cost of the $i$-th applicant $x_i$, where the values $x_i$ are drawn i.i.d.\ from a common distribution $X$. 
%
In each time step~$i$, we have to decide on a number of time steps~$t_i$ for which to hire the $i$-th applicant. 
This duration is fixed irrevocably at time $i$ and extension or shortening of this duration is impossible later on. 
Hiring applicant $i$ with realized cost $x_i$ results in costs of $x_i t_i$. 
We are interested in minimizing the expected total hiring cost $\Exx{x_1 \sim X,\dots,x_n \sim X}{\sum_{i=1}^{n} {t_i x_i}}$, subject to the constraint that at least one applicant is under employment at all times.



\subsection{Results and Outline}

When the total number of time steps and the distribution are known, the dynamic stopping problem considered in this paper can be solved by a straightforward dynamic program (DP). The DP maintains a table of $n^2$ optimal threshold values depending on the number of remaining covered and uncovered time steps. Like other optimal solutions for similar stochastic optimization problems, the DP suffers from the fact that it relies on the exact knowledge of the distribution and the number of time steps, and does not allow to quantify the optimal competitive ratio.

The results we give in this paper address these shortcomings. We
give online algorithms with constant competitive ratios, and in doing so, we
prove that the optimal online algorithm also gives a constant competitive ratio
for any cost distribution that is known upfront. Our techniques are robust with respect to incomplete information and can be extended to the case where the cost
distribution and/or the total number of time steps is unknown, while still providing a constant competitive ratio.
Furthermore, our approach is conceptually simple, efficient and not tailored to specific distributions.

For ease of exposition, we present
our algorithm in incremental fashion starting with a simplified version for
uniform distributions in \S~\ref{sec:uniform_first}. The algorithm maintains
different threshold values over time and hires applicants when their realized
cost is below the threshold. By relating the execution of the algorithm with a
Markov chain and by analyzing its hitting time, we bound the competitive ratio
of the algorithm. In \S~\ref{sec:uniform2}, we refine the algorithm and its
analysis to show that it is $\ratiouniform$-competitive in the uniform case. We provide
an analytical lower bound of $2$ for the best possible competitive ratio via a relaxation to the Cayley-Moser-Problem (cf.~\citet{moser1956}), and we give a computational lower bound of $2.14$. 

Subsequently, in \S~\ref{arbitrary_distributions}, we generalize the algorithm to arbitrary distributions. Here, the main technical difficulty is to obtain a good estimation of the offline optimum. As we bound the offline optimum by a sum of conditional expectations given that the value lies in an intervals bounded by exponentially decreasing quantiles of the distribution, we are able to derive a competitive ratio of $\ratioarbitrary$.

In \S~\ref{sec:unknown_distribution}, we further generalize our techniques to
give a constant competitive algorithm for the case where the distribution is
unknown a priori. The main idea of the algorithm is to approximate the quantiles
of the distribution by sampling. 

Finally, in \S~\ref{sec:sequential_employment}, we show that our algorithms
remain competitive in the case that at most two applicants may be employed
concurrently. We also extend our results to the case where the
total number of applicants is unknown. 
In contrast to this, we show that the best possible online algorithm without concurrent employment has competitive ratio $\Theta(\notnicefrac{\sqrt{n}}{\log n})$, even for uniform distributions.

To improve readability, we relegate the formal analysis of the underlying Markov chains to \S~\ref{sec:markov}.

\subsection{Related Work}
The interest in optimal stopping rules for sequentially observed random experiments dates at least as far as to \citet{cayley1875} who asked for the optimal stopping rule when $n$ tickets are drawn without replacement from a known pool of $N$ tickets with different rewards. See also \citet{ferguson1989} for more historical notes on this problem. Cayley solved this problem by backwards induction, an approach later formalized by Bellman~\cite{bellman1954}. \citet{moser1956} studied Cayley's problem for the case that $N$ is large and the $N$ rewards are equal to the first $N$ natural numbers. In that case, the problem can be approximated by $n$ draws from the uniform distribution and Moser provided an approximation of the corresponding threshold values of the optimal stopping rule. For similar results for other distributions, see \citet{GilbertMosteller/66,guttman1960} and \citet{karlin1962}. In \S~\ref{sec:lower_bound}, we will use the asymptotic behavior of the threshold  due to \citet{GilbertMosteller/66} to obtain a lower bound for our problem.

Krengel, Sucheston, and Garling (cf.~\cite{krengel77,krengel78}) studied optimal stopping rules for arbitrary independent, non-negative, but not necessarily identical random variables. Their famous prophet inequality asserts that the expected reward of a gambler who follows the optimal stopping rule (that can still be found using backwards induction) is at least half the expected reward of a prophet who knows all realizations beforehand and will stop the sequence at the highest realization. 
Samuel-Cahn \cite{samuel-cahn1984} showed that the same guarantee can be obtained by a simple stopping rule that uses a single threshold rather than $n$ different thresholds as the solution of the dynamic program. Hill and Kertz~\cite{hill1992} surveyed some variations of the problem.

More recently, Alaei~\cite{alaei2014} considered the setting where both the prophet and the gambler stop $k \in  \mathbb{N}$ times and receive the sum of their realizations as rewards and gave an algorithm with competitive ratio $1 - \frac{1}{\sqrt{k+3}}$. For a more general setting in which the selection of both the gambler and the prophet is restricted by a matroid constraint, Kleinberg and Weinberg \cite{kleinberg2012} showed a tight competitive ratio of $1/2$. D\"utting and Kleinberg \cite{duetting2015} generalized this result further to polymatroid constraints. \citet{goebel2014} studied a prophet inequality setting where a solution is feasible if it forms an independent set in an underlying network. They gave an online algorithm that achieves a $\mathcal{O}(\rho^2 \log n)$-approximation where $\rho$ is a structural parameter of the network. Very recently, \citet{rubinstein2016} studied the problem for general downward-closed constraints. He gave a $\mathcal{O}(\log n \log r)$-approximation where $r$ is the cardinality of the largest feasible set and showed that no online algorithm can be better than a $\mathcal{O}(\log n / \log \log n)$-approximation. For a generalization towards non-linear valuations functions, see \citet{rubinstein2017}.

The recent interest in prophet inequalities is due to an interesting connection to mechanism design problems that was first made by Hajiaghayi et al.~\cite{hajiaghayi2007}. They remarked that threshold rules used to prove prophet inequalities correspond to thruthful online mechanisms with the same approximation guarantee as the prophet inequality. Chawla et al.~\cite{chawla2010} noted that posted pricing mechanisms for revenue maximization can be derived from prophet inequalities by using the framework of virtual values due to \citet{myerson1981}. As our algorithms operate on the basis of threshold values as well they can also be turned into truthful mechanisms. However, the exact properties of these mechanisms deserve further investigation.

\citet{esfandiari2015} considered the minimization version of the classical prophet inequality setting. They showed that even for i.i.d.\ random variables, no stopping rule can achieve a constant approximation to the cost of a prophet. This is in contrast to our results for the dynamic prophet inequality setting as we obtain a constant factor approximation even without knowledge of the distributions or $n$.

Further related are secretary problems (cf.~Ferguson~\cite{ferguson1989} for a review), and in particular secretary problems where the values are drawn from i.i.d.\ distributions as considered by Bearden~\cite{bearden}. The main difference to our model is that in secretary problems the objective is to maximize the probability of selecting the best outcome. Yet, our algorithm developed in \S~\ref{sec:unknown_distribution} for solving the case of unknown distributions is reminiscent of the optimal stopping rules for secretary problems as it also employs a sampling phase in which the distribution is learned before hiring an applicant. 
Very recently, Fiat et al.~\cite{fiat2015} studied a dynamic secretary problem where secretaries are hired over time. In contrast to our work, they consider a maximization problem, and the contract duration is fixed.

\section{Preliminaries}
\label{sec:model}

For a natural number $n \in \mathbb{N}$ let $[n] = \{1,\dots,n\}$. We consider a
sequence $x_1 \sim X$, $x_2 \sim X$, $\dots$, $x_n \sim X$ of~$n$ i.i.d.\ random variables drawn from a
probability distribution~$X$.
Throughout this work, we assume that $X$ is a continuous distribution with
cumulative distribution~$F$ and probability density function $f$. Moreover, we
assume that $X$ assigns positive probability to non-negative values only, i.e.,
$F(0) = 0$.
In every time step $i \in [n]$ the cost $x_i$ of the $i$-th
applicant is revealed and we must decide the number of time steps $t_i$ the applicant is hired. 
The duration of the employment $t_i$ is fixed irrevocably at time~$i$; no extension or shortening of this duration at any further point in time is possible. 
Hiring applicant~$i$ with realized cost $x_i$ for~$t_i$ time steps results in costs of $x_i t_i$. 
The objective is to minimize the expected total cost of hired applicants $\E[{\sum_{i \in [n]} t_i x_i}] := \Exx{x_1 \sim X, \dots, x_n \sim X}{\sum_{i \in [n]} t_i x_i}$ subject to the constraint that at least one applicant is employed at each point in time $i \in [n]$, i.e., $\max_{j \leq i} \{j+t_j\} \geq i$ for all $i \in [n]$. 

This is an online problem since, at time~$i$, we only know about the realizations $x_1,\dots,x_i$ up to time $i$ and have to base our decision about the hiring duration~$t_i$ of the $i$-th applicant only on this information and previous hiring decisions $t_1,\dots,t_{i-1}$. We are interested in obtaining online algorithms that perform well compared to an omniscient prophet. Let $\OPT_n$ be the expected cost of an \emph{optimal offline algorithm} (i.e., a prophet) knowing the $n$ realizations in advance and let $\ALG_n$ be the expected cost of a solution of an online algorithm. 
Then the competitive ratio of the online algorithm $\ALG_n$ is defined as $\lim\sup_{n \in \mathbb{N}} \lfrac{\Ex{\ALG_n}}{\Ex{\OPT_n}}$. 
We call an algorithm competitive if its competitive ratio is constant, and call it \emph{strictly competitive} if even $\sup_{n \in \mathbb{N}} \lfrac{\Ex{\ALG_n}}{\Ex{\OPT_n}}$ is constant.

We use well-known facts from higher order statistics of random variables to obtain the following.


\begin{restatable}{proposition}{proopt}
\label{pro:opt}
The expected total cost of an optimal offline algorithm is $\Ex{\OPT_n} = \sum_{i \in [n]} \int_0^{\infty} \bigl(1 - F(x)\bigr)^i \intd x$.	
\end{restatable}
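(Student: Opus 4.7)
The plan is to reduce the claim to the identity $\Ex{\OPT_n} = \sum_{i \in [n]} \Ex{\min_{j \leq i} x_j}$ and then apply the standard tail-integral representation of the expectation.

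\textbf{Step 1 (pathwise characterization of \OPT).} I would first argue that, on every realization $(x_1,\dots,x_n)$, the offline optimum equals $\sum_{i=1}^n \min_{j \leq i} x_j$. For the lower bound, note that at every time step $i \in [n]$ the covering constraint $\max_{j \leq i}\{j+t_j\} \geq i$ forces at least one applicant $j \leq i$ to be under contract; since costs are non-negative, the instantaneous cost paid at step $i$ is at least $\min_{j \leq i} x_j$, and summing over $i$ yields the bound. For the matching upper bound I would construct an explicit offline schedule: say $j$ is a \emph{running minimum} if $x_j < x_k$ for all $k < j$; for every running minimum $j$, set $t_j$ equal to the distance to the next running minimum (or to $n-j+1$ if none exists), and set $t_j = 0$ otherwise. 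This schedule is feasible and its total cost telescopes to $\sum_{i=1}^n \min_{j \leq i} x_j$.

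\textbf{Step 2 (expectation of the running minimum).} Taking expectations and applying linearity gives $\Ex{\OPT_n} = \sum_{i=1}^n \Ex{\min_{j \leq i} x_j}$. Since $x_j \geq 0$ almost surely, I would use the tail formula $\Ex{Y} = \int_0^\infty \Pr{Y > x}\,\intd x$ together with the independence of the $x_j$'s to write
\[
\Ex{\min_{j \leq i} x_j} \;=\; \int_0^\infty \Pr{\min_{j \leq i} x_j > x}\,\intd x \;=\; \int_0^\infty (1-F(x))^i\,\intd x,
\]
where the last equality uses $\Pr{x_j > x} = 1 - F(x)$. Summing over $i \in [n]$ yields the claimed formula.

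\textbf{Anticipated obstacle.} The only non-routine step is Step~1: making the pathwise identification of \OPT\ rigorous under the particular coverage formulation $\max_{j \leq i}\{j + t_j\} \geq i$, confirming that $t_j = 0$ is an allowed choice in the model (so that non-useful applicants can be skipped), and ensuring that the proposed schedule has no wasteful overlaps. Once this combinatorial characterization is in place, Step~2 is a direct application of a standard identity and requires no further work.
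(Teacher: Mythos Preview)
Your proposal is correct and follows essentially the same approach as the paper: identify $\OPT_n$ pathwise with $\sum_{i=1}^n \min_{j\le i} x_j$, then apply linearity and the tail-integral formula. In fact, your Step~1 is more careful than the paper's version, which simply asserts that the offline optimum employs the cheapest applicant seen so far and moves directly to the expectation calculation.
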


\begin{proof}{Proof.}
In every step, the optimal offline algorithm employs the applicant with the lowest cost that has arrived so far. 
We have
\begin{align*}
\Ex{\OPT_n}
&= \E \Bigl[\sum\nolimits_{i \in [n]} \min\nolimits_{j \in [i]}\{x_j\} \Bigr]\\[6pt]
&= \sum_{i \in [n]} \E \Bigl[\min\nolimits_{j \in [i]} \{x_j\}\Bigr]\\
&= \sum_{i \in [n]} \int_0^{\infty} \P \Bigl[\min\nolimits_{j \in [i]} \{x_j\} > x\Bigr] \,\intd x\\
 &=  \sum_{i \in [n]} \int_0^{\infty} \bigl(1-F(x)\bigr)^{\!i} \,\intd x,
\end{align*}
as claimed. \hfill \Halmos
\end{proof}

\section{An Optimal Online Algorithm}
\label{sec:dp}

We begin by describing an optimal online algorithm that uses dynamic
programming.
Let $C(i, j)$ denote the expected overall cost if there are $i$
time steps remaining, and if the next $j$ time steps are already covered by an existing
contract. As a boundary condition, we have that $C(i, i) = 0$ for all $i$, since
in this case no further applicants need to be hired.

Suppose that $C(i', j')$ has already been computed for all $i' < i$ and all
$j' \le i'$. First we describe how to compute $C(i, 0)$. 
Suppose that we draw an 
applicant with cost $x$. Since there are no existing contracts, we
must hire this applicant for at least one time step, and we will obviously hire this
applicant for at most $i$ time steps. If we hire the applicant for $r$ time steps, our
overall cost will be $r x + C(i-1, r-1)$. Thus, the optimal cost for an
applicant costing $x$ can be written as
\begin{align*}
\min_{1 \le r \le i}\left\{ rx + C(i-1, r-1) \right\}.
\end{align*}
Therefore, we have
\begin{align}
\label{eqn:dynamic0}
C(i, 0) = \int_{0}^{\infty} \min_{1 \le r \le i}\left\{ r  x + C(i-1,
r-1) \right\} \, f(x) \, \intd x.
\end{align}
Now we suppose that $C(i, j)$ has been computed for~$j<i$ and describe how to compute
$C(i, j+1)$. 
The analysis is similar as before, but in this case we have the additional option to reject an applicant and wait one more time step. 
The cost of waiting one step
is given by $C(i-1, j)$, so we get the following expression
\begin{align}
\label{eqn:dynamicj}
C(i, j+1) = \int_{0}^{\infty} \min\left\{ C(i-1, j), \min_{j+1 < r \le
i}\left\{ r x + C(i-1, r-1) \right\}\right\} \, f(x) \, \intd x.
\end{align}
If $C(i, j)$ has been computed for all $i \le n$ and all $j \le i$, then there
is a straightforward online algorithm that achieves expected cost $C(n, 0)$.
This algorithm simply waits for the cost $x$ of each applicant to be revealed,
and then chooses the action that minimizes the expression in the above
equations. 

\subsection{Analysis}

The computational efficiency of this algorithm depends on the difficulty of evaluating the
integrals in Equations~\eqref{eqn:dynamic0} and~\eqref{eqn:dynamicj}. 
For the simple case where the cost distributions are uniform, the right and side of both equations boild down to finding
the piecewise minimum over at most $n$ linear functions, which can easily be
computed. For other distributions, the algorithm may be slower. It is worth
noting that the algorithm cannot be applied in the case where the distribution
is unknown. For the case of a known distributions, we conclude the following.


\begin{theorem}
The dynamic program given by eqs.~\eqref{eqn:dynamic0} and~\eqref{eqn:dynamicj} yields an optimal online algorithm. 
\end{theorem}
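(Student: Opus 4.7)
The plan is to prove optimality by backward induction on the number of remaining time steps, using the Bellman optimality principle. The key observation is that, because the costs $x_1,\dots,x_n$ are i.i.d.\ and the objective $\sum_i t_i x_i$ is additive over applicants, the only information about the past that is relevant for future decisions is (a) how many time steps remain, and (b) how many of those remaining steps are already covered by a currently active contract. All other information—individual past realizations, who was hired, when—is irrelevant once the covered suffix is known, because the already-incurred cost is sunk and the remaining draws are independent of everything observed so far. This reduces the decision problem to one on the state space $\{(i,j):0\le j\le i\le n\}$ with $i$ strictly decreasing along any trajectory, so backward induction is well-defined.

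First I would formalize the claim: let $V(i,j)$ denote the infimum, over all online policies, of the expected future cost when $i$ time steps remain and the next $j$ of them are already covered. The goal is to show $V(i,j)=C(i,j)$ for all $0\le j\le i\le n$, whence the original optimum is $V(n,0)=C(n,0)$, which is exactly the cost achieved by the greedy policy derived from the DP.

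The induction is on $i$. The base case $i=0$ is vacuous, and $V(i,i)=0=C(i,i)$ holds because the covering constraint is already satisfied for the rest of the horizon, so hiring nobody is feasible and yields zero additional cost. For the inductive step, assume $V(i',j')=C(i',j')$ for all $i'<i$. When the state is $(i,j+1)$ with $j+1\le i$, at the next step the realization $x\sim X$ arrives; by the covering constraint the policy either (i) hires nobody and transitions to state $(i-1,j)$, contributing future cost $V(i-1,j)$, or (ii) hires the applicant for some $r$ steps with $j+1<r\le i$, contributing $rx+V(i-1,r-1)$ (choices $r\le j+1$ are dominated by option (i) for any $x\ge0$). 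Taking the pointwise minimum over these options and integrating against $f$ gives exactly the right-hand side of \eqref{eqn:dynamicj}, and by the inductive hypothesis this equals the right-hand side of \eqref{eqn:dynamicj}. A similar case analysis with $j=0$ (where option (i) is infeasible because the current step must be covered) yields \eqref{eqn:dynamic0}. Conversely, since the policy defined by the DP realizes the minimum pointwise in $x$ at each step, it attains $C(i,j)$, so $V(i,j)\le C(i,j)$ as well.

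The only step that requires a bit of care is the sufficient-statistic argument: one must check that restricting attention to policies depending only on $(i,j)$ and the current realization loses no generality. This is where one invokes the tower property together with the i.i.d.\ assumption: conditional on the current state $(i,j)$, the distribution of the future sequence of realizations is independent of the history, and the total future cost incurred by any policy decomposes as the cost paid on the current applicant plus the cost-to-go from the successor state. Standard measurable-selection arguments then justify exchanging infimum and expectation in the Bellman equation. I expect this measurability/sufficient-statistic justification to be the only non-routine part; everything else is a direct application of dynamic programming to a finite-horizon Markov decision process with continuous action (choice of $r$) and continuous observation (the realized cost $x$).
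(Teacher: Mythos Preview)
Your proposal is correct. The paper itself states this theorem without proof, treating it as a direct consequence of Bellman's principle of optimality (which the paper references via~\cite{bellman1954} in the related-work discussion). Your backward-induction argument with the sufficient-statistic justification is exactly the standard way to make this rigorous, and it matches the intended reasoning; there is nothing to compare against in the paper beyond the implicit appeal to dynamic programming.
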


Before we move on, we describe some shortcomings of this algorithm that we seek
to address in the remainder of this paper. 
The first issue of the algorithm is that, although it does provide an optimal competitive ratio, it is unclear how to analyze the algorithm, and in particular we do not know what competitive ratio
the algorithm guarantees. Secondly, the algorithm is very complicated to describe as it uses at least $n^2$ different threshold values to decide the hiring duration of an applicant, and these threshold values are specifically tailored to the distribution in question.
In the subsequent sections, we show that there exist algorithms with a constant competitive ratio, and in doing so we prove that the competitive ratio of the optimal online algorithm is also constant.
Thirdly, the optimal online algorithm requires both the cost distribution and the total number of time periods to be known ahead of time.
In contrast, in the following we develop an online algorithm with constant competitive ratio that still works even if neither information is known.

\section{Uniformly Distributed Costs}
\label{sec:uniform}

In this section, we give two algorithms with constant
competitive ratios in the case where applicants' costs are distributed uniformly. 
By shifting/rescaling we may assume without loss of generality that $X=\U[0,1]$, i.e., that the costs are distributed uniformly in the unit interval. Using Proposition~\ref{pro:opt}, we obtain the following expression for the expected cost of the offline optimum.
\begin{lemma}
$\E \bigl[\OPT_{n}\bigr] = \Hn_{n+1}-1$ for all $n \in \mathbb{N}$, where $\Hn_{n}$ is
the $n$-th harmonic number.
\label{lem:opt_uniform}
\end{lemma}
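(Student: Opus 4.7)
The plan is to apply Proposition~\ref{pro:opt} directly, since the offline optimum has already been expressed in terms of the survival function of $X$. For $X = \U[0,1]$ the cumulative distribution function is $F(x) = x$ on $[0,1]$ and $F(x) = 1$ for $x > 1$, so the integrand $(1 - F(x))^i$ vanishes outside the unit interval and the infinite integral collapses to one over $[0,1]$.

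Concretely, I would substitute $F(x) = x$ into the formula from Proposition~\ref{pro:opt} and evaluate
\begin{equation*}
\int_0^{\infty} \bigl(1 - F(x)\bigr)^i \intd x = \int_0^1 (1-x)^i \intd x = \frac{1}{i+1},
\end{equation*}
which is a one-line calculation via the substitution $u = 1 - x$.

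Summing over $i \in [n]$ then yields
\begin{equation*}
\Ex{\OPT_n} = \sum_{i=1}^{n} \frac{1}{i+1} = \sum_{k=2}^{n+1} \frac{1}{k} = \Hn_{n+1} - 1,
\end{equation*}
where the last equality uses the definition $\Hn_{n+1} = \sum_{k=1}^{n+1} \frac{1}{k}$.

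There is no real obstacle here: the work was done in Proposition~\ref{pro:opt}, and the lemma is just the specialization to the uniform distribution, amounting to an elementary integral and a reindexed harmonic sum.
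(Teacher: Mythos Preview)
Your proof is correct and takes essentially the same approach as the paper: apply Proposition~\ref{pro:opt}, evaluate $\int_0^1 (1-x)^i \intd x = \frac{1}{i+1}$, and sum to obtain $\Hn_{n+1}-1$. The paper's version is terser but identical in substance.
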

\begin{proof}{Proof.}
By Proposition~\ref{pro:opt}, 
$\E \bigl[\OPT_{n}\bigr] =  \sum_{i\in[n]}\int_{0}^{1}(1-x)^{i}\intd x 
  =  \sum_{i\in[n]}\frac{1}{i+1}=\Hn_{n+1}-1.$ \hfill \Halmos 
\end{proof}
\begin{algorithm}[tb]
\caption{A $\ratiouniformbad$-competitive algorithm for uniformly distributed costs.}
\label{alg:multi_uniform}
 $\ct \set 1$ \tcp*[r]{threshold cost}
 $t \set 1$ \tcp*[r]{remaining time with current threshold}
\For{$i \set 1, \dots, n$  }{ 
  $t \set t-1$\;
  \If{$x_i \le \ct$}{
     hire applicant $i$ for $\notnicefrac{4}{\ct}$ time steps\;
     \If{$i + \notnicefrac{4}{\ct} > n$}{
       {\bf stop}\;
     }
     $\ct \set \notnicefrac{\ct}{2}$; $t \set \notnicefrac{1}{\ct}$ 
  }

  \ElseIf{$t=0$}{
     $\ct \set 2\ct$; $t \set \notnicefrac{1}{\ct}$\;
  }
}
\end{algorithm}

\subsection{A First Competitive Algorithm}
\label{sec:uniform_first}
We start with our first online algorithm for uniform distributions (cf.~Algorithm~\ref{alg:multi_uniform}). 
The main idea of the algorithm is that whenever
we hire an applicant of cost $x$, we afterwards seek an applicant of cost $\notnicefrac{x}{2}$. The expected time until such an applicant arrives is $\notnicefrac{2}{x}$. 

If we set our hiring time equal to this expectation, we would
leave a considerable probability that we do not encounter any cheaper
applicants before the hiring time runs out. Instead, we hire the applicant
for $\notnicefrac{4}{x}$ steps and iteratively 
relax our hiring threshold after a certain time. 

More precisely, assume $x=\notnicefrac{1}{2^{j}}$ for some integer $j$. We then hire the
applicant for time
\begin{align}
\label{eq:geo_sum}
\frac{4}{x} > \frac{4}{x} - 1 = \frac{2}{x}+\frac{1}{x}+\frac{1}{2x}+\frac{1}{4x}+\dots+1.	
\end{align}
This way, if we do not find an applicant of cost at most $\notnicefrac{x}{2}$ during the next $\notnicefrac{2}{x}$
time steps,
we continue seeking for an applicant with cost $x$ for $\notnicefrac{1}{x}$ time steps, and so on. 
The geometric sum~\eqref{eq:geo_sum} just leaves enough time until we  eventually seek for an applicant with cost at most 1, who is surely found. 

To accommodate the fact that the costs of applicants are not powers of 2, in general, we maintain a 
threshold cost $\ct$ that is a power of 2 and reduce the threshold, whenever
a new applicant is hired, see Algorithm~\ref{alg:multi_uniform} for a formal description.
Finally, once an applicant is employed long
enough to cover all remaining time steps, we stop. Importantly, this
allows us to bound the lowest possible value of $\ct$ to be $2^{-\lceil\log n\rceil+2}$.\footnote{Here and throughout, we denote the logarithm of $n$ to base $2$ by $\log(n)$ and the natural logarithm of $n$ with $\ln(n)$.}
If an applicant is hired below this threshold, the hiring time is $\notnicefrac{4}{\ct}\geq n$.

In other words, during the course of the algorithm the threshold cost
$\ct$ can only take values of the form $2^{-j}$ for $j\in\{0,\dots,k-1\}$, where $k=\lceil\log(n)\rceil-1$.
This allows us to describe the evolution of $\ct$ with a Markov chain $M$
with $k+1$ states as follows. State $k$ is the absorbing
state that corresponds to the 
event that we \emph{succeeded} in hiring an applicant at cost at most the
threshold value 
$2^{-(k-1)}=2^{-\lceil\log n\rceil+2}$.
Each other state $j\in\{0,\dots,k-1\}$ corresponds to the event that the threshold value reaches $\ct=\ct_{j}:=2^{-j}$, see Figure~\ref{fig:markov_a} (a). 
Each transition of the Markov chain from a state $j$ to a state $j-1$ corresponds to the failure of finding an applicant below the threshold $\tau_j = 2^{-j}$ for $1/\tau_j = 2^j$ time steps, resulting in a doubling of the threshold cost. 
Each transition of the Markov chain from a state $j$ to a state $j+1$
corresponds to the hiring of an applicant resulting in the reduction of the threshold cost. We can therefore use the expected total number of state transitions of the Markov chain when starting at state $0$ to bound
the number of hired applicants overall.

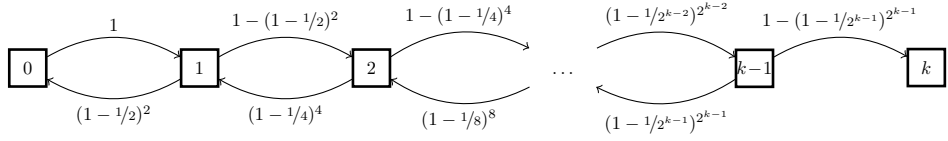
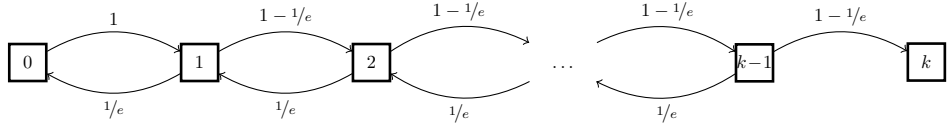
\begin{figure}[tb]
    \centering
\begin{subfigure}{\textwidth}
\centering
\begin{tikzpicture}[scale=0.3, every node/.style={scale=0.7}]
\matrix (m) [matrix of nodes, row sep=3em, column sep=5em,
	nodes={draw, rectangle, line width=1, text centered, minimum size=2em}]
    {\\  $0$ & $1$ & $2$  & |[draw=none]|$\phantom{1k}\dots\phantom{1k}$ & $\!\!k\!-\!1\!\!$  & $k$ \\ };

\draw[->] (m-2-1) edge [bend left] node [midway,above] {$1$} (m-2-2);
\draw[->] (m-2-2) edge [bend left] node [midway,below] {$(1-\nicefrac{1}{2})^2$} (m-2-1);

\draw[->] (m-2-2) edge [bend left] node [midway,above] {$1-(1-\nicefrac{1}{2})^2$} (m-2-3);
\draw[->] (m-2-3) edge [bend left] node [midway,below] {$(1-\nicefrac{1}{4})^4$} (m-2-2);

\draw[->] (m-2-3) edge [bend left] node [midway,above] {$1-(1-\nicefrac{1}{4})^4$} (m-2-4);
\draw[->] (m-2-4) edge [bend left] node [midway,below] {$(1-\nicefrac{1}{8})^8$} (m-2-3);

\draw[->] (m-2-4) edge [bend left] node [midway,above] {$(1-\nicefrac{1}{2^{k-2}})^{2^{k-2}}$} (m-2-5);
\draw[->] (m-2-5) edge [bend left] node [midway,below] {$(1-\nicefrac{1}{2^{k-1}})^{2^{k-1}}$} (m-2-4);

\draw[->] (m-2-5) edge [bend left] node [midway,above] {$1-(1-\nicefrac{1}{2^{k-1}})^{2^{k-1}}$} (m-2-6);


\foreach \x [count=\y] in {2,3}{
}

\end{tikzpicture}
\subcaption{Original Markov chain $M$.}
\end{subfigure}
~\\
\begin{subfigure}{\textwidth}
\centering
\begin{tikzpicture}[scale=0.3, every node/.style={scale=0.7}]
\matrix (m) [matrix of nodes, row sep=3em, column sep=5em,
	nodes={draw, rectangle, line width=1, text centered, minimum size=2em}]
    {\\  $0$ & $1$ & $2$  & |[draw=none]|$\phantom{1k}\dots\phantom{1k}$ & $\!\!k\!-\!1\!\!$  & $k$ \\ };

\draw[->] (m-2-1) edge [bend left] node [midway,above] {$1$} (m-2-2);
\draw[->] (m-2-2) edge [bend left] node [midway,below] {$\nicefrac{1}{\e}$} (m-2-1);

\draw[->] (m-2-2) edge [bend left] node [midway,above] {$1-\nicefrac{1}{\e}$} (m-2-3);
\draw[->] (m-2-3) edge [bend left] node [midway,below] {$\nicefrac{1}{\e}$} (m-2-2);

\draw[->] (m-2-3) edge [bend left] node [midway,above] {$1-\nicefrac{1}{\e}$} (m-2-4);
\draw[->] (m-2-4) edge [bend left] node [midway,below] {$\nicefrac{1}{\e}$} (m-2-3);

\draw[->] (m-2-4) edge [bend left] node [midway,above] {$1-\nicefrac{1}{\e}$} (m-2-5);
\draw[->] (m-2-5) edge [bend left] node [midway,below] {$\nicefrac{1}{\e}$} (m-2-4);

\draw[->] (m-2-5) edge [bend left] node [midway,above] {$1-\nicefrac{1}{\e}$} (m-2-6);


\foreach \x [count=\y] in {2,3}{
}

\end{tikzpicture}
\subcaption{Markov chain $\hat{M}(p,k)$ with homogeneous transition probabilities $p = 1-\nicefrac{1}{\e}$.}
\end{subfigure}

\caption{Markov chains $M$ modeling the expected number of hired applicants of
Algorithm~\ref{alg:multi_uniform}. Nodes correspond to states. State~$k = \lceil \log n \rceil-1$ is absorbing.\label{fig:markov_a}}
\end{figure}

Let $p_{j}$ denote the transition probability from state $j$ to
state $j+1$; i.e., when in state~$j$, the Markov chain transitions to state $j+1$ with probability $p_j$ and to state $j-1$ with probability $1-p_{j}$.
%
The probability that we fail to find an applicant with cost at most
$\tau$ during $\notnicefrac{1}{\tau}$ time steps is bounded
by
\[
1-p_{j} = (1-\tau)^{\notnicefrac{1}{\tau}}\leq \frac{1}{\e},
\]
i.e., $p_{j}\geq1-\nicefrac{1}{\e}$.
We set $p= 1-\nicefrac{1}{\e}$ and consider the Markov chain $\hat{M}(p,k)$ with homogeneous transition probability $p$ shown in Figure~\ref{fig:markov_a} (b). As we will show in the following lemma, the total number of state transitions to reach state $k$ in Markov chain $\hat{M}(p,k)$ provides an upper bound on the total number of state transitions to reach state $k$ in Markov chain $M$. The analysis of $\hat{M}(p,k)$ then yields the following result.

\begin{restatable}{lemma}{lemuniformhittingtime}
Starting in state $0$ of Markov chain $M$ with $k =\lceil\log(n) \rceil -1$, the expected number of state transitions is at most $\frac{\e k}{\e-2}$.\label{lem:uniform-hitting-time}
\end{restatable}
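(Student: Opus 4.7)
The plan is to show that, among all birth--death chains on $\{0,1,\ldots,k\}$ with state $0$ always moving to $1$, state $k$ absorbing, and upward transition probabilities $p_j\geq p:=1-\nicefrac{1}{\e}$ for $j\in\{1,\ldots,k-1\}$, the expected hitting time from state $0$ to state $k$ is maximal for the homogeneous chain $\hat{M}(p,k)$; after this monotonicity reduction the bound follows from an explicit computation in $\hat{M}(p,k)$.

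For the first step, let $N$ denote the number of state transitions before absorption and set $E_j:=\Ex{N\mid\text{start in state }j}$. The identities $E_0=1+E_1$ and $E_j=1+p_jE_{j+1}+(1-p_j)E_{j-1}$ for $j\in\{1,\ldots,k-1\}$ translate, for the differences $D_j:=E_{j-1}-E_j$, into the recurrence
\[D_{j+1}=\frac{1+(1-p_j)D_j}{p_j},\qquad D_1=1.\]
Positivity of $D_j$ follows by a straightforward induction. Moreover $\partial D_{j+1}/\partial p_j=-(1+D_j)/p_j^2<0$ and $\partial D_{j+1}/\partial D_j=(1-p_j)/p_j\geq 0$, so by induction each $D_j$ is non-increasing in every $p_l$ with $l<j$. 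Consequently $E_0=D_1+\cdots+D_k$ is maximized, over the feasible region $\{p_l\geq p\}$, by the uniform choice $p_l=p$, reducing the problem to computing the expected hitting time in $\hat{M}(p,k)$.

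For the second step, the recurrence in $\hat{M}(p,k)$ becomes $D_{j+1}=\nicefrac{1}{p}+(q/p)D_j$ with $q=\nicefrac{1}{\e}$, whose general solution is $D_j=A\,(q/p)^{j-1}+\nicefrac{1}{(p-q)}$. Using $p-q=(\e-2)/\e$ together with $D_1=1$, the particular value equals $\nicefrac{\e}{(\e-2)}$ and $A=-\nicefrac{2}{(\e-2)}<0$. Since $q/p=\nicefrac{1}{(\e-1)}\in(0,1)$, we obtain $D_j<\nicefrac{\e}{(\e-2)}$ for every $j$, and summing yields $E_0<\nicefrac{\e k}{(\e-2)}$, as claimed.

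The main obstacle is the monotonicity argument: the differences $D_j$ depend on the transition probabilities both directly (through $p_j$) and indirectly (through $D_j$), so the induction has to handle both channels of dependence. A stochastic coupling between $M$ and $\hat{M}(p,k)$ would be an alternative, but the direct recurrence-based approach has the advantage that the explicit hitting-time formula in the homogeneous chain drops out of the same framework at essentially no extra cost.
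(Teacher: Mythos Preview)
Your proof is correct and follows the same two-step strategy as the paper---reduce to the homogeneous chain $\hat{M}(p,k)$ by monotonicity, then compute---but the implementation of each step differs. For the reduction, the paper argues by replacing the transition probabilities one state at a time and appealing (somewhat informally) to stochastic domination, whereas you work directly with the hitting-time recurrence and show analytically that each difference $D_j$ is non-increasing in every $p_l$. For the computation, the paper bounds the expected number of visits to each state separately (forward-referencing Lemma~\ref{lem:markov_a_visit}) and sums, while you solve the difference recurrence in closed form and bound each $D_j$ by $\nicefrac{\e}{(\e-2)}$. Your route is more self-contained and makes the monotonicity step fully rigorous; the paper's version has the advantage that the per-state visit bound is reused later in the analysis of Algorithm~\ref{alg:unknown_dist}.
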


\begin{proof}{Proof.}
Let $k = \lceil \log n \rceil -1$ and $p = 1 - \nicefrac{1}{\e}$, and consider the Markov chain $\hat{M}(p,k)$ shown in Figure~\ref{fig:markov_a} (b). We first claim that the expected number of state transitions when starting in state~$0$ in Markov chain $M$ is bounded from above by that in Markov chain $\hat{M}(p,k)$. To see this, consider an arbitrary state~$j$ and consider the stochastic process that operates as $M$ with the exception that the first time state $j$ is visited, transition probabilities are as in $\hat{M}(p,k)$. Since $\hat{M}(p,k)$ has a higher probability to transition to a state with low index and the only absorbing state is $k$, this does not decrease the expected number of state transitions to state $j$ in $M$. 
Iterating this argument, we derive that also the stochastic process where state~$j$ \emph{always} transitions as in $\hat{M}(p,k)$ has a higher number of state transitions to state $j$. Iterating this argument over all states proves that the expected total number of state transitions in $M$ is upper bounded by the expected total number of state transitions in $\hat{M}(p,k)$.

In Lemma~\ref{lem:markov_a_visit} in \S~\ref{sec:markov_a}, we show that the expected number of visits for each state in $\hat{M}(p,k)$ is upper bounded by $\frac{1}{2p-1} = \frac{\e}{\e -2}$. Since we start in state $0$ and end after the first visit in state $k$, we conclude that the total number of state transitions of Markov chain $\hat{M}(p,k)$ is bounded by
\begin{align*}
\Bigl(\frac{\e}{\e-2}-1\Bigr) + \sum_{i=1}^{k-1}\frac{\e}{\e-2} + 1 = \frac{\e k}{\e-2}.
\end{align*}
This gives the claimed result. \hfill\Halmos
\end{proof}

We proceed to use Lemmas~\ref{lem:opt_uniform} and~\ref{lem:uniform-hitting-time} to obtain a first constant competitive algorithm for uniform costs.


\begin{restatable}{theorem}{thmuniformfirst}
Algorithm~\ref{alg:multi_uniform} is strictly $\ratiouniformbad$-competitive for uniform distributions.
\end{restatable}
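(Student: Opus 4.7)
My plan is to bound the expected cost of the algorithm and then compare it with $\E[\OPT_n]$. First, I would bound the cost per individual hire. Whenever Algorithm~\ref{alg:multi_uniform} hires applicant~$i$ at threshold~$\tau$, the \emph{if}-condition guarantees $x_i \le \tau$, so the associated cost is at most $x_i\cdot(4/\tau)\le 4$. Since, conditional on the hire occurring, $x_i$ is uniformly distributed on $[0,\tau]$, the expected cost of such a hire is in fact exactly $(\tau/2)\cdot(4/\tau)=2$; by linearity, the expected total cost of the algorithm equals $2$ times the expected number of hires.

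Next, I would translate the number of hires into the Markov-chain language developed just before the statement. The algorithm's threshold $\tau$ performs a walk in $M$ (Figure~\ref{fig:markov_a}(a)): an up-transition from~$j$ to~$j+1$ corresponds to a successful hire at threshold~$\tau_j$, and a down-transition to~$j-1$ corresponds to an unsuccessful search period of length $1/\tau_j$ during which nothing is paid. The algorithm terminates no later than the first hire at threshold $\tau_{k-1}=2^{-(k-1)}$ with $k=\lceil\log n\rceil-1$, because the corresponding contract of length $4/\tau_{k-1}\ge n$ then covers all remaining time steps. Consequently, the total number of hires is at most the total number of state transitions of the walk from state~$0$ until state~$k$ is first reached, which by Lemma~\ref{lem:uniform-hitting-time} has expectation at most $\e k/(\e-2)$. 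Combining the two estimates gives
\[
\E[\ALG_n] \;\le\; \frac{2\e k}{\e-2} \;=\; \frac{2\e(\lceil\log n\rceil-1)}{\e-2}.
\]

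Finally, I would divide by $\E[\OPT_n]=\Hn_{n+1}-1$ from Lemma~\ref{lem:opt_uniform} and show that the supremum of the ratio over $n\in\mathbb{N}$ is at most $\ratiouniformbad$. Asymptotically, $(\lceil\log n\rceil-1)/(\Hn_{n+1}-1)\to 1/\ln 2$, so the limiting ratio is a bounded constant; the main obstacle lies in this last step, since for the \emph{strict} bound one must also control the ratio for small~$n$, where the ceiling in $\lceil\log n\rceil$ and the fact that $\Hn_{n+1}-1$ is noticeably smaller than $\ln n$ inflate the ratio. This part is handled by plugging in tight lower bounds for $\Hn_{n+1}-1$ and verifying the small-$n$ cases separately (e.g.\ for $n=1$ one has $\E[\ALG_1]=2$ and $\E[\OPT_1]=1/2$, a ratio of~$4$), after which taking the supremum yields the strict competitive ratio of $\ratiouniformbad$.
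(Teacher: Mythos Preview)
Your bound on the expected cost per hire and your identification of hires with transitions of the Markov chain are both correct, but the inequality you use at the crucial step is too loose to yield the claimed constant $\ratiouniformbad$. You bound the number of hires by the \emph{total} number of state transitions, obtaining $\E[\text{hires}]\le \e k/(\e-2)$. However, a hire occurs precisely on an \emph{up}-transition $j\to j+1$, never on a down-transition. Since the walk starts at~$0$ and is absorbed at~$k$, the number of up-transitions $U$ and down-transitions $D$ satisfy $U-D=k$ and $U+D=T$, hence $U=(T+k)/2$. The paper exploits this and obtains
\[
\E[\text{hires}]\;\le\;\frac{1}{2}\Bigl(\frac{\e k}{\e-2}+k\Bigr)\;=\;\frac{k}{2}\Bigl(\frac{\e}{\e-2}+1\Bigr),
\]
which is roughly $2.39\,k$ rather than your $3.78\,k$. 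With your bound the asymptotic competitive ratio becomes
\[
\frac{2\e}{(\e-2)\ln 2}\;\approx\;10.92,
\]
which already exceeds $\ratiouniformbad=8.122$, so no amount of small-$n$ checking can close the argument.

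The remedy is exactly the refinement above: count only up-transitions. After that, the paper bounds $\frac{\ln n}{\Hn_{n+1}-1}$ uniformly in~$n$ by an explicit constant (Lemma~\ref{lem:ratiouniformbad}) rather than treating small~$n$ case by case; this is what produces the strict bound $\ratiouniformbad$ for all $n$.
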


\begin{proof}{Proof.}
Since $\ct$ decreases whenever an applicant is hired, we can bound
the number of hired applicants by the number of state transitions
from a state $j$ to state $j+1$ of the Markov chain. The algorithm
terminates at the latest when state $k=\lceil\log(n)\rceil-1$ is reached. If it ever
reaches that point, it has hired at least $k$ applicants and
every further hiring is mirrored by a state transition that decreases the current state. By using Lemma~\ref{lem:uniform-hitting-time} and only counting the transitions that increase the state index, we can
bound the expected number of hired applicants by
\[
\frac{\frac{\e k}{\e-2}-k}{2}+k=\left(\frac{\e}{\e-2}+1\right)\frac{k}{2}\leq\left(\frac{\e}{\e-2}+1\right)\frac{\log n}{2}.
\]
Whenever we hire an applicant below threshold $\tau$ the cost of the applicant is uniform in $[0,\tau]$, so the expected cost
is $\notnicefrac{\tau}{2}$. Since the hiring period is $\notnicefrac{4}{\tau}$ we get that  each hired applicant incurs an expected total cost of $2$. The threshold $\tau$ for the next candidate is independent of the exact cost of the last hire. Therefore we can combine the expected cost per candidate with Lemma~\ref{lem:opt_uniform} and we obtain 
\begin{align*}
\frac{\Exp{\ALG_n}}{\Exp{\OPT_n}} &\leq \frac{\ln n}{\Hn_{n+1}-1} \cdot \frac{1}{\ln 2}\left(\frac{\e}{\e-2}+1\right).
\intertext{Using Lemma~\ref{lem:ratiouniformbad} proven below where $\gamma$ is the Euler-Mascheroni constant this implies}
\frac{\Exp{\ALG_n}}{\Exp{\OPT_n}}  &\leq \left(1+\frac{20}{29}\left(\frac{5}{6}-\gamma\right)\right) \cdot \frac{1}{\ln 2}\left(\frac{\e}{\e-2}+1\right) < \ratiouniformbad,
\end{align*}
as claimed.
\hfill \Halmos
\end{proof}

\begin{lemma}
For any $n \in \mathbb{N}, \frac{\ln n}{\Hn_{n+1}-1} \leq 1+\frac{20}{29}\left(\frac{5}{6}-\gamma\right)$, where $\gamma$ is the Euler-Mascheroni constant. \label{lem:ratiouniformbad}
\end{lemma}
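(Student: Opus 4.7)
The plan is to use the classical lower bound $\mathcal{H}_n \geq \ln n + \gamma$, valid for every $n \geq 1$ because $\mathcal{H}_n - \ln n$ is strictly decreasing to $\gamma$. Rewriting with $\mathcal{H}_n = \mathcal{H}_{n+1} - \tfrac{1}{n+1}$ gives $\ln n \leq (\mathcal{H}_{n+1}-1) + (1 - \gamma - \tfrac{1}{n+1})$, i.e.,
\[
\frac{\ln n}{\mathcal{H}_{n+1} - 1} \;\leq\; 1 + h(n), \qquad h(n) := \frac{1 - \gamma - \tfrac{1}{n+1}}{\mathcal{H}_{n+1} - 1}.
\]
Because $\mathcal{H}_6 - 1 = 29/20$ and $1 - \gamma - 1/6 = 5/6 - \gamma$, the target right-hand side of the lemma equals exactly $1 + h(5)$. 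So the task reduces to showing $h(n) \leq h(5)$ for every $n \in \mathbb{N}$.

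I would then expand $h(n+1) - h(n)$ over a common denominator. Using $\mathcal{H}_{n+2} = \mathcal{H}_{n+1} + 1/(n+2)$, a short algebraic manipulation shows that the numerator is a positive multiple of $\mathcal{H}_{n+1} - (n+1)(1-\gamma)$. Hence $h(n+1) - h(n)$ shares its sign with $\mathcal{H}_{n+1}/(n+1) - (1-\gamma)$. The sequence $\mathcal{H}_m/m$ is strictly decreasing in $m$ (one-line check: $\mathcal{H}_{m+1}/(m+1) - \mathcal{H}_m/m$ equals a positive multiple of $m/(m+1) - \mathcal{H}_m$, which is negative since $\mathcal{H}_m \geq 1$). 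So $h(n+1) - h(n)$ changes sign at most once in $n$.

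To pinpoint the sign change, it suffices to verify $\mathcal{H}_5/5 > 1-\gamma > \mathcal{H}_6/6$, i.e., $137/300 > 1-\gamma > 49/120$, which follows from the classical numerical bounds $0.577 < \gamma < 0.578$. Therefore $h$ is strictly increasing for $n \leq 4$ and strictly decreasing for $n \geq 5$, and thus attains its global maximum over $\mathbb{N}$ at $n = 5$. This yields $h(n) \leq h(5) = \frac{20}{29}(5/6 - \gamma)$, finishing the proof.

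The main technical hurdle is the algebraic step that collapses the numerator of $h(n+1) - h(n)$ to the single expression $\mathcal{H}_{n+1} - (n+1)(1-\gamma)$; once this identity is written down, the remainder is elementary monotonicity of $\mathcal{H}_m/m$ and a single comparison against explicit rational bounds on $\gamma$.
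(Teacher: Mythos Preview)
Your proposal is correct and follows essentially the same route as the paper's proof: both start from $\ln n \le \mathcal{H}_n - \gamma$, rewrite the ratio as $1 + h(n)$ with $h(n) = \frac{1-\gamma-1/(n+1)}{\mathcal{H}_{n+1}-1}$, and locate the maximum of $h$ by showing that the sign of $h(n+1)-h(n)$ is governed by $\mathcal{H}_{n+1} - (n+1)(1-\gamma)$, which flips exactly between $n=4$ and $n=5$. Your use of the monotonicity of $\mathcal{H}_m/m$ to argue a single sign change is slightly more explicit than the paper's direct assertion, but otherwise the arguments coincide.
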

\begin{proof}{Proof.}
First, note that
\[\frac{\ln n}{\Hn_{n+1}-1} \leq \frac{\Hn_{n}-\gamma}{\Hn_{n+1}-1} = 1+\frac{1-\gamma-\frac{1}{n+1}}{\Hn_{n+1}-1}.\]
It suffices to prove that
\[\sup_{n \in \mathbb{N}}\frac{1-\gamma-\frac{1}{n+1}}{\Hn_{n+1}-1} = \sup_{n \in \mathbb{N}, n \geq 2}\frac{1-\gamma-\frac{1}{n}}{\Hn_{n}-1} \leq \frac{20}{29}\left(\frac{5}{6}-\gamma\right).\]
In order to do so, we show that there is a unique $n'\in \mathbb{N}_{\geq 2}$ with
\begin{align*}
\frac{1-\gamma-\frac{1}{n-1}}{\Hn_{n-1}-1} \leq \frac{1-\gamma-\frac{1}{n}}{\Hn_{n}-1} \quad &\text{for all } n \in \mathbb{N}_{\geq 2}, n \leq n'\text{, and}\\
\frac{1-\gamma-\frac{1}{n}}{\Hn_{n}-1} \geq \frac{1-\gamma-\frac{1}{n+1}}{\Hn_{n+1}-1} \quad &\text{for all } n \in \mathbb{N}_{\geq 2}, n \geq n',
\end{align*}
concluding that the supremum is attained at $n'$. Now we observe that
\begin{align*}
& & \frac{1-\gamma-\frac{1}{n}}{\Hn_{n}-1} - \frac{1-\gamma-\frac{1}{n+1}}{\Hn_{n+1}-1} &\geq 0\\
&\Leftrightarrow & \left(\Hn_{n+1}-1\right)\left(1-\gamma-\frac{1}{n}\right) - \left(\Hn_{n}-1\right)\left(1-\gamma-\frac{1}{n+1}\right) &\geq 0
\end{align*}
and
\begin{align*}
&\left(\Hn_{n+1}-1\right)\left(1-\gamma-\frac{1}{n}\right) - \left(\Hn_{n}-1\right)\left(1-\gamma-\frac{1}{n+1}\right)\\
&= \frac{1}{n+1}\left(1-\gamma-\frac{1}{n}\right) - \frac{1}{n}\left(\Hn_{n}-1\right) + \frac{1}{n+1}\left(\Hn_{n}-1\right)\\
&= \frac{1}{n+1}\left(1-\gamma-\frac{1}{n}\right) - \left(\Hn_{n}-1\right)\frac{1}{n(n+1)} = \frac{1}{n(n+1)}\left(n(1-\gamma)-\Hn_{n}\right),
\end{align*}
which is greater or equal to $0$ if and only if $n\geq 6$. We conclude that the supremum is attained at $n'=6$. We finish the proof by observing
\begin{align*}
\frac{\ln n}{\Hn_{n+1}-1} &\leq 1+ \frac{1-\gamma-\frac{1}{n+1}}{\Hn_{n+1}-1}\\ &\leq 1+ \sup_{\tilde{n} \in \mathbb{N}_{\geq 2}}\frac{1-\gamma-\frac{1}{\tilde{n}}}{\Hn_{\tilde{n}}-1}\\
&\leq 1+ \frac{1-\gamma-\frac{1}{6}}{\Hn_{6}-1}\\
&= 1+ \frac{20}{29}\left(\frac{5}{6}-\gamma\right),
\end{align*}
for all $n \in \mathbb{N}.$
\hfill \Halmos
\end{proof}

\subsection{Improving the Algorithm}\label{sec:uniform2}

We proceed to improve the competitive ratio of our algorithm as follows
(cf.~Algorithm~\ref{alg:multi_uniform2}). First, recall that, in Algorithm~\ref{alg:multi_uniform}, we hired an applicant below the current threshold of $\tau_j = 2^{-j}$ for $\notnicefrac{4}{\tau_j}$ time units with the rationale that
\begin{align*}
\sum_{i=0}^{j+1} \frac{1}{\tau_{i}}= \sum_{i=0}^{j+1} 2^{i} = 2^{j+2}-1 = \frac{4}{\tau_j} - 1 < \frac{4}{\tau_j}.
\end{align*}
With this inequality, it is ensured that we can afford $\notnicefrac{1}{\tau_{j+1}}$ time steps to look for an applicant below the threshold $\tau_{j+1}$ and, in case we did not find a suitable applicant, additional $\notnicefrac{1}{\tau_j}$ time steps looking for an applicant below the threshold $\tau_j$, and so on, until the threshold is raised to $1$ and we find a suitable applicant with probability $1$.

It turns out that it pays off to reduce both the hiring times and the time steps we spend looking for an applicant below a given threshold uniformly by a factor of $c := \notnicefrac{3}{4}$. That is, when hiring an applicant below the threshold of $\tau_j$, we hire only for $\notnicefrac{4c}{\tau_j} = \notnicefrac{3}{\tau_j}$ time units. To compensate for that we only look for an applicant below threshold $\tau_j$ for $\lceil \frac{c}{\tau_j} \rceil = \lceil \frac{3}{4\tau_j}\rceil$ time units. Note that for $\tau \in \{\notnicefrac{1}{2},1\}$ we round all times to the next integer. For $j \geq 3$, we then obtain
\begin{align*}
\sum_{i=0}^{j+1} \biggl\lceil \frac{3}{4\tau_j} \biggr\rceil = \biggl\lceil \frac{3}{4} \biggr\rceil	+ \biggl\lceil \frac{3}{2} \biggr\rceil + \sum_{i=2}^{j+1} \frac{3}{4\tau_j} = 1 + 2 + 3 \sum_{i=2}^{j+1} 2^{i-2} = 3 \cdot 2^j = \frac{3}{\tau_j} \;.
\end{align*}
Similarly, we may check for $j=0$ that $\lceil \notnicefrac{3}{4} \rceil + \lceil \notnicefrac{3}{2} \rceil = 3 = \notnicefrac{3}{\tau_0}$, and for $j=1$ that $\lceil \notnicefrac{3}{4} \rceil + \lceil \notnicefrac{3}{2} \rceil +3 = 6 = \notnicefrac{3}{\tau_1}$. Thus, we may conclude that the above choices ensure that an applicant is under contract at all times.

Second, instead of reducing the threshold
once by factor 2 when we hire a new applicant, we repeatedly halve
the threshold for as long as it is still greater or equal to the actual cost of the new applicant. This
way, we can ensure that the cost for which a new applicant is hired
is always uniformly distributed in $[\ct,2\ct)$ (or possibly $[0,2\ct)$
for the last hiring), where $\tau$ denotes the threshold \emph{after} the applicant is hired. Thus, the expected total cost of each applicant
is $\frac{3\tau}{2} \cdot \frac{2c}{\ct}=3c$ (or possibly $2c$ for
the last hiring).

\begin{algorithm}[tb]
\caption{A $\ratiouniform$-competitive algorithm for uniformly distributed costs.\label{alg:multi_uniform2}
}
 $\ct \set 1$ \tcp*[r]{threshold cost}
 $t \set 1$ \tcp*[r]{time with threshold}
\For{$i \set 1, \dots, n$  }{ 
  $t \set t-1$\;
  \If{$x_i \le \ct$}{
     \While{$x_i \le \ct$}{
       $\ct \set \notnicefrac{\ct}{2}$ \vphantom{$\delta_q$}\;
     }
     hire applicant $i$ for $\lceil \nicefrac{2c}{\ct}\rceil$ time steps$\vphantom{\delta_q}$\;
     \If{$i +  \lceil \nicefrac{2c}{\ct}\rceil > n$$\vphantom{\delta_q}$}{
       {\bf stop}\;
     }
     $t \set \lceil \notnicefrac{c}{\ct} \rceil$ $\vphantom{\delta_q}$
  }

  \ElseIf{$t=0$}{
     $\ct \set 2\ct$; $t \set \lceil \notnicefrac{c}{\ct} \rceil$ \vphantom{$\delta_q$}
  }
}
\end{algorithm}

Once we hire an applicant with a cost below $2^{-j}$, the threshold $\tau$ after hiring is at most $2^{-(j+1)}$ so that the applicant is hired for at least $\lceil 2\frac{c}{\tau} \rceil \geq 2^{j+2}c$ time steps. This implies that we only need to account for thresholds of the form $\ct_j = 2^{-j}$ where $j\in\{0,1,\dots, \lceil\log(\notnicefrac{n}{c})\rceil-2\}$.
We again capture the behavior of the algorithm with a Markov chain
(cf.~Figure~\ref{fig:markov_b}). To this end, states $A_{0},A_{1},\dots,A_{k}$
and $B_{0},B_{1},\dots,B_{k}$ with $k=\lceil\log(\notnicefrac{n}{c})\rceil-2$ are introduced. We distinguish
between the states $A_{j}$ that correspond to the algorithm looking
for suitable applicants by comparing their cost with $\ct_j = 2^{-j}$, and states
$B_{j}$ that correspond to the event that the cost of our current candidate 
is below the threshold $\ct_j = 2^{-j}$.
Each state $A_{j}$ with $j > 0$ either transitions to $A_{j-1}$ with probability $(1-p_j)$, when no
applicant for the current threshold was found, or to $B_{j}$ with
probability $p_j$. As for the previous Markov chain, we have
\[
(1-p_j)=(1-\tau_j)^{\lceil c/\tau_j \rceil} \leq \e^{-c}.
\]
Similar to the previous section, we may consider the Markov chain with homogenous transition probabilities $p=1-\e^{-c}$ shown in Figure~\ref{fig:markov_b} (b) instead, since we are only interested in
upper bounding the number of hired applicants. Each state $B_{j}$ with $j < k$ transitions
to $B_{j+1}$ or $A_{j+1}$ each with probability $\notnicefrac{1}{2}$,
since the cost $x$ lies with equal probability in $[\tau,2\tau)$
or $[0,\tau)$. State $B_{k}$ is the only absorbing state of the
Markov chain. Our analysis of the Markov chain in \S~\ref{sec:markov_b} yields the following result.

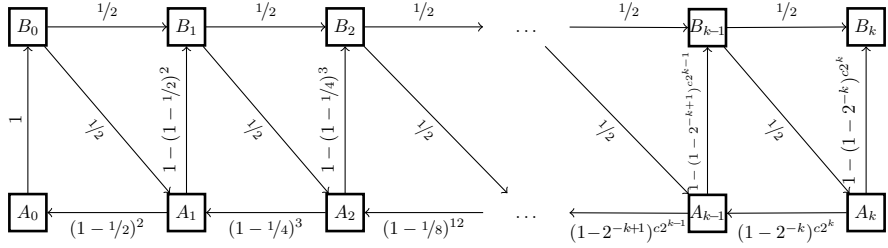
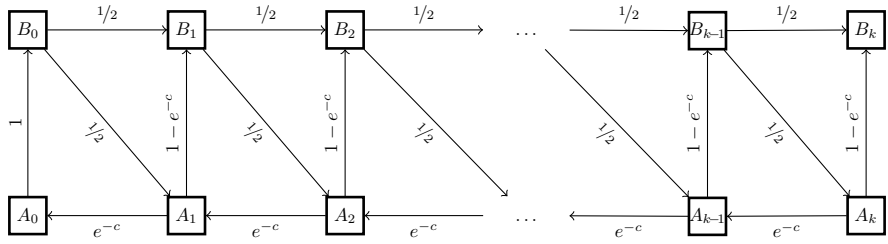
\begin{figure}[bt]
\begin{subfigure}{\textwidth}
    \centering
\begin{tikzpicture}[scale=0.3, every node/.style={scale=0.7}]
\matrix (m) [matrix of nodes, row sep=5.5em, column sep=4.5em,
	nodes={draw, rectangle, line width=1, text centered, minimum size=2em}]
    { $B_0$ & $B_1$ & $B_2$ &|[draw=none]| $\phantom{B_1}\dots\phantom{B_1}$  & $\!\!\!B_{k\!-\!1}\!\!\!$ & $B_{k}$ \\
      $A_0$ & $A_1$ & $A_2$ &|[draw=none]| $\phantom{A_1}\dots\phantom{A_1}$  & $\!\!\!A_{k\!-\!1}\!\!\!$  & $A_{k}$ \\ };
\draw[->] (m-2-1) -- (m-1-1) node [above,midway,sloped] {$1$};
\draw[->] (m-2-2) -- (m-1-2) node [above,midway,sloped] {$1-(1-\nicefrac{1}{2})^{2}$};
\draw[->] (m-2-3) -- (m-1-3) node [above,midway,sloped] {$1-(1-\nicefrac{1}{4})^{3}$};
\draw[->] (m-2-5) -- (m-1-5) node [above,midway,sloped,rotate=180] {\scriptsize$1-(1-2^{-k+1})^{c2^{k-1}}$};
\draw[->] (m-2-6) -- (m-1-6) node [above,midway,sloped,rotate=180] {$1-(1-2^{-k})^{c2^{k}}$};

\draw[->] (m-2-2) -- (m-2-1) node [below,midway] {$(1-\nicefrac{1}{2})^{2}$};
\draw[->] (m-2-3) -- (m-2-2) node [below,midway] {$(1-\nicefrac{1}{4})^{3}$};
\draw[->] (m-2-4) -- (m-2-3) node [below,midway] {$(1-\nicefrac{1}{8})^{12}$};
\draw[->] (m-2-5) -- (m-2-4) node [below,midway] {$(1\!-\!2^{-k\!+\!1})^{c2^{k\!-\!1}}$};
\draw[->] (m-2-6) -- (m-2-5) node [below,midway] {$(1-2^{-k})^{c2^{k}}$};

\foreach \x [count=\y] in {2,3,4,5,6}{
    \draw[->] (m-1-\y) -- (m-1-\x) node [above,midway,sloped] {$\nicefrac{1}{2}$};
    \draw[->] (m-1-\y) -- (m-2-\x) node [below,midway,sloped] {$\nicefrac{1}{2}$};
}

\end{tikzpicture}
\subcaption{Original Markov chain $N$.}
~\\
~\\
\end{subfigure}
\begin{subfigure}{\textwidth}
    \centering
\begin{tikzpicture}[scale=0.3, every node/.style={scale=0.7}]
\matrix (m) [matrix of nodes, row sep=5.5em, column sep=4.5em,
	nodes={draw, rectangle, line width=1, text centered, minimum size=2em}]
    { $B_0$ & $B_1$ & $B_2$ &|[draw=none]| $\phantom{B_1}\dots\phantom{B_1}$  & $\!\!\!B_{k\!-\!1}\!\!\!$ & $B_{k}$ \\
      $A_0$ & $A_1$ & $A_2$ &|[draw=none]| $\phantom{A_1}\dots\phantom{A_1}$  & $\!\!\!A_{k\!-\!1}\!\!\!$  & $A_{k}$ \\ };
\draw[->] (m-2-1) -- (m-1-1) node [above,midway,sloped] {$1$};
\draw[->] (m-2-2) -- (m-1-2) node [above,midway,sloped] {$1-e^{-c}$};
\draw[->] (m-2-3) -- (m-1-3) node [above,midway,sloped] {$1-e^{-c}$};
\draw[->] (m-2-5) -- (m-1-5) node [above,midway,sloped,rotate=180] {$1-e^{-c}$};
\draw[->] (m-2-6) -- (m-1-6) node [above,midway,sloped,rotate=180] {$1-e^{-c}$};

\draw[->] (m-2-2) -- (m-2-1) node [below,midway] {$e^{-c}$};
\draw[->] (m-2-3) -- (m-2-2) node [below,midway] {$e^{-c}$};
\draw[->] (m-2-4) -- (m-2-3) node [below,midway] {$e^{-c}$};
\draw[->] (m-2-5) -- (m-2-4) node [below,midway] {$e^{-c}$};
\draw[->] (m-2-6) -- (m-2-5) node [below,midway] {$e^{-c}$};

\foreach \x [count=\y] in {2,3,4,5,6}{
    \draw[->] (m-1-\y) -- (m-1-\x) node [above,midway,sloped] {$\nicefrac{1}{2}$};
    \draw[->] (m-1-\y) -- (m-2-\x) node [below,midway,sloped] {$\nicefrac{1}{2}$};
}

\end{tikzpicture}
\subcaption{Markov chain $\hat{N}(p,k)$ with homogenous transition probabilities $p = 1- e^{-c}$.}
\end{subfigure}
\caption{Markov chains modeling the expected number of hired applicants of
Algorithm~\ref{alg:multi_uniform2}.\label{fig:markov_b}}

\end{figure}

\begin{restatable}{lemma}{lemuniformhittingtimesecond}
Starting in state $A_0$ of Markov chain $N$, the expected number of transitions from an $A$-state to a $B$-state is at most
\label{lem:uniform_hitting_time_2}
\begin{align}
\label{eq:h}
h = \frac{kp}{3p-1} -\frac{4p(1-2p)}{(3p-1)^{2}} + \left(\frac{1-p}{3p-1}\right)^{\!\!2}\!\! \left( \frac{2(1-p)}{1+p}\right)^{\!\!k}
\end{align}
where $k = \lceil \log (n/c) \rceil -2$ and $p = 1-e^{-c}$.
\end{restatable}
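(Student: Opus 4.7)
The plan is to follow the template of Lemma~\ref{lem:uniform-hitting-time}: first reduce $N$ to the homogenized chain $\hat{N}(p,k)$ with $p = 1 - e^{-c}$, and then derive an exact closed-form expression for the expected number of $A \to B$ transitions in $\hat{N}(p,k)$ via first-step analysis and the solution of a linear recurrence.

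For the reduction, the key observation is that in $N$ the true upward probability $1 - (1 - 2^{-j})^{\lceil c\,2^j \rceil}$ at state $A_j$ is at least $p = 1 - e^{-c}$. Shrinking every such probability uniformly down to $p$ can only increase the expected number of $A \to B$ transitions, since a smaller upward probability forces more oscillations between $A_j$ and $A_{j-1}$ before the chain eventually crosses over into $B_j$ and ascends. This monotonicity can be established by the same swap-and-iterate argument as in the proof of Lemma~\ref{lem:uniform-hitting-time}, and I would defer it to \S~\ref{sec:markov_b}.

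Once restricted to $\hat{N}(p,k)$, let $a_j$ and $b_j$ denote the expected number of $A \to B$ transitions starting at $A_j$ and $B_j$, respectively. First-step analysis yields
\begin{align*}
a_0 &= 1 + b_0, & a_j &= p + p\,b_j + (1-p)\,a_{j-1} \quad (1 \le j \le k), \\
b_j &= \tfrac{1}{2} b_{j+1} + \tfrac{1}{2} a_{j+1} \quad (0 \le j \le k-1), & b_k &= 0.
\end{align*}
Using $b_{j+1} = 2 b_j - a_{j+1}$ to eliminate the $b$-variables gives the second-order linear recurrence
\begin{equation*}
(1+p)\,a_{j+1} - (3-p)\,a_j + 2(1-p)\,a_{j-1} = -p \qquad (1 \le j \le k-1),
\end{equation*}
whose characteristic polynomial has roots $1$ and $\lambda := \tfrac{2(1-p)}{1+p}$; a particular solution is $-\tfrac{pj}{3p-1}$, so the general solution is $a_j = -\tfrac{pj}{3p-1} + E + F\,\lambda^j$. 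The constants $E$ and $F$ are pinned down by the two boundary conditions $a_k = p + (1-p)\,a_{k-1}$ (from $b_k = 0$) and $a_1 = a_0 - \tfrac{p}{1+p}$ (obtained by combining $a_0 = 1 + b_0$ with $b_0 = \tfrac{1}{2}(a_1 + b_1)$ and the equation for $a_1$). Evaluating $a_0 = E + F$ then produces the claimed formula for $h$.

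I expect the main obstacle to be the algebraic bookkeeping in this last step: the top boundary couples the decaying mode $\lambda^k$ with the linear particular solution, so the three explicit terms of~\eqref{eq:h} only emerge after using $\lambda^{k-1}\cdot\tfrac{2(1-p)}{1+p} = \lambda^k$ to collapse the exponential contribution into $\bigl(\tfrac{1-p}{3p-1}\bigr)^{2}\lambda^k$, together with the identity $\tfrac{2p}{3p-1} - \tfrac{2p(1-p)}{(3p-1)^2} = -\tfrac{4p(1-2p)}{(3p-1)^2}$ to simplify the constant piece. The coupling in the reduction step and the first-step analysis itself are routine, but matching the closed form demands careful collection of terms.
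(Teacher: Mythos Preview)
Your proposal is correct and follows essentially the same route as the paper: reduce $N$ to the homogenized chain $\hat{N}(p,k)$ via the swap-and-iterate monotonicity argument, set up the identical first-step recurrences for $a_j$ and $b_j$, and read off $a_0$. The only cosmetic difference is that the paper (in Lemma~\ref{lem:markov_b_transitions1}) simply writes down the closed forms for $a_j$ and $b_j$ and invites the reader to verify they satisfy the recurrences, whereas you actually derive them by eliminating $b_j$, solving the resulting linear recurrence with characteristic roots $1$ and $\lambda = \tfrac{2(1-p)}{1+p}$, and matching boundary conditions; both lead to the same expression for $h$.

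One small remark on your heuristic for the reduction: the sentence ``a smaller upward probability forces more oscillations between $A_j$ and $A_{j-1}$'' is not quite the mechanism, since each excursion into the $A$-region contributes exactly one $A\!\to\!B$ transition regardless of $p$. The actual effect of lowering $p$ is that the exit from the $A$-region lands at a lower $B$-state, forcing more $B\!\to\!A$ re-entries before absorption at $B_k$. This is precisely what the paper's monotonicity-of-visits argument captures, and since you explicitly defer to that argument your plan is sound; just be careful not to lean on the oscillation picture in the write-up.
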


\begin{proof}
Let $c= 3/4$, $k = \lceil \log(n/c) \rceil -2$, and $p = 1 - \e^{-c}$. We again argue that we only overestimate the expected visiting times when considering Markov chain $\hat{N}(p,k)$ instead of $N$. 
To see this, fix a state~$A_j$, $j=0,\dots,k$ and consider the stochastic process $N'$ that follows Markov chain $N$, but, the first time state~$A_j$ is visited, transitions according to the probabilities of $\hat{N}(p,k)$. 
As in all stochastic processes we consider the expected number of visits of all states is decreasing in the index of the starting state $A_j$, the expected number of visits of all states are not smaller in $N'$ than in $N$. Iterating this argument, we conclude that the expected number of visits of all states in $N$ does not exceed those in $\hat{N}(p,k)$.

In Lemma~\ref{lem:markov_b_transitions1} in \S~\ref{sec:markov_b} we prove that the expected number of transitions from an $A$-state to a $B$-state of $\hat{N}(p,k)$ is bounded from above by \eqref{eq:h}.\hfill \Halmos
\end{proof}

As every transition from an $A$-state to a $B$-state corresponds to the hiring of a candidate, bounding these transitions allows us to bound $\Ex{\ALG_n}$. Together with the formula for $\Ex{\OPT_n}$ proven in Lemma~\ref{lem:opt_uniform} we obtain an improved competitive ratio. Numerically, the choice $c=\notnicefrac{3}{4}$ yields optimizes the competitive ratio yielding a strict competitive ratio of $\ratiouniform$.

\begin{restatable}{theorem}{thmuniformimproved}
\label{thm:uniform_upper}
For $c=\notnicefrac{3}{4}$, Algorithm~\ref{alg:multi_uniform2} is strictly
$\ratiouniform$-competitive for uniform distributions.
\end{restatable}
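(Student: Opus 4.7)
The plan is to bound $\E[\ALG_n]$ via expected hires times expected per-hire cost, apply Lemmas~\ref{lem:uniform_hitting_time_2} and~\ref{lem:opt_uniform}, and verify that the resulting ratio is at most $\ratiouniform$ uniformly in $n$.

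First, I would observe that each hire of Algorithm~\ref{alg:multi_uniform2} corresponds to exactly one transition from an $A$-state to the co-indexed $B$-state in the Markov chain $N$: the inner while-loop guarantees the post-halving threshold satisfies $\tau \leq 1/2$, so $\tau = 2^{-j}$ with $j \geq 1$, and for $c = 3/4$ the hiring duration $\lceil 2c/\tau\rceil = 3\cdot 2^{j-1}$ is an integer. Since conditionally $x_i$ is uniform on $(\tau, 2\tau]$ with mean $3\tau/2$, the expected per-hire cost is exactly
\[
\frac{3\tau}{2}\cdot\frac{2c}{\tau} = 3c.
\]
The only exception is the last hire, which may extend past step $n$ but contributes at most an additive $O(1)$. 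Hence, letting $H$ denote the total number of hires,
\[
\E[\ALG_n] \leq 3c\cdot\E[H] + O(1).
\]

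Second, Lemma~\ref{lem:uniform_hitting_time_2} gives $\E[H] \leq h$ with $h$ from~\eqref{eq:h}, and Lemma~\ref{lem:opt_uniform} gives $\E[\OPT_n] = \Hn_{n+1}-1$. Since $k = \lceil\log(n/c)\rceil - 2 \leq \log_2 n + O(1)$ and the transient term $\bigl(\tfrac{2(1-p)}{1+p}\bigr)^{\!k}$ in~\eqref{eq:h} decays geometrically to $0$ whenever $p = 1-e^{-c} > 1/3$ (which holds for $c = 3/4$), the leading asymptotic ratio equals
\[
\frac{3cp}{(3p-1)\ln 2},
\]
which, for $c = 3/4$ and $p = 1-e^{-3/4}$, evaluates to a value strictly below $\ratiouniform$.

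The main obstacle is upgrading this asymptotic bound to the \emph{strict} competitive ratio $\sup_n \E[\ALG_n]/\E[\OPT_n] \leq \ratiouniform$. I would follow the approach of Lemma~\ref{lem:ratiouniformbad}: derive a uniform-in-$n$ inequality of the form $\log_2 n/(\Hn_{n+1}-1) \leq \frac{1}{\ln 2}(1+\delta_n)$ with $\delta_n$ maximised at a unique small $n$, then verify the bound $\ratiouniform$ directly at that worst case, accounting for the bounded constant summand $-\frac{4p(1-2p)}{(3p-1)^2}$ appearing in~\eqref{eq:h}. A separate one-variable calculus check then confirms that $c = 3/4$ (near-)minimises $\frac{3c(1-e^{-c})}{(3(1-e^{-c})-1)\ln 2}$, justifying this choice of the scaling constant.
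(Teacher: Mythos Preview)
Your proposal is correct and follows essentially the same route as the paper: bound the expected number of hires by $h$ from Lemma~\ref{lem:uniform_hitting_time_2}, multiply by the per-hire expected cost $3c$, and divide by $\Hn_{n+1}-1$ from Lemma~\ref{lem:opt_uniform}. The paper's only refinements are (i) observing that the final hire costs exactly $2c$ rather than $3c$ (since there the cost is uniform on $[0,2\tau)$), yielding the sharp bound $\E[\ALG_n]\leq 3hc-c$ instead of your $3c\,\E[H]+O(1)$, and (ii) for the strict bound, using a dedicated Lemma~\ref{lem:uniform_2.98} that exploits the piecewise-constant dependence of $h$ on $n$ (through $k=\lceil\log(n/c)\rceil-2$) to reduce the supremum to the values $n=3\cdot 2^{\ell-1}+1$, checking $\ell\le 6$ directly and proving monotonicity for $\ell\ge 7$---a somewhat cleaner device than adapting the continuous estimate of Lemma~\ref{lem:ratiouniformbad} as you suggest, though your plan would also go through with care.
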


\begin{proof}{Proof.}
Whenever an applicant is hired, the Markov chain transitions from $A_{j}$ to $B_{j}$ for some value $j\in[k]$. The algorithm terminates at the latest when state $B_{k}$ is reached. We can thus bound the number of hired applicants by the expression $h$ of Lemma~\ref{lem:uniform_hitting_time_2}. Using Lemma~\ref{lem:opt_uniform} and the fact that the expected cost incurred by each hired applicant is $3c$ (and $2c$ for the last hiring), we get
\[
\frac{\Exp{\ALG_n}}{\Exp{\OPT_n}}\leq\frac{3hc-c}{\Hn_{n+1}-1}\leq \ratiouniform,
\]
for $c=\notnicefrac{3}{4}$ and all $n$. See Lemma~\ref{lem:uniform_2.98} below for a proof of the last inequality. \hfill \Halmos
\end{proof}

\newcommand{\lowern}{192}

\begin{lemma}
\label{lem:uniform_2.98}
Let $c\!=\!\notnicefrac{3}{4}$, $p\!=\!1-e^{-c}$, $k\!=\!\lceil \log(\notnicefrac{n}{c})\rceil-2$, and 
$h=\frac{kp}{3p-1} -\frac{4p(1-2p)}{(3p-1)^{2}} + \bigl(\frac{1-p}{3p-1}\bigr)^2 \bigl( \frac{2(1-p)}{1+p}\bigr)^{\!k}$. Then, 
$\frac{3hc-c}{\Hn_{n+1}-1} \leq \ratiouniform$ for all $n$.
\end{lemma}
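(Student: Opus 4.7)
The plan is to split the verification into an analytic estimate for large $n$ and a direct computation for small $n$. Substituting $c = 3/4$ and $p = 1 - e^{-3/4}$, we have $p \approx 0.5276$, $3p - 1 \approx 0.5828$, $1-2p \approx -0.0552$, and $\alpha := 2(1-p)/(1+p) \approx 0.619 \in (0,1)$. The middle term of $h$ is a positive constant (because $1-2p<0$), and the geometric term $\bigl(\tfrac{1-p}{3p-1}\bigr)^{2}\alpha^{k}$ is positive and decreasing in $k$. Using $\alpha^{0}=1\ge\alpha^{k}$, we therefore bound $h$ above by an affine function of $k$,
\begin{equation*}
h \;\le\; \frac{p}{3p-1}\,k + \beta,\qquad \beta \;:=\; -\frac{4p(1-2p)}{(3p-1)^{2}} + \left(\frac{1-p}{3p-1}\right)^{\!2}.
\end{equation*}

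Combining this with $k = \lceil \log(n/c)\rceil-2 \le \log n + \log(4/3) - 1$ and the sharp lower bound for $\Hn_{n+1}-1$ established in Lemma~\ref{lem:ratiouniformbad}, the ratio $(3hc-c)/(\Hn_{n+1}-1)$ becomes an explicit function of $n$ whose limit as $n\to\infty$ equals
\begin{equation*}
\frac{3cp}{(3p-1)\ln 2} \;=\; \frac{9\,p}{4\,(3p-1)\,\ln 2} \;\approx\; 2.938.
\end{equation*}
Since this limit lies strictly below $\ratiouniform = 2.965$, an explicit threshold will suffice, and I will take $\lowern = 192$: one verifies by monotonicity of the analytic bound in $n$ that for every $n > \lowern$ the resulting estimate is at most $\ratiouniform$.

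For the remaining range $n \in \{1,\dots,\lowern\}$, I check the inequality by direct computation: for each $n$ evaluate $k = \lceil\log(n/c)\rceil-2$, compute $h$, $3hc-c$, and $\Hn_{n+1}-1$ exactly, and confirm $(3hc-c)/(\Hn_{n+1}-1) \le \ratiouniform$ in every case. Because $\Hn_{n+1}-1$ is piecewise constant between thresholds at which $k$ jumps, only one value of $n$ per level of $k$ attains the maximum, making this finite check compact.

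The main obstacle is the tight slack: the asymptotic limit $\approx 2.938$ differs from the target $2.965$ by only about $0.027$, so the $O(1)$ error introduced by the ceiling in the definition of $k$ and the gap between $\Hn_{n+1}-1$ and $\ln n$ has to be controlled carefully. It is precisely this tight margin that forces $\lowern$ as large as $192$: any smaller threshold would leave values of $n$ for which neither the analytic estimate nor a shorter numerical check is tight enough to clear $\ratiouniform$.
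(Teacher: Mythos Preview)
Your overall architecture---an analytic estimate for large $n$ together with a finite check for $n\le 192$---is exactly the paper's strategy. But the analytic estimate, as you have written it, is too weak to close the argument at the threshold $n=192$, and the appeal to Lemma~\ref{lem:ratiouniformbad} actually makes things worse.

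Concretely: replacing $\alpha^{k}$ by $\alpha^{0}=1$ inflates the additive constant in $h$ by about $0.63$, i.e.\ by about $1.4$ in the numerator $3hc-c$. At $n=193$ (where $k=7$) your bound on $h$ gives roughly $7.34$, hence $3hc-c\lesssim 15.76$, while $\mathcal H_{194}-1\approx 4.85$, for a ratio of about $3.25\gg 2.965$. If in addition you lower-bound the denominator via Lemma~\ref{lem:ratiouniformbad}, you introduce a multiplicative factor $\approx 1.18$ and the resulting upper bound never drops below $2.965$ for any $n$; even with the sharper elementary bound $\mathcal H_{n+1}-1\ge \ln n+\gamma-1$ your estimate first clears $2.965$ only around $n\sim e^{60}$. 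So the claim that ``for every $n>192$ the resulting estimate is at most $2.965$'' is false as stated. The paper avoids this by bounding $\alpha^{k}\le\alpha^{7}$ once $k\ge 7$ (which is exactly the regime $n\ge 193$), shrinking the exponential tail to something negligible, and by using $\mathcal H_{m}>\ln m+\gamma$ directly rather than the uniform ratio bound of Lemma~\ref{lem:ratiouniformbad}. With those two changes your plan goes through. A minor slip: it is the numerator $3hc-c$ that is piecewise constant between the jumps of $k$, not $\mathcal H_{n+1}-1$; the denominator is strictly increasing in $n$, which is why the worst case on each level of $k$ is its smallest $n$.
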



\begin{proof}{Proof.}
Since the expression $3hc - c$ is constant as long as $k= \lceil \log (\notnicefrac{n}{c}) \rceil - 2 = \lceil \log(\notnicefrac{n}{3}) \rceil$ is constant, the ratio $\frac{3hc - c}{\Hn_{n+1}-1}$ is maximized for some $n$ of the form $n = 3\cdot 2^{\ell-1}+1$ with $\ell \in \mathbb{N}$. See also Figure~\ref{fig:plots} where the ratio is plotted as a function of $n$. The claim of the lemma is easily verified for $\ell =1,\dots,6$. For $\ell \geq 7$, we obtain
\begin{align}
h &=  \frac{\ell p}{3p-1} -\frac{4p(1-2p)}{(3p-1)^{2}} + \left(\frac{1-p}{3p-1}\right)^{\!\!2} \cdot\left( \frac{2(1-p)}{1+p}\right)^{\!\!\ell} \notag \\[6pt]
   &\le \frac{\ell p}{3p-1} -\frac{4p(1-2p)}{(3p-1)^{2}} + \left(\frac{1-p}{3p-1}\right)^{\!\!2} \cdot\left( \frac{2(1-p)}{1+p}\right)^{\!\!7} \label{eq:put_in_p}\\[6pt]
   &< \frac{\ell p}{3p-1} + \frac{1}{\e}, \notag
\end{align}
where for the first inequality we used that for $p = 1 - e^{-3/4}$ we have $2(1-p)/(1+p) = 2/(2\e^{3/4}-1)\approx 0.618 < 1$ and for the second inequality we evaluated \eqref{eq:put_in_p} for $p= 1- e^{-3/4}$.
%
For the Euler-Mascheroni constant $\gamma \approx 0.577$ we obtain 
\begin{align}
   \frac{3hc-c}{\Hn_{3\cdot 2^{\ell-1}+2}-1} 
   &< 
   \frac{\frac{9}{4} \bigl(\frac{\ell p}{3p-1} + \frac{1}{\e}\bigr)-\frac{3}{4}}
          {\ln(3\cdot 2^{\ell-1}+2) - (1 - \gamma)} \notag\\[6pt]
          &\leq \frac{\frac{9}{4} \bigl(\frac{\ell p}{3p-1} + \frac{1}{\e}\bigr)-\frac{3}{4}}
          {\ell \ln(2) + \ln(3) - \ln(2) - (1 - \gamma)},\label{eq:complicated_expression}
   \end{align}
where we used that the denominator is positive.
Using that $\ln(3) - \ln(2) - (1- \gamma) \approx -0.017 < 0$, elementary calculus shows that the expression in \eqref{eq:complicated_expression} is decreasing in $\ell$. 
Evaluating it for $\ell=7$ we obtain
\begin{align*}
  \frac{3hc-c}{\Hn_{3\cdot 2^{\ell-1}+2}-1} \leq \ratiouniform,
 \end{align*}
 as claimed. \hfill\Halmos
\end{proof}


\subsection{Analytical Lower Bound}
\label{sec:lower_bound}

To obtain a lower bound on the competitive ratio of any online algorithm, we study in this section a relaxation of the problem. The relaxation allows to exploit an interesting connection to the classical stopping problem with uniformly distributed random variables which is known under the name \emph{Cayley-Moser problem}, see \citet{moser1956,GilbertMosteller/66}.

Consider the relaxation where we are allowed to hire an applicant for \emph{any} (not necessarily contiguous) subset of all future time steps, while still having to decide on this set immediately upon arrival of the applicant.
In this setting, there is obviously no advantage of concurrent employment --- once we hired an applicant for some time slot, there is no benefit of hiring additional applicants for the same time slot.
Put differently, the decision whether to hire an applicant for some time slot is independent of the decision for other time slots.
Thus, the problem reduces to simultaneously solve a stopping problem for each time slot $t$. That is, we need to hire exactly one of the first $t$ applicants for this time slot, while applicants appear one by one and we need to irrevocably hire or discard each applicant upon their arrival. By linearity of expectation, each of the $n$ stopping problems can be treated individually.

Gilbert and Mosteller~\cite{GilbertMosteller/66} showed that, in the maximization version of the single stopping problem with uniformly distributed values, the optimal stopping rule is a threshold rule parametrized by $t$ thresholds $\tau_0,\dots,\tau_{t-1}$. The rule stops at a time step when there are $i$ remaining (unobserved) random variables and the realization is above $\tau_i$. The threshold values follow the recursion $\tau_0 = 0$ and $\tau_{i+1} = (1+\tau_{i}^2)/2$ for all $i \geq 1$. Gilbert and Mosteller showed that the value $\tau_t$ is also the expected revenue for the stopping problem with $t$ slots.when following the optimal strategy. They bound the expected revenue for all $t$ by
$$\tau_t \geq 1 - \frac{2}{t + \ln(t+1) + 1.767}.$$
By symmetry of the uniform distribution, for the corresponding single stopping problem with uniformly distributed costs and minimization objective, this immediately yields that the optimum expected cost~$1-\tau_t$ is lower bounded by
$h(t) := \frac{2}{t + \ln(t+1) + 1.767}$.

Since we need to solve a stopping problem for each time slot $1,2,\dots,n$, and by linearity of expectation, we get a lower bound on the expected cost of $\sum_{t=1}^{n} h(t)$ for the relaxed problem.
On the other hand, by Lemma~\ref{lem:opt_uniform}, for the offline optimum of our original problem, we have $\E_{x_1 \sim \U[0,1],\dots,x_n \sim \U[0,1]} \bigl[\OPT_{n}\bigr] = \sum_{t=1}^{n} g(t) := \sum_{t=1}^{n} \frac{1}{1+t}$.
Since $h(t)$ and $g(t)$ are both monotonically decreasing, we can estimate $\sum_{t=1}^{n} h(t) \geq \int_1^{n+1} h(t) \,\mathrm{d}t$ and $\sum_{t=1}^{n} g(t) \leq \int_0^{n} g(t) \,\mathrm{d}t$.
Also, since both integrals tend to infinity for growing~$n$, we can apply
l'H\^opital's rule and obtain
\begin{align*}
 \lim_{n \to \infty} \frac{\sum_{t=1}^{n} h(t)}{\sum_{t=1}^{n} g(t)} 
&\geq \lim_{n \to \infty} \frac{\int_{1}^{n+1} h(t) \,\mathrm{d}t}{\int_{0}^{n} g(t) \,\mathrm{d}t} \\
&= \lim_{n \to \infty} \frac{h(n+1)}{g(n)} \\
&= \lim_{n \to \infty} \frac{2(n+1)}{n+1+\log(n+2)+1.767} = 2. 
\end{align*}

As $\sum_{t=1}^{n} h(t)$ is the expected cost of an optimum online solution to the relaxed problem, it is a lower bound on the expected cost of an optimum online solution to the original problem, and we get the following bound.

\begin{theorem}
	Asymptotically, for a uniform distribution, no online algorithm has a competitive ratio below~$2$.
\end{theorem}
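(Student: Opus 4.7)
The plan is to prove the lower bound via a relaxation that decouples the problem into $n$ independent single-stopping problems. I would enlarge the admissible policy class by allowing the algorithm, upon arrival of applicant $i$, to irrevocably commit to hiring that applicant for an \emph{arbitrary} (not necessarily contiguous) subset of the remaining time slots $\{i, i+1, \dots, n\}$, with the cost of each slot equal to the cost of whichever applicant is assigned there. Any feasible online algorithm for the original contiguous-interval problem trivially induces a feasible policy of the same expected cost in this relaxation, so an online lower bound for the relaxation transfers to the original problem.

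The structural payoff of the relaxation is that the per-slot choices decouple: by linearity of expectation, the optimal relaxed online cost equals $\sum_{t=1}^{n} v(t)$, where $v(t)$ is the value of the independent single-stopping problem in which one inspects $\U[0,1]$ samples sequentially and must irrevocably select exactly one of the first $t$ so as to minimize the expected chosen value. This is precisely the minimization version of the Cayley--Moser problem. Invoking Gilbert and Mosteller~\cite{GilbertMosteller/66} together with the symmetry $x \leftrightarrow 1-x$ of $\U[0,1]$, their bound $\tau_t \geq 1 - 2/(t + \ln(t+1) + 1.767)$ on the optimal maximization value translates into the cost lower bound $v(t) \geq h(t) := 2/(t + \ln(t+1) + 1.767)$.

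Combined with $\Ex{\OPT_n} = \sum_{t=1}^{n} 1/(t+1)$ from Lemma~\ref{lem:opt_uniform}, the expected cost of any online algorithm for the original problem is at least $\sum_{t=1}^{n} h(t)$. Since $h$ and $g(t) := 1/(t+1)$ are positive and monotonically decreasing, a standard integral sandwich gives $\sum_{t=1}^{n} h(t) \geq \int_1^{n+1} h(s)\,\intd s$ and $\sum_{t=1}^{n} g(t) \leq \int_0^n g(s)\,\intd s$, and both integrals diverge as $n \to \infty$. An application of l'H\^opital's rule then reduces the limit of their ratio to $\lim_{n \to \infty} h(n+1)/g(n) = 2$, yielding the claim. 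The main subtlety I would spell out carefully is the decoupling step, verifying that in the relaxation the optimal online policy truly does solve the $n$ stopping subproblems independently on the same i.i.d.\ arrival stream; the remaining work is an invocation of Gilbert--Mosteller plus elementary calculus.
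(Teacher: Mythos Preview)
Your proposal is correct and follows essentially the same approach as the paper: the same relaxation to arbitrary (non-contiguous) subsets of future slots, the same decoupling into $n$ independent Cayley--Moser single-stopping problems via linearity of expectation, the same invocation of the Gilbert--Mosteller bound together with the $x \leftrightarrow 1-x$ symmetry, and the same integral/l'H\^opital argument to pass to the limit~$2$.
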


A plot of the lower bound $\frac{\sum_{t=1}^n h(t)}{H_{n+1}-1}$ as a function of $n$ shown in Figure~\ref{fig:plots} reveals that the lower bound converges very slowly. Even for $n=10,000$, the lower bound is still below $9/5$.

\subsection{Computational Lower Bound}

In this section, we give a computational lower bound based on an optimal online algorithm for uniformly distributed costs. This gives a slightly higher lower bound than the analytical bound above. We implemented the
optimal online algorithm presented in \S~\ref{sec:dp} in exact arithmetic, using
rounding to prevent numbers from getting too large. The algorithm achieves a
competitive ratio of 2.148 for an instance with 10,000 time steps, see Figure~\ref{fig:plots}. We describe the details on the computational lower bound in the following paragraphs. 

\begin{figure}
\begin{subfigure}[c]{0.495\textwidth}
\centering
\includegraphics[scale=0.65]{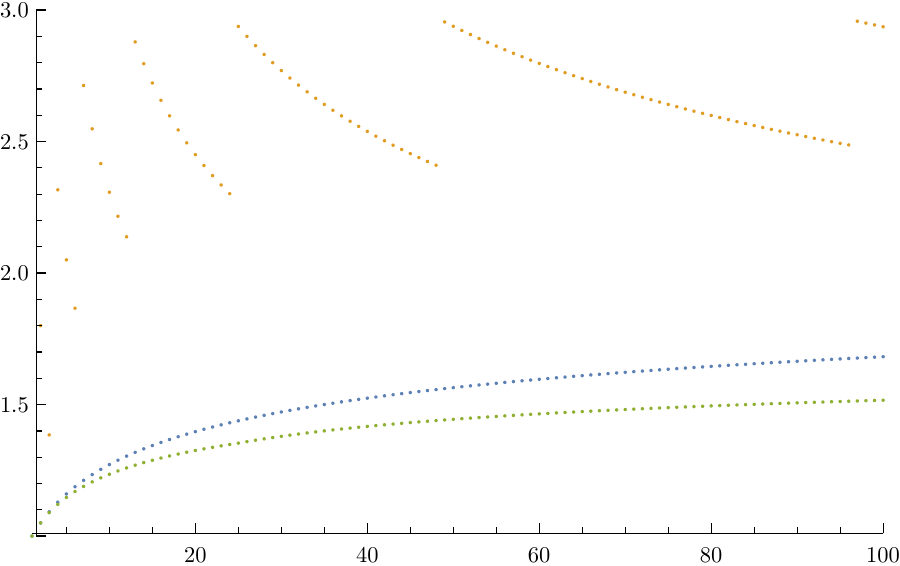}
\subcaption{$n=1,\dots,100$}
\end{subfigure}
\begin{subfigure}[c]{0.495\textwidth}
\centering
\psfrag{2}{$2000$}
\includegraphics[scale=0.65]{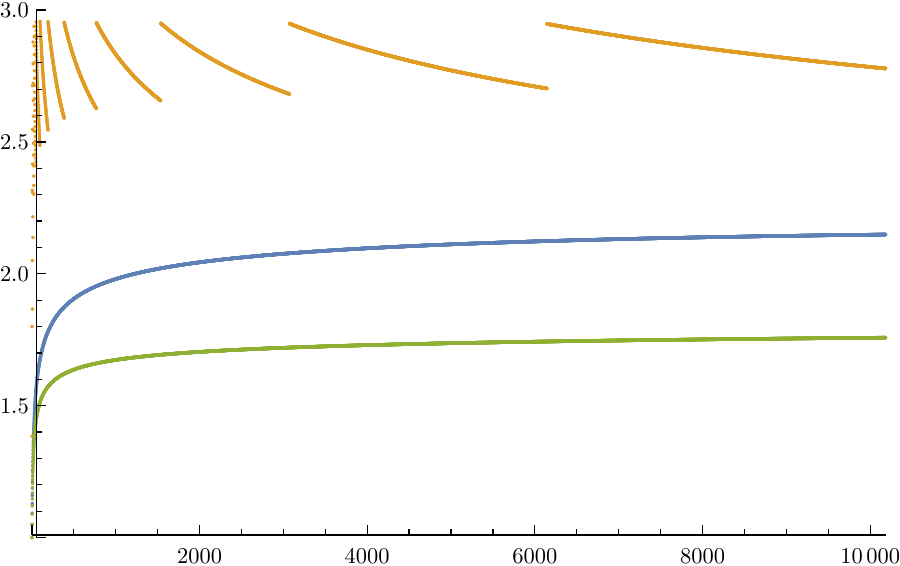}
\subcaption{$n=1,\dots,10,000$}
\end{subfigure}
\caption{Competitive ratio of our algorithm (orange), and the optimal online algorithm (blue) for uniformly distributed cost. The lower bound via the Gilbert-Mosteller problem is shown in green.\label{fig:plots}}
\end{figure}

For the uniform distribution, we know from Lemma~\ref{lem:opt_uniform} that the optimal offline
algorithm has expected cost $\mathcal{H}_{n+1} - 1$. On the other hand, the entry $C(n, 0)$ in the dynamic programming table
of the optimal online algorithm gives the optimal cost for an instance with $n$
days.
 Therefore, for every $n >
0$ the ratio $\frac{C(n, 0)}{\mathcal{H}_{n+1} - 1}$ provides a lower bound on the
best strict competitive ratio achievable by any online algorithm.

We implemented the optimal online algorithm for
the uniform case, and computed the expression above for increasing
values of $n$. 
In order to obtain a conclusive proof, one needs to implement the
algorithm in exact rational arithmetic. However, in doing so, we found that the size of the numerators and denominators grow very quickly in $n$, and already for $n=22$
both the numerator and the denominator have over a million digits. This makes it computationally
intractable to compute $\frac{C(n, 0)}{\mathcal{H}_{n+1} - 1}$ for large $n$.

To address this, we adopted a rounding scheme: after computing $C(i, j)$ for some
$i$ and $j$, we rounded the number down to another rational with a smaller
numerator and denominator, and then stored the rounded number in the dynamic
programming table. Since we only ever round down, the resulting costs computed
by the algorithm must always be cheaper than the expected cost of the optimal
online algorithm. Therefore, the computed value of $\frac{C(n,
0)}{\mathcal{H}_{n+1} - 1}$ is still a lower bound on the strict competitive
ratio that can be achieved. 

Ultimately, we found that for $n=10,000$ the competitive ratio can be no better
than~$2.148$.
The following theorem summarizes the results.

\begin{restatable}{theorem}{thmlowerbound}
    For a uniform distribution, no online algorithm has  strict competitive ratio below~$2.148$.\label{thm:lower_bound}
\end{restatable}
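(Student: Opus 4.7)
The plan is to turn the dynamic program of Section~\ref{sec:dp} into a verifiable numerical lower bound and then control the growth of the rational numbers that appear during its evaluation so that a certificate for $n = 10{,}000$ can actually be computed.

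First, I would observe that by the optimality statement for the DP in Section~\ref{sec:dp}, the value $C(n,0)$ is exactly the expected cost of the best online algorithm on an instance with $n$ time steps. Combined with Lemma~\ref{lem:opt_uniform}, this yields
\begin{equation*}
\sup_{m \in \mathbb{N}} \frac{\Ex{\ALG_m}}{\Ex{\OPT_m}} \;\geq\; \frac{C(n,0)}{\Hn_{n+1}-1}
\end{equation*}
for any fixed $n$ and any online algorithm $\ALG$. Hence it suffices to exhibit a single $n$ for which the right-hand side is at least $\ratiolower$.

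Second, I would specialize the DP to the uniform distribution. In this case, the integrands in \eqref{eqn:dynamic0} and \eqref{eqn:dynamicj} are pointwise minima of finitely many affine functions of $x$ on $[0,1]$, so each $C(i,j)$ is a rational number that can be evaluated in closed form. A straightforward implementation therefore produces $C(n,0)$ in exact rational arithmetic, and the quantity $\Hn_{n+1}-1$ is itself rational.

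Third, I need to address the main obstacle, which is purely computational: the numerators and denominators of $C(i,j)$ balloon as $i$ grows (already past a million digits by $i \approx 22$), so exact evaluation is infeasible for $n = 10{,}000$. The key observation is that we want a \emph{lower} bound on $C(n,0)$, so any systematic underapproximation of the DP table still yields a valid lower bound on the competitive ratio. Concretely, after computing $C(i,j)$ exactly from the stored (possibly rounded) values $C(i-1,\cdot)$, I would replace it by a rational $\widetilde C(i,j) \leq C(i,j)$ with bounded numerator and denominator (e.g.\ truncating its continued-fraction expansion, or rounding down to the nearest multiple of $1/D$ for a fixed large denominator $D$). Since the right-hand sides of \eqref{eqn:dynamic0} and \eqref{eqn:dynamicj} are monotone nondecreasing in the stored table entries, the rounded recursion produces values $\widetilde C(i,j) \leq C(i,j)$ throughout. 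Therefore $\widetilde C(n,0)/(\Hn_{n+1}-1)$ is still a rigorous lower bound on the strict competitive ratio.

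Finally, I would run the rounded DP up to $n = 10{,}000$ and verify (in exact rational arithmetic on the bounded-precision table) that $\widetilde C(10000,0)/(\Hn_{10001}-1) \geq \ratiolower$. The hard part is engineering the rounding scheme so that the accumulated underapproximation does not degrade the bound below $\ratiolower$ while simultaneously keeping numerators and denominators small enough to complete the $\Theta(n^3)$ arithmetic operations in reasonable time; once this trade-off is tuned, the numerical certificate gives the claimed bound.
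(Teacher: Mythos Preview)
Your proposal is correct and follows essentially the same approach as the paper: evaluate the DP of \S\ref{sec:dp} for the uniform distribution in exact rational arithmetic, round each table entry down to keep the numbers bounded, and use $\widetilde C(10000,0)/(\Hn_{10001}-1)$ as a rigorous lower bound on the strict competitive ratio. Your explicit monotonicity argument for why rounding down propagates through the recursion is exactly the justification behind the paper's one-line claim that ``since we only ever round down, the resulting costs \dots\ must always be cheaper.''
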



\section{Arbitrary Distributions}\label{arbitrary_distributions}

\begin{algorithm}[tb]

\caption{A $\ratioarbitrary$ competitive algorithm for arbitrary distributions.\label{alg:arb_dist}}
 $q \set 1$ \tcp*[r]{threshold quantile index}
 $\ct \set \delta_q$ \tcp*[r]{threshold cost}
 $t \set 1$ \tcp*[r]{time with threshold}
\For{$i \set 1, \dots, n$  }{ 
  $t \set t-1$\,\,\;
  \If{$x_i \le \ct$}{
     \While{$x_i \le \ct$}{
       $q \set \notnicefrac{q}{2}$; $\ct \set \delta_q$\,\,\;
     }
	 \If{$i + \nicefrac{2}{q} > n$}{
	   hire applicant~$i$ until time $n$\,\,\;
       {\bf stop}\,\,\;
     }  
     hire applicant $i$ for $\notnicefrac{2}{q}$ time steps\,\,\;
     $t \set \notnicefrac{1}{q}$\,\,\;
  }

  \ElseIf{$t=0$}{
     $q \set 2q$; $\ct \set \delta_q$; $t \set \notnicefrac{1}{q}$\,\,\;
  }
}
\end{algorithm}

In this section, we generalize Algorithm~\ref{alg:multi_uniform2} for the choice of $c=1$ to an arbitrary
distribution $X$ (cf.~Algorithm~\ref{alg:arb_dist}). Whenever
we halve our threshold in the course of Algorithm~\ref{alg:multi_uniform2},
we essentially halve the probability mass of $X$ below the
threshold (i.e., the probability that a drawn value lies below $\ct$).
To achieve the same effect with respect to an arbitrary distribution $X$, we consider \emph{quantiles}
$\delta_{q}$ of $X$, defined by the property that $\Prob{x\leq \delta_q}=q$ for continuous
distributions\footnote{In general, we need to define $\delta_{q}$ more carefully
via $\Prob{x\leq \delta_q}\geq q$ and $\Prob{x\geq\delta_{q}}\geq1-q$.}.
Algorithm~\ref{alg:arb_dist} changes the threshold by halving
and doubling $q$ and using $\ct=\delta_{q}$, which results in the
same behavior as Algorithm~\ref{alg:multi_uniform2} when $X$
is uniform. Therefore we can in principle analyze the algorithm for general distributions using a Markov chain similar to that in \S~\ref{sec:uniform2}. Specifically, the Markov chain again governs the evolution of the value $q = 2^{-j}$, $j=0,1,2,\dots$ and the corresponding threshold value $\tau = \delta_q$ in the course of Algorithm~\ref{alg:arb_dist}. After finding an applicant with a cost~$x_i$ below the threshold value $\tau$, the value of $q$ is halved until $\delta_q < x_i \leq \delta_{q-1}$. Since the applicant is then hired for $2/q$ time steps where $q$ is the value after the halving, we conclude that when hiring an applicant below the threshold of $\delta_q$ it is hired for $4/q$ time units. 
As the process stops at the latest when $4/q \geq n$, the Markov chain~$N'$ has states $A_0, A_1, \dots, A_k$ and $B_0, B_1, \dots, B_k$ with $k=\lceil\log n\rceil-2$, see Figure~\ref{fig:markov_arbitrary}.

\begin{figure}[bt]
    \centering
\begin{tikzpicture}[scale=0.3, every node/.style={scale=0.7}]
\matrix (m) [matrix of nodes, row sep=5.5em, column sep=4.5em,
	nodes={draw, rectangle, line width=1, text centered, minimum size=2em}]
    { $B_0$ & $B_1$ & $B_2$ &|[draw=none]| $\phantom{B_1}\dots\phantom{B_1}$  & $\!\!\!B_{k\!-\!1}\!\!\!$ & $B_{k}$ \\
      $A_0$ & $A_1$ & $A_2$ &|[draw=none]| $\phantom{A_1}\dots\phantom{A_1}$  & $\!\!\!A_{k\!-\!1}\!\!\!$  & $A_{k}$ \\ };
\draw[->] (m-2-1) -- (m-1-1) node [above,midway,sloped] {$1$};
\draw[->] (m-2-2) -- (m-1-2) node [above,midway,sloped] {$1-(1-\nicefrac{1}{2})^{2}$};
\draw[->] (m-2-3) -- (m-1-3) node [above,midway,sloped] {$1-(1-\nicefrac{1}{4})^{4}$};
\draw[->] (m-2-5) -- (m-1-5) node [above,midway,sloped,rotate=180] {\scriptsize$1-(1-2^{-k+1})^{2^{k-1}}$};
\draw[->] (m-2-6) -- (m-1-6) node [above,midway,sloped,rotate=180] {$1-(1-2^{-k})^{2^{k}}$};

\draw[->] (m-2-2) -- (m-2-1) node [below,midway] {$(1-\nicefrac{1}{2})^{2}$};
\draw[->] (m-2-3) -- (m-2-2) node [below,midway] {$(1-\nicefrac{1}{4})^{4}$};
\draw[->] (m-2-4) -- (m-2-3) node [below,midway] {$(1-\nicefrac{1}{8})^{8}$};
\draw[->] (m-2-5) -- (m-2-4) node [below,midway] {$(1\!-\!2^{-k\!+\!1})^{2^{k\!-\!1}}$};
\draw[->] (m-2-6) -- (m-2-5) node [below,midway] {$(1-2^{-k})^{2^{k}}$};

\foreach \x [count=\y] in {2,3,4,5,6}{
    \draw[->] (m-1-\y) -- (m-1-\x) node [above,midway,sloped] {$\nicefrac{1}{2}$};
    \draw[->] (m-1-\y) -- (m-2-\x) node [below,midway,sloped] {$\nicefrac{1}{2}$};
}

\end{tikzpicture}
\caption{Markov chain $N'$ modeling the expected number of hired applicants of
Algorithm~\ref{alg:arb_dist}.\label{fig:markov_arbitrary}}
\end{figure}
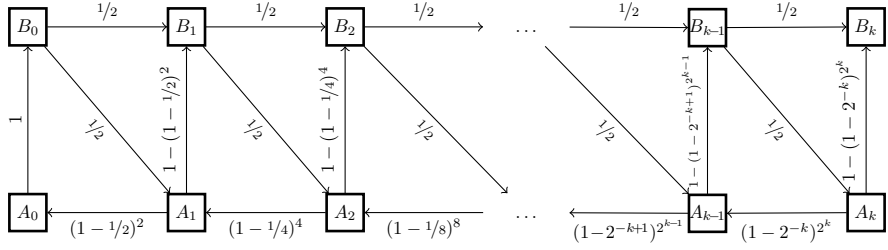

Again, we start in state $A_0$, $B_k$ is the absorbing state, and states $A_j,B_j$ correspond to states of the algorithm where $\ct=\delta_{2^{-j}}$. The transition probability from any state $A_j$ to state 
$B_j$ is bounded from below by $p=1-\nicefrac{1}{\e}$, since the probability of not finding any applicant of cost at most~$\delta_q$ within $1/q$ steps is
\begin{align*}
1-\P[x>\delta_{q}]^{1/q} = 1-(1-\P[x\leq\delta_{q}])^{1/q} \geq 1-(1-q)^{1/q} \geq 1-\nicefrac{1}{\e}.
\end{align*}


The analysis of the algorithm for arbitrary distributions, however, turns out to be more intricate than for the uniform case for two main reasons. 
First, for uniform distributions it was sufficient to count the total number of transitions from an $A$-state to a $B$-state as any such transition corresponds to the hiring of a candidate with a total cost of $2$. 
On the other hand, for general distributions we need to bound the number of transitions from state $B_j$ to state $A_{j+1}$ for each $j$ individually, as the resulting costs may differ among the different values of $j$. 
The following lemma provides a bound independent of~$j$.

\begin{restatable}{lemma}{lemuniformtransitions}
Starting in state $A_0$ of Markov chain $N'$, for each $j\in\{0, \ldots, k-1\}$ the expected number of transitions from $B_j$ to $A_{j+1}$ is at most $\frac{p}{3p-1}$, where $p = 1-\nicefrac{1}{\e}$.
\label{lem:uniform_transitions}
\end{restatable}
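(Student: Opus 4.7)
The plan is to mimic the proof of Lemma~\ref{lem:uniform_hitting_time_2}: first reduce the analysis to the homogeneous Markov chain $\hat{N}(p,k)$ with $p = 1 - 1/e$ via stochastic domination, and then analyze $\hat{N}(p,k)$ directly by means of flow conservation.

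For the first step, observe that every transition probability $\Pr[A_j \to B_j]$ in $N'$ is at least $p = 1 - 1/e$ (by the estimate given immediately before the statement of the lemma), while the transition probabilities on the $B$-states coincide with those of $\hat{N}(p,k)$. The state-by-state replacement argument from the proof of Lemma~\ref{lem:uniform_hitting_time_2} therefore carries over verbatim: swapping the transitions at an $A_j$-state by those of $\hat{N}(p,k)$ increases the drift away from the unique absorbing state $B_k$, and hence increases the expected number of visits of every state. Iterating over all $j$ bounds the expected number of $B_j \to A_{j+1}$ transitions in $N'$ by the corresponding quantity in $\hat{N}(p,k)$.

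For the second step, work in $\hat{N}(p,k)$ and write $a_j$, $b_j$ for the expected number of visits to $A_j$ and $B_j$ when starting from $A_0$. Since the unit mass at $A_0$ is eventually absorbed in $B_k$, the net upward flow across the cut separating $\{A_i, B_i : i \le j\}$ from $\{A_i, B_i : i > j\}$ equals $1$; the only upward crossings originate from $B_j$ (each of the two outgoing edges carrying probability $1/2$), and the only downward crossings are $A_{j+1} \to A_j$ (probability $1-p$), so
\begin{align*}
b_j - (1-p)\, a_{j+1} \;=\; 1 \qquad \text{for all } 0 \le j \le k-1.
\end{align*}
Flow conservation at $B_{j+1}$ yields $b_{j+1} = p\, a_{j+1} + \tfrac12 b_j$, and since $A_k$ can only be entered from $B_{k-1}$, we have $a_k = \tfrac12 b_{k-1}$. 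Eliminating $a_{j+1}$ between the first two equations produces the linear recursion
\begin{align*}
b_j \;=\; \frac{2(1-p)}{1+p}\, b_{j+1} + \frac{2p}{1+p}, \qquad b_{k-1} \;=\; \frac{2}{1+p}.
\end{align*}

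This recursion has fixed point $b^{*} = \tfrac{2p}{3p-1}$, and its multiplicative coefficient $\tfrac{2(1-p)}{1+p}$ is strictly less than $1$ because $p > 1/3$. A short computation gives $b^{*} - b_{k-1} = \tfrac{2(1-p)^2}{(1+p)(3p-1)} \ge 0$, so backward induction from $j = k-1$ yields $b_j \le b^{*}$ for every $0 \le j \le k-1$. Since the chain transitions from $B_j$ to $A_{j+1}$ with probability $1/2$ on each visit, the expected number of such transitions is $\tfrac12 b_j \le \tfrac{p}{3p-1}$, proving the claim. The main technical hurdle will be setting up and solving the flow equations cleanly; the stochastic domination step is essentially identical to the one used in the earlier lemmas and should go through with only cosmetic changes.
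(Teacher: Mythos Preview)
Your proof is correct, and the reduction step to $\hat{N}(p,k)$ matches the paper verbatim. The analysis of $\hat{N}(p,k)$, however, is genuinely different from the paper's route (Lemma~\ref{lem:markov_b_transitions2}). The paper bounds the return probability to $B_j$: it shows that once the chain leaves $B_j$, the probability of ever hitting $A_j$ again (and hence of revisiting $B_j$) is at most $\frac{1-p}{2p}$, by setting up and solving a two-parameter recursion for the hitting probabilities $a_i(\cdot),b_i(\cdot)$; the geometric-series bound on the number of visits then yields $b_j \le \frac{2p}{3p-1}$. You instead exploit the unit net flow across each level cut together with node-balance at $B_{j+1}$ to obtain a single first-order recursion $b_j = \frac{2(1-p)}{1+p} b_{j+1} + \frac{2p}{1+p}$ with terminal value $b_{k-1}=\frac{2}{1+p}$, and observe that this contracts toward the fixed point $\frac{2p}{3p-1}$. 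Your argument is more direct and in fact pins down the exact values of the $b_j$, whereas the paper only needs (and obtains) an upper bound; on the other hand, the paper's return-probability viewpoint makes the geometric structure of the visits more transparent. Both methods land on the same numerical bound $\frac{p}{3p-1}$ for the expected number of $B_j\to A_{j+1}$ transitions.
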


\begin{proof}{Proof.}
With the same arguments as in the proof of Lemma~\ref{lem:uniform_hitting_time_2}, we obtain an upper bound on the expected number of transitions from $B_j$ to $A_{j+1}$ by considering the Markov chain $\hat{N}(p,k)$ with homogenous transition probability $p = 1 - \nicefrac{1}{\e}$ and $k = \lceil \log n \rceil -2$. For the latter Markov chain, Lemma~\ref{lem:markov_b_transitions2} proven in \S~\ref{sec:markov_b} establishes the result.\hfill\Halmos	
\end{proof}

The second main issue when analyzing the competitive ratio of the algorithm is the lack of a concrete value for $\OPT_n$ for general distributions. Thus, we need the following lemma that expresses $\E\bigl[\OPT_n]$ as a sum over conditional expectations of the form $\E\bigl[ x \,|\, \delta_{2^{-(r+1)}} < x \leq \delta_{2^{-r}}\bigr]$. 

\begin{restatable}{lemma}{lemarbdistofflineopt}
\label{lem:arb_dist_offline_opt}
Let $n \in \mathbb{N}$, $k \!=\! \lceil \log n\rceil \!-\! 2$ and $\eta := \frac{5}{2} - \frac{55}{6\e^2} \approx 1.259$.
Then, we have
$$
\E\bigl[\OPT_n \bigr]
\geq \sum_{r=0}^{k-1} 2^{r-1} \E\bigl[x \,\big|\, \delta_{2^{-(r+1)}} < x \leq \delta_{2^{-r}}\bigr] + \eta 2^{k-1}\E\bigl[x \,\big|\, x\leq \delta_{2^{-k}} \bigr].
$$
\end{restatable}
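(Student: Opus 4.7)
The plan is to re-express $\E[\OPT_n]$ in quantile coordinates and then bound it interval-by-interval against the RHS. Since the prophet employs the running minimum $M_i := \min\{x_1,\dots,x_i\}$ at time step $i$, we have $\E[\OPT_n] = \sum_{i=1}^n \E[M_i]$. Writing $\E[M_i]$ via the density $f_{M_i}(x)=i(1-F(x))^{i-1}f(x)$, interchanging sum with integral, and substituting $u=F(x)$, I would derive
\begin{equation*}
 \E[\OPT_n] = \int_0^1 \delta_u\,\phi(u)\,\intd u, \qquad \phi(u) := \sum_{i=1}^n i(1-u)^{i-1} = \frac{1 - (1-u)^n(1+nu)}{u^2}.
\end{equation*}

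Writing $q_r = 2^{-r}$ and noting the identities $\int_{q_{r+1}}^{q_r}\delta_u\intd u = 2^{-(r+1)}\E[x\mid x\in I_r]$ and $\int_0^{q_k}\delta_u\intd u = 2^{-k}\E[x\mid x\leq \delta_{q_k}]$, the lemma reduces to establishing
\begin{equation*}
 \sum_{r=0}^{k-1}\int_{q_{r+1}}^{q_r}\delta_u\phi(u)\intd u + \int_0^{q_k}\delta_u\phi(u)\intd u \geq \sum_{r=0}^{k-1} 4^r\int_{q_{r+1}}^{q_r}\delta_u\intd u + \eta\cdot 2^{2k-1}\int_0^{q_k}\delta_u\intd u,
\end{equation*}
obtained by splitting the master integral over the dyadic partition $[0,1] = \bigsqcup_{r=0}^{k-1}(q_{r+1},q_r] \cup (0,q_k]$.

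The choice $k=\lceil\log n\rceil-2$ forces $nq_r \geq 4$ for $r \leq k-1$ and $nq_k \in (2,4]$, so the defect $g(u):=(1-u)^n(1+nu) \leq e^{-nu}(1+nu)$ is bounded by $5/e^4$ on every interior interval and by $3/e^2$ on $(0,q_k]$. Rewriting $\phi(u)=(1-g(u))/u^2$ and recalling $1/u^2 \geq 4^r$ on $(q_{r+1},q_r]$, the interior contributions furnish the ``bulk'' $4^r\int_{q_{r+1}}^{q_r}\delta_u\intd u$ up to a small defect term. A convenient way to organize this is through the dual identity
\begin{equation*}
 \int_0^{1}\delta_u\phi(u)\intd u = \int_0^1\psi(u)\intd\delta_u \qquad\text{with}\qquad \psi(u):=\sum_{i=1}^n (1-u)^i = \frac{(1-u)(1-(1-u)^n)}{u},
\end{equation*}
obtained by integration by parts, since $\psi$ is decreasing while $\intd\delta_u$ is a non-negative measure.

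The main obstacle is the tail inequality, because in the worst case $nq_k \to 2^+$ the naive pointwise bound $\phi(u)\geq \phi(q_k) \to (1-3/e^2)\cdot 4^k \approx 0.594\cdot 4^k$ falls just short of the target $\tfrac{\eta}{2}\cdot 4^k\approx 0.630\cdot 4^k$, so a purely pointwise approach does not suffice. Closing this gap requires exploiting the growth of $\phi$ as $u$ decreases below $q_k$, together with propagating the small interior deficits into the tail via the Stieltjes form. Writing $\int_0^{q_k}\delta_u\phi(u)\intd u = \int_0^{q_k}\psi(u)\intd\delta_u - \delta_{q_k}\psi(q_k)$, using the lower bound $\psi(u) \geq n(1-u)^n$ (which follows from concavity of $u\mapsto 1-(1-u)^n$), and optimizing the resulting estimate over $z=nq_k\in(2,4]$ then delivers exactly the constant $\eta = 5/2-55/(6e^2)$.
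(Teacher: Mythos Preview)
Your reformulation is correct and genuinely different from the paper's combinatorial decomposition. Passing to quantile coordinates and reducing the lemma to the integral inequality
\[
\int_0^1 \delta_u\,\phi(u)\,\intd u \;\geq\; \int_0^1 \delta_u\,w(u)\,\intd u
\]
for every nondecreasing $\delta_u\geq 0$, where $w$ is the piecewise-constant target weight, is clean; via the Stieltjes form it is equivalent to the pointwise tail-integral condition $\psi(t)\geq W(t):=\int_t^1 w(u)\,\intd u$ for all $t\in[0,1]$. The paper instead conditions on the events $E_{i,r,=1}$ and $E_{i,r,>1}$, uses the shift $\E[\min\mid E_{i,r,>1}]\geq \delta_{q_{r+1}}\geq \E[x\mid x\in I_{r+1}]$ to push the ``$>\!1$'' mass one interval down, and then proves the resulting coefficient bounds $\alpha(r,n)\geq 2^{r-1}$ (interior) and $\alpha(k,n)\geq\eta\,2^{k-1}$ (tail) by explicit summation and Taylor expansion. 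Your route is more analytic and, because it avoids that lossy shift, could in principle yield a larger constant than the paper's $\eta$.

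However, your tail argument has a real gap. Restricting the Stieltjes identity to $[0,q_k]$ as you do gives
\[
\int_0^{q_k}\bigl[\psi(u)-\psi(q_k)-\eta\,2^{2k-1}(q_k-u)\bigr]\,\intd\delta_u \;\geq\; 0,
\]
and since $\psi$ is convex the worst secant slope is attained as $u\uparrow q_k$, forcing $\phi(q_k)\geq \eta\,2^{2k-1}$. But in the critical regime $nq_k\to 2^+$ one has $\phi(q_k)\approx (1-3e^{-2})\cdot 4^k\approx 0.594\cdot 4^k$, while $\eta\,2^{2k-1}\approx 0.630\cdot 4^k$; the tail inequality therefore \emph{cannot} hold on its own and must borrow surplus from $t>q_k$. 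Your phrase ``propagating the small interior deficits into the tail'' points in the wrong direction, and the tool you cite, $\psi(u)\geq n(1-u)^n$, is too weak: at $u=q_k$ with $nq_k=2$ it gives $\psi(q_k)\gtrsim n e^{-2}\approx 0.135\,n$, below $W(q_k)\approx 2^{k-1}\approx 0.25\,n$. Finally, nothing in your sketch produces the specific constant $5/2-55/(6e^2)$; in the paper this value arises from a second-order Taylor expansion of $(1-2/t)^{2t+1}$ in their explicit formula for $\alpha(k,n)$, and your framework, if carried through via the global condition $\psi(t)\geq W(t)$, would yield a different (in fact larger) admissible constant rather than ``exactly'' $\eta$. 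To complete your approach you must actually prove $\psi(t)\geq W(t)$ for all $t\in[0,1]$, not treat the tail in isolation.
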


\begin{proof}{Proof.}
By linearity of expectation $\E\bigl[\OPT_n\bigr] = \sum_{i \in [n]} \E\bigl[\min \{x_1,\dots,x_i\} \bigr]$ where for $i \in [n]$ the random variables $x_1,\dots,x_i$ are drawn independently from $X$. To prove the claim we proceed to express for fixed $i \in [n]$ the expectation $\E[\min\{x_1,\dots,x_i\}]$ in terms of $\E\bigl[ x \,\big|\, x \leq \delta_{2^{-k}}\bigr]$ and $\E[x \,|\, \delta_{2^{-(r+1)}} < x \leq \delta_{2^{-r}}]$ with $r \in \{0,\dots,k-1\}$. To this end, for $i \in [n]$ and $r \in \{0,\dots,k-1\}$ let
\begin{align*}
E_{i,r,=1} &= \Bigl[\bigl| \{x_1,\dots,x_i\} \cap (\delta_{2^{-(r+1)}},\delta_{2^{-r}}]\bigr| = 1 \text{ and } \bigl|\{x_1,\dots,x_i\} \cap (\delta_{2^{-(r+1)}},\delta_1]\bigr| = i\Bigr]
\intertext{be the stochastic event that the minimum of the $i$ draws $x_1\sim X,\dots,x_i\sim X$ is in the interval $(\delta_{2^{-(r+1)}},\delta_{2^{-r}}]$ and none of the other $i-1$ draws is in that interval. Additionally, let}
E_{i,k,=1} &= \Bigl[\bigl| \{x_1,\dots,x_i\} \cap [0,\delta_{2^{-k}}]\bigr| = 1\Bigr].
\intertext{Further, for $r \in \{0,\dots,k-1\}$, let}
E_{i,r,>1} &= \Bigl[\bigl| \{x_1,\dots,x_i\} \cap (\delta_{2^{-(r+1)}},\delta_{2^{-r}}]\bigr| > 1 \text{ and } \bigl|\{x_1,\dots,x_i\} \cap (\delta_{2^{-(r+1)}},\delta_1]\bigr| = i\Bigr]
\intertext{be the stochastic event that the minimum of the $i$ draws $x_1\sim X,\dots,x_i\sim X$ is in the interval $(\delta_{2^{-(r+1)}},\delta_{2^{-r}}]$ and at least one of the other $i-1$ draws is in that interval. Similarly, let}
E_{i,k,>1} &= \Bigl[\bigl| \{x_1,\dots,x_i\} \cap [0,\delta_{2^{-k}}]\bigr| > 1\Bigr].
\end{align*}
For fixed $i$ the events $E_{i,r,=1}$ and $E_{i,r,>1}$ for $r \in \{0,\dots,k\}$ are clearly disjoint. Since~$\sum_{r=0}^{k}(\P[E_{i,r,=1}]+\P[E_{i,r,>1}])=1$, by the law of total expectation, we have
\begin{align*}
\E\bigl[\min\{x_1,\dots,x_i\}\bigr] &= \sum_{r=0}^{k} \Bigl( \E\bigl[\min\{x_1,\dots,x_i\} \,\big|\, E_{i,r,=1}\bigr] \P\bigl[E_{i,r,=1}\bigr]\\
&\quad +  \E\bigl[\min\{x_1,\dots,x_i\} \,\big|\, E_{i,r,>1}\bigr] \P\bigl[E_{i,r,>1}\bigr] \Bigr).
\end{align*}
We observe that $\E\bigl[\min\{x_1,\dots,x_i\} \,\big|\, E_{i,r,=1}\bigr] = \E\bigl[x \,\big|\, \delta_{2^{-(r+1)}} < x \leq \delta_{2^{-r}}\bigr]$ for all $r \in \{0,\dots,k-1\}$ and, similarly, $\E\bigl[ \min\{x_1\dots,x_i\} \, \big|\, E_{i,k,=1}\bigr] = \E\bigl[x \,\big|\, x \leq \delta_{2^{-k}}\bigr]$. In addition, we have $\E\bigl[\min\{x_1,\dots,x_i\} \,\big|\, E_{i,r,>1}\bigr] \geq \delta_{2^{-(r+1)}} \geq \E\bigl[x \,\big|\, \delta_{2^{-(r+2)}} < x \leq \delta_{2^{-(r+1)}}\bigr]$ for all $r \in \{0,\dots,k-1\}$. We then obtain
\begin{align*}
\E\bigl[\min\{x_1,\dots,x_i\}\bigr] &\geq \E\bigl[x \,\big|\, \delta_{2^{-1}} < x \leq \delta_{1}\bigr] \P\bigl[E_{i,0,=1}\bigr]\\[6pt]
&\quad + \sum_{r=1}^{k-1} \E\bigl[x \,\big|\, \delta_{2^{-(r+1)}} \!<\! x \!\leq\! \delta_{2^{-r}}\bigr]	\bigl(\P\bigl[ E_{i,r,=1}\bigr] \!+\! \P\bigl[ E_{i,r-1,>1}\bigr] \bigr)\\[6pt]
&\quad + \E\bigl[ x \,\big|\, x \leq \delta_{2^{-k}} \bigr] (\P\bigl[ E_{i,k,=1} \bigr] + \P\bigl[ E_{i,k-1,>1} \bigr]),
\end{align*}
and hence
\begin{align*}
\E\bigl[\OPT_n\bigr] &\geq \E\bigl[x \,\big|\, \delta_{2^{-1}} < x \leq \delta_{1}\bigr] \sum_{i=1}^n \P\bigl[E_{i,0,=1}\bigr]\\[6pt]
&\quad + \sum_{r=1}^{k-1} \Biggl( \E\bigl[x \,\big|\, \delta_{2^{-(r+1)}} < x \leq \delta_{2^{-r}}\bigr]	\sum_{i=1}^n \Bigl(  \P\bigl[ E_{i,r,=1}\bigr] + \P\bigl[ E_{i,r-1,>1}\bigr] \Bigr) \Biggr) \\[6pt]
&\quad + \E\bigl[ x \,\big|\, x \leq \delta_{2^{-k}} \bigr] \sum_{i=1}^n \Bigl(\P\bigl[ E_{i,k,=1} \bigr] + \P\bigl[ E_{i,k-1,>1} \bigr] \Bigr).
\end{align*}
%
The probability that a single draw falls in the range~$(\delta_{2^{-(r+1)}}, \delta_{2^{-r}}]$ and $i-1$ draws are larger than $\delta_{2^{-r}}$ is $2^{-(r+1)}(1-2^{-r})^{i-1}$. 
Since there are~$i$ possibilities which of the draws falls in this range, we have
\begin{align*}
\P\bigl[ E_{i,r,=1}\bigr] &=
\begin{cases}
0 & \text{ if $r=0$ and $i>1$}\\
1/2 & \text{ if  $r=0$ and $i=1$ }\\
i 2^{-(r+1)}(1-2^{-r})^{i-1} & \text{ if $r \in \{1,\dots,k-1\}$ }\\
i2^{-r}(1-2^{-r})^{i-1} & \text{ if $r=k$. }
\end{cases}
\intertext{Similarly, for $r \in \{1,\dots,k\}$, we have}
\P \bigl[ E_{i,r-1,>1} \bigr]
&= (1-2^{-r})^i - (1-2^{-(r-1)})^i - \P\bigl[E_{i,r-1,=1}\bigr]\\[6pt]
&= (1-2^{-r})^i - (1-2^{-(r-1)})^i - i2^{-r}(1-2^{-(r-1)})^{i-1}.
\end{align*}
We then obtain
\begin{align*}
\E[\OPT_n\bigr] &\geq \E\bigl[x \,\big|\, \delta_{2^{-1}} < x \leq \delta_{1} \bigr] \cdot 2^{-1} \\[6pt]
&\quad + \sum_{r =1}^{k-1} \E\bigl[x \,\big|\, \delta_{2^{-(r+1)}} < x \leq \delta_{2^{-r}}\bigr]\cdot \alpha(r,n)\\[6pt]
&\quad + \E\bigl[ x \, \big|\, x \leq \delta_{2^{-k}} \bigr] \cdot \alpha(k,n)
\end{align*}
with
\begin{align}
 \alpha(r,n) &= \sum_{i =1}^n \Bigl( \P\bigl[ E_{i,r,=1} \bigr] + \P\bigl[ E_{i,r-1,>1} \bigr] \Bigr).
 \end{align}
It remains to show that $\alpha(r,n) \geq 2^{r-1}$ for $1\leq r <k$ and $\alpha(k,n) \geq \eta 2^{k-1}$. We have
\begin{align}
 \alpha(k,n) &= \sum_{i =1}^n \Bigl( \P\bigl[ E_{i,r,=1} \bigr] + \P\bigl[ E_{i,r-1,>1} \bigr] \Bigr) \label{eq:alpha}
\intertext{which gives} \notag
 \alpha(r,n) &= \sum_{i=1}^n i 2^{-r}(1\!-\!2^{-r})^{i\!-\!1} + (1\!-\!2^{-r})^i - (1\!-\!2^{-(r-1)})^i - i2^{-r}(1\!-\!2^{-(r-1)})^{i\!-\!1}
\intertext{if $r=k$, and}
 \alpha(r,n) &= \sum_{i=1}^n i 2^{-(r+1)}(1\!-\!2^{-r})^{i\!-\!1} + (1\!-\!2^{-r})^i - (1\!-\!2^{-(r-1)})^i - i2^{-r}(1\!-\!2^{-(r-1)})^{i\!-\!1}\notag
\end{align}
for $r\neq k$.

To prove the lemma, we proceed to show that $\inf_{n \in \mathbb{N}} \inf_{r \in \{0,\dots,\lceil \log n \rceil -3\}} \frac{\alpha(r,n)}{2^{r-1}} \geq 1$ and $\inf_{n \in \mathbb{N}}  \frac{\alpha(\lceil \log n\rceil -2,n)}{2^{\lceil \log n\rceil -3}} \geq \eta$. 
Differentiating the well-known formula for the geometric sum $\sum_{i=1}^n a^i = \frac{a-a^{n+1}}{1-a}$, we obtain $\sum_{i=1}^n i a^{i-1} = \frac{na^{n+1} - (n+1)a^n + 1}{(1-a)^2}$. We use both formulas to simplify all partial sums. For a binary event~$E$, we denote by $\chi_E$ the indicator variable for event $E$, i.e., $\chi_E = 1$ if $E$ is true, and $\chi_E = 0$, otherwise. For $r \in \{1,\dots,k\}$, we then obtain
\begin{align*}
\alpha(r,n) &= (1 + \chi_{r=k})2^{r-1} \bigl[n(1-2^{-r})^{n+1} - (n+1) (1-2^{-r})^n +1\bigr]\\
&\quad + 2^r\bigl[1-2^{-r} - (1-2^{-r})^{n+1}\bigr]\\
& \quad - 2^{r-1}\bigl[1-2^{-(r-1)}\!- (1-2^{-(r-1)})^{n+1}\bigr]\\
&\quad - 2^{r-2}\bigl[n(1-2^{-(r-1)})^{n+1} - (n+1)(1-2^{-(r-1)})^n +1\bigr]\\[6pt]
&= 2^{r-1} \Biggl[ (1+\chi_{r=k})\Bigl(n(1-2^{-r})^{n+1} - (n+1) (1-2^{-r})^n + 1\Bigr)\\
&\quad + 2 - 2^{-(r-1)} - 2(1-2^{-r})^{n+1}\\
&\quad - 1 + 2^{-(r-1)} + (1-2^{-(r-1)})^{n+1}\\
&\quad - \frac{n}{2}(1-2^{-(r-1)})^{n+1} + \frac{n+1}{2}(1-2^{-(r-1)})^n - \frac{1}{2}\Biggr]\\[6pt]
&= 2^{r-1}\Biggl[\frac{3}{2} +\chi_{r=k}\\
&\quad + (1-2^{-r})^n \biggl(n(1+\chi_{r=k})(1-2^{-r}) - (n+1)(1+\chi_{r=k})- 2(1-2^{-r})\biggr)\\
&\quad + (1-2^{-(r-1)})^n \biggl( (1-2^{-(r-1)}) - \frac{n}{2}(1-2^{-(r-1)})+ \frac{n+1}{2}\biggr)\Biggr]\\[6pt]
&= 2^{r-1} \biggl[\frac{3}{2} + \chi_{r=k} + (1-2^{-r})^n(2^{-(r-1)} - n2^{-r}(1+ \chi_{r=k}) - 3 - \chi_{r=k})\\[6pt]
&\quad + (1-2^{-(r-1)})^n \biggl( \frac{3}{2} +n2^{-r}-2^{-(r-1)} \biggr) \biggr]\\[6pt]
&= 2^{r-1} \biggl[ \frac{3}{2} + \chi_{r=k} +\biggl((1-2^{-r})^n - (1-2^{-(r-1)})^n\biggr) \biggl(2^{-(r-1)} - n2^{-r} - \frac{3}{2}\biggr)\\[6pt]
&\quad - \biggl(\frac{3}{2}+\chi_{r=k}(1+n2^{-r})\biggr)(1-2^{-r})^n \biggr].
\end{align*}
As the probabilities are non-negative, $\alpha(r,n) =  \sum_{i=1}^n (\P[E_{i,r,=1}] + \P[E_{i,r-1,>1}])$ is  non-decreasing in $n$ for all $r \in \{1,\dots,k\}$. We proceed to show that $\alpha(r,n) \geq 2^{r-1}$ for all $r \in \{0,\dots,k-1\} = \{0,\dots, \lceil \log n \rceil -3\}$. Since~$r,n$ are integral, $r\leq \lceil \log n\rceil - 3$ implies~$n\geq 2^{r+2} + 1$.
Using monotonicity of~$\alpha(r,n)$ and substituting~$t := 2^r$, we have
\begin{align*}
&\inf\nolimits_{n \in \mathbb{N}} \inf\nolimits_{r \in \{1,\dots,\lceil \log n \rceil -3\}} \frac{\alpha(r,n)}{2^{r-1}}\\[6pt]
&=\inf\nolimits_{r\in\mathbb{N}} \inf\nolimits_{n\in\{2^{r+2} + 1, \dots\}} \frac{\alpha(r,n)}{2^{r-1}}\\[6pt]
&= \inf\nolimits_{r \in \mathbb{N}} \frac{\alpha(r, 2^{r+2}+1)}{2^{r-1}}\\[6pt]
&= \inf\nolimits_{r \in \mathbb{N}} \Biggl\{ \frac{3}{2}  \!+\! \biggl((1\!-\!2^{-r})^{2^{r+2}\!+\!1} \!-\! (1\!-\!2^{-(r-1)})^{2^{r+2}\!+\!1}\biggr) \!\biggl(2^{-(r\!-\!1)} \!-\! (2^{r+2}+1)2^{-r} \!-\! \frac{3}{2} \biggr)\\
&\quad\quad\quad\quad\quad-\frac{3}{2}(1-2^{-r})^{2^{r+2}+1} \Biggr\}\\[6pt]
&\geq \inf\nolimits_{t \in \mathbb{N}} \Biggl\{ \frac{3}{2}  + \left(\biggl(1\!-\!\frac{1}{t}\biggr)^{\!4t\!+\!1} - \biggl(1\!-\!\frac{2}{t}\biggr)^{\!\!4t\!+\!1}\right) \!\Biggl(\frac{2}{t} \!-\! 4 \!-\! \frac{1}{t} \!-\! \frac{3}{2} \Biggr) -\frac{3}{2}\Bigl(1\!-\!\frac{1}{t}\Bigr)^{\!4t\!+\!1} \Biggr\}\\[6pt]
&= \inf\nolimits_{t \in \mathbb{N}} \Biggl\{ \frac{3}{2}  + \Biggl(\biggl(1-\frac{1}{t}\biggr)^{\!4t+1} - \biggl(1-\frac{2}{t}\biggr)^{\!4t+1}\Biggr) \Biggl(\frac{1}{t} - \frac{11}{2} \Biggr) -\frac{3}{2}\Bigl(1-\frac{1}{t}\Bigr)^{\!\!4t+1} \Biggr\}.
\intertext{The first order Taylor approximation of the function $f(x) = x^{4t+1}$ at $x = 1-\nicefrac{1}{t}$ gives $f(1-\nicefrac{2}{t})=(1 - \nicefrac{2}{t})^{4t+1} = (1- \nicefrac{1}{t})^{4t+1} - \frac{4t+1}{t} (1- \nicefrac{1}{t})^{4t} + R_2$, with $R_2 \geq 0$ as $f$ is convex. This implies} 
&\inf\nolimits_{n \in \mathbb{N}} \inf\nolimits_{r \in \{1,\dots,\lceil \log n \rceil -2\}} \frac{\alpha(r,n)}{2^{r-1}}\\[6pt]
&\geq \inf\nolimits_{t \in \mathbb{N}} \Biggl\{ \frac{3}{2} -  \frac{4t+1}{t}\biggl(\frac{11}{2} - \frac{1}{t}\biggr)\biggl(1-\frac{1}{t}\biggr)^{\!\!4t}  -\frac{3}{2}\biggl(1-\frac{1}{t}\biggr)\biggl(1-\frac{1}{t}\biggr)^{\!\!4t} \Biggr\}\\[6pt]
&= \inf\nolimits_{t \in \mathbb{N}} \Biggl\{ \frac{3}{2} -  \biggl(\frac{47}{2} - \frac{1}{t^2}\biggr)\biggl(1-\frac{1}{t}\biggr)^{\!\!4t} \Biggr\} \\[6pt]
&\geq \inf\nolimits_{t \in \mathbb{N}} \Biggl\{ \frac{3}{2} -  \frac{47}{2}\biggl(1-\frac{1}{t}\biggr)^{\!\!4t} \Biggr\}.
\end{align*}
As the latter expression is decreasing in $t$, we have $$\inf\nolimits_{n \in \mathbb{N}} \inf\nolimits_{r \in \{1,\dots,\lceil \log n \rceil -3\}} \frac{\alpha(r,n)}{2^{r-1}} \geq \lim_{t \to \infty} \left\{\frac{3}{2} - \frac{47}{2}(1-\frac{1}{t})^{4t}\right\} = \frac{3}{2} - \frac{47}{2\e^4} \approx 1.069 > 1.$$
It remains to show that $\frac{\alpha(k,n)}{2^{k-1}} \geq \eta$. For $r=k = \lceil \log n \rceil -2$, we have
\begin{align*}
\alpha(k,n) &= 2^{k-1} \biggl[\frac{5}{2} +\biggl((1-2^{-k})^n - (1-2^{-(k-1)})^n\biggr) \biggl(2^{-(k-1)} - n2^{-k} - \frac{3}{2}\biggr)\\
&\quad - \biggl(\frac{5}{2}+n2^{-k}\biggr)(1-2^{-k})^n \biggr].
\end{align*}
Again, as $\alpha(r,n)$ is non-decreasing in $n$, this value is minimal for $n=2^{k+1}+1$. Substituting $t=2^k$, we obtain
\begin{align*}
&\inf_{n \in \mathbb{N}} \frac{\alpha(\lceil \log n \rceil -2,n)}{2^{\lceil \log n \rceil -3}}\\[6pt]
&= \inf_{k \in \mathbb{N}} \frac{\alpha(k,2^{k+1}+1)}{2^{k-1}}\\[6pt]
&= \inf_{k \in \mathbb{N}} \Biggl\{ \frac{5}{2} +\biggl((1\!-\!2^{-k})^{2^{k+1}\!+\!1} - (1\!-\!2^{-(k-1)})^{2^{k+1}\!+\!1}\biggr) \biggl(2^{-(k\!-\!1)} - (2^{k+1}\!+\!1)2^{-k} - \frac{3}{2}\biggr) \\[6pt]
&\quad - \biggl(\frac{5}{2}+(2^{k+1}+1)2^{-k}\biggr)(1-2^{-k})^{2^{k+1}+1} \Biggr\}\\[6pt]
&\geq \inf_{\substack{t \in \mathbb{N}\\t\geq 2}} \Biggl\{ \frac{5}{2} +\Biggl( \!\biggl(1\!-\!\frac{1}{t}\biggr)^{\!\!2t\!+\!1} \!\!- \biggl(1\!-\!\frac{2}{t}\biggr)^{\!\!2t\!+\!1}\Biggr) \!\Biggl(\frac{2}{t} \!-\! 2 -\frac{1}{t} \!-\! \frac{3}{2}\Biggr) \!-\! \biggl(\frac{5}{2} \!+\! 2 \!+\! \frac{1}{t}\biggr)\biggl(1\!-\!\frac{1}{t}\biggr)^{\!\!2t\!+\!1} \Biggr\}\\[6pt]
&= \inf_{\substack{t \in \mathbb{N}\\t\geq 2}} \Biggl\{ \frac{5}{2} +\Biggl( \biggl(1-\frac{1}{t}\biggr)^{\!\!2t+1} - \biggl(1-\frac{2}{t}\biggr)^{\!\!2t+1}\Biggr) \Biggl(\frac{1}{t} - \frac{7}{2}\Biggr) - \biggl(\frac{9}{2} + \frac{1}{t}\biggr)\biggl(1-\frac{1}{t}\biggr)^{\!\!2t+1} \Biggr\}.
\end{align*}
By second-order Taylor approximation of the function $f(x) = x^{2t+1}$ at $x = (1 - \nicefrac{1}{t})$, we obtain
\begin{align*}
f(1-\frac{2}{t})=\biggl(1 - \frac{2}{t}\biggr)^{\!\!2t+1} &= \biggl(1-\frac{1}{t}\biggr)^{\!\!2t+1} - \frac{2t+1}{t}\biggl(1-\frac{1}{t}\biggr)^{\!\!2t} + \frac{2t(2t+1)}{2t^2}\biggl(1-\frac{1}{t}\biggr)^{\!\!2t-1}\\[6pt]
 &\quad - \frac{2t(2t+1)(2t-1)}{6t^3}\biggl(1- \frac{1}{t}\biggr)^{\!\!2t-2} + R_4,
\end{align*}
where the remainder is $R_4 \geq 0$, as the fourth derivative is non-negative. (This can easily be seen when expressing the remainder in Lagrange form.)
We then obtain
\begin{align*}
&\inf_{n \in \mathbb{N}} \frac{\alpha(\lceil \log n \rceil -2,n)}{2^{\lceil \log n \rceil -3}}\\[6pt]
&\geq \inf_{\substack{t \in \mathbb{N}\\t\geq 2}} \Biggl\{ \frac{5}{2} +\Biggl[ \frac{2t+1}{t}\biggl(1-\frac{1}{t}\biggr)^{\!\!2t} - \frac{2t+1}{t}\biggl(1-\frac{1}{t}\biggr)^{\!\!2t-1}\\
&\quad\quad\quad\quad\quad + \frac{(2t+1)(2t-1)}{3t^2}\biggl(1-\frac{1}{t}\biggr)^{\!\!2t-2} \Biggr] \Biggl[\frac{1}{t} - \frac{7}{2}\Biggr] - \biggl(\frac{9}{2} + \frac{1}{t}\biggr)\biggl(1-\frac{1}{t}\biggr)^{\!\!2t+1} \Biggr\}\\[6pt]
&= \inf_{\substack{t \in \mathbb{N}\\t\geq 2}} \Biggl\{ \frac{5}{2} +\biggl(1-\frac{1}{t}\biggr)^{\!\!2t} \Biggl[ \biggl(\frac{2t+1}{t} - \frac{2t+1}{t-1} + \frac{(2t+1)(2t-1)}{3(t-1)^2}\biggr) \biggl(\frac{1}{t} - \frac{7}{2}\biggr)\\
&\quad\quad\quad\quad\quad  - \biggl(\frac{9}{2} + \frac{1}{t}\biggr)\biggl(1-\frac{1}{t}\biggr)\Biggr] \Biggr\}\\
&= \inf_{\substack{t \in \mathbb{N}\\t\geq 2}} \Biggl\{ \frac{5}{2}  - \frac{55t^4 - 125t^3 + 89t^2 + 8t - 12}{6t^2(t-1)^2}\biggl(1-\frac{1}{t}\biggr)^{\!\!2t}  \Biggr\}.
\end{align*}
It is straightforward to check that $(1-\nicefrac{1}{t})^{2t}$ and  $\frac{55t^4 - 125t^3 + 89t^2 + 8t - 12}{6t^2(t-1)^2}$ are increasing in $t$. This implies
\begin{align*}
\inf_{k \in \mathbb{N}} \frac{\alpha(k,2^{k+1}+1)}{2^{k-1}}
&= \lim_{t \to \infty} \Biggl\{ \frac{5}{2}  - \frac{55t^4 - 125t^3 + 89t^2 + 8t - 12}{6t^2(t-1)^2}\biggl(1-\frac{1}{t}\biggr)^{\!\!2t}  \Biggr\}\\[6pt]
&= \frac{5}{2} - \frac{55}{6\e^2} \approx 1.259,
\end{align*}
which finishes the proof.
\hfill\Halmos\end{proof}

Combining Lemmas~\ref{lem:uniform_transitions} and~\ref{lem:arb_dist_offline_opt}, we obtain
the main result of this section.

\begin{restatable}{theorem}{thmarbitrary}
  Algorithm~\ref{alg:arb_dist} is $\ratioarbitrary$-competitive for arbitrary distributions.
\end{restatable}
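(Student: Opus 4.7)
The plan is to combine the Markov-chain bound of Lemma~\ref{lem:uniform_transitions} with the structural lower bound on $\Ex{\OPT_n}$ of Lemma~\ref{lem:arb_dist_offline_opt}, mirroring the proof of Theorem~\ref{thm:uniform_upper} for the uniform case. I would identify each hire of Algorithm~\ref{alg:arb_dist} with a transition $B_j \to A_{j+1}$ of the Markov chain $N'$ for some $j \in \{0,\dots,k-1\}$, plus at most one ``final'' hire triggered when the chain is absorbed in $B_k$. On a transition $B_j \to A_{j+1}$, the while-loop of the algorithm has just terminated with $q = 2^{-(j+1)}$, so the hire lasts $2/q = 2^{j+2}$ time steps and, by definition of quantiles, the applicant's cost is distributed as $X$ conditional on $\delta_{2^{-(j+1)}} < X \leq \delta_{2^{-j}}$. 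Writing $a_j := \Ex{x \mid \delta_{2^{-(j+1)}} < x \leq \delta_{2^{-j}}}$, the expected cost of one level-$j$ hire is therefore $2^{j+2}\, a_j$.

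Lemma~\ref{lem:uniform_transitions} upper-bounds the expected number of $B_j \to A_{j+1}$ transitions by $p/(3p-1)$ uniformly in $j$, with $p = 1 - 1/e$. Since the state of Algorithm~\ref{alg:arb_dist} is fully captured by the current value of $q$, the cost incurred at each hire is conditionally independent of the number of previous hires, and Wald's identity then yields
\[
  \Ex{\ALG_n} \;\leq\; \frac{p}{3p-1}\sum_{j=0}^{k-1} 2^{j+2}\, a_j \;+\; 2^{k+2}\, a_k,
\]
where $a_k := \Ex{x \mid x \leq \delta_{2^{-k}}}$ and the last summand bounds the final hire at absorption, whose duration is at most $n \leq 2^{k+2}$ since $k = \lceil \log n \rceil - 2$. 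Combining this with
\[
  \Ex{\OPT_n} \;\geq\; \sum_{j=0}^{k-1} 2^{j-1}\, a_j \;+\; \eta\cdot 2^{k-1}\, a_k
\]
from Lemma~\ref{lem:arb_dist_offline_opt} and the elementary inequality $\sum_j c_j a_j / \sum_j d_j a_j \leq \max_j c_j/d_j$ for non-negative reals, the competitive ratio is then bounded by a maximum of per-level ratios of the form $\frac{8p}{3p-1}$ for the bulk levels and a boundary term involving $\eta$; numerically evaluating with $p = 1 - 1/e$ and $\eta = 5/2 - 55/(6e^2)$ yields the claimed $\ratioarbitrary$.

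The main obstacle is twofold: first, justifying the decoupling of the hire count from the cost per hire at each level (this should follow from the fact that Algorithm~\ref{alg:arb_dist} is memoryless in $q$, so that at each $B_j \to A_{j+1}$ transition the applicant's cost is distributed as $X$ restricted to $(\delta_{2^{-(j+1)}}, \delta_{2^{-j}}]$ independently of the history); second, carefully bounding the final hire triggered upon absorption in $B_k$, for which the algorithm's termination branch ``hire until~$n$'' requires separate treatment to show it contributes at most $2^{k+2}\, a_k$ to the total expected cost. Once these two points are settled, the result reduces to the term-wise numerical comparison sketched above.
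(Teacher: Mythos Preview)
Your outline is essentially the paper's argument: identify each hire with a $B_j\to A_{j+1}$ transition, bound the number of such transitions by $\frac{p}{3p-1}$ via Lemma~\ref{lem:uniform_transitions}, observe that the cost of a level-$j$ hire is $2^{j+2}\,\E\bigl[x\mid\delta_{2^{-(j+1)}}<x\le\delta_{2^{-j}}\bigr]$, and compare against the lower bound on $\Ex{\OPT_n}$ from Lemma~\ref{lem:arb_dist_offline_opt}. The decoupling you worry about is handled exactly as you say (the algorithm's state is $q$, and conditional on a $B_j\to A_{j+1}$ transition the cost is distributed as $X$ restricted to $(\delta_{2^{-(j+1)}},\delta_{2^{-j}}]$, independent of the past), and the final hire is indeed the only place that needs separate care.

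The gap is purely numerical, but it is real: your term-by-term maximum does \emph{not} recover the constant $\ratioarbitrary$. With $p=1-\nicefrac{1}{\e}$ the bulk ratio is $\frac{8p}{3p-1}=\frac{8(\e-1)}{2\e-3}\approx 5.64$, but the boundary ratio is $\frac{2^{k+2}}{\eta\,2^{k-1}}=\frac{8}{\eta}\approx 6.35$, since $\eta=\frac{5}{2}-\frac{55}{6\e^2}\approx 1.259$. So the mediant inequality only gives $\approx 6.35$, not $6.052$. To get the stated constant, the paper does \emph{not} compare termwise. Instead it keeps the final hire as $n\,a_k$ (rather than $2^{k+2}a_k$), uses Lemma~\ref{lem:arb_dist_offline_opt} to write
\[
\sum_{j=0}^{k-1}2^{j-1}a_j\;\le\;\Ex{\OPT_n}-\eta\,2^{k-1}a_k,
\]
so that the bulk contributes $\frac{8(\e-1)}{2\e-3}\Ex{\OPT_n}$ \emph{minus} an $a_k$-term, and the net residual becomes $n\bigl(1-\tfrac{\e-1}{2\e-3}\eta\bigr)a_k\approx 0.113\,n\,a_k$. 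This residual is then controlled by a \emph{second}, independent lower bound $\Ex{\OPT_n}\ge c\,n\,a_k$ (derived directly from $\Ex{\OPT_n}=\sum_i\E[\min\{x_1,\dots,x_i\}]$ and the probability that at most one of $i$ draws lands below $\delta_{2^{-k}}$), yielding the extra $\approx 0.41$ and the total $\approx 6.05$. In short: your plan proves constant competitiveness with a slightly worse constant; matching $\ratioarbitrary$ requires the offset-plus-second-bound step rather than the termwise maximum.
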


\begin{proof}{Proof.}
For $n \leq 4$, Algorithm~\ref{alg:arb_dist} hires the first applicant for the whole time which gives  $4$-approximation. For the following arguments, assume that $n \geq 5$, and let $k = \lceil \log n \rceil - 2 \geq 1$.

Algorithm~\ref{alg:arb_dist} hires an applicant, whenever the Markov chain transitions from a state $B_j$ to $A_{j+1}$ and hires the final applicant when it reaches state $B_k$. 
By Lemma~\ref{lem:uniform_transitions} for each $j$, the expected number of transitions from state $B_j$ to $A_{j+1}$ is at most $\smash{\frac{p}{3p-1}}$ where $p = 1 - \nicefrac{1}{\e}$.
Each applicant who is hired while transitioning from $B_j$ to $A_{j+1}$ is hired for $\smash{2^{j+2}}$ time units, and its expected cost value is
$\Exp{ x \mid \delta_{2^{-(j+1)}} \leq  x \leq\delta_{2^{-j}}}$. The final applicant hired when state $B_k$ is reached is hired for at most $n$ time units and has expected cost with value $\E[ x \,|\, x \leq \delta_{2^{-k}}]$.
%

Since the number of visits to a state and the cost for hiring an applicant in the state are stochastically independent, we obtain
\begin{align}
\Exp{\ALG_n} &\leq \frac{p}{3p-1} \sum_{j=0}^{k-1} \Bigl(2^{j+2} \E\bigl[ x \,\big|\, \delta_{2^{-(j+1)}} < x \leq \delta_{2^{-j}}\bigr] \Bigr) + n \E[ x\,|\, x \leq \delta_{2^{-k}}] \notag\\[6pt]
&= \frac{1-\nicefrac{1}{\e}}{2-\nicefrac{3}{\e}} \sum_{j=0}^{k-1} \Bigl(2^{j+2} \E\bigl[ x \,\big|\, \delta_{2^{-(j+1)}} < x \leq \delta_{2^{-j}}\bigr] \Bigr) + n \E[ x\,|\, x \leq \delta_{2^{-k}}] \notag\\[6pt]
&\leq \frac{8e-8}{2e-3} \E[\OPT_n] + n\biggl(1 - \frac{\e-1}{2\e-3}\eta\biggr)\E\bigl[ x \,\big|\, x \leq \delta_{2^{-k}}\bigr] \label{eq:alg_n},
\end{align}
where we used Lemma~\ref{lem:arb_dist_offline_opt} and where $\eta = \frac{5}{2} - \frac{55}{6\e^2}$. Further, recall that $\E[\OPT_n] = \sum_{i=1}^n \E[\min \{x_1,\dots,x_i\}]$. For $i \in [n]$, we have
\begin{align}
\E\bigl[\min \{x_1,\dots,x_i\}\bigr] &\geq \E\bigl[x \,\big|\, x \leq \delta_{2^{-k}}\bigr] \P\bigl[\left|\{x_1,\dots,x_i\} \cap [0,\delta_{2^{-k}}]\right| \leq 1\bigr] \notag \\[6pt]
&\geq \E\bigl[x \,\big|\, x \leq \delta_{2^{-k}}\bigr] \P\bigl[\left|\{x_1,\dots,x_i\} \cap [0,\delta_{4/n}]\right| \leq 1\bigr] \notag \\[6pt]
&= \E\bigl[x \,\big|\, x \leq \delta_{2^{-k}}\bigr] \bigl( \P\left[\left| \{x_1,\dots,x_i\} \cap [0,\delta_{4/n}]\right| = 0 \right]\notag\\[6pt] 
&\phantom{=\E\bigl[x \,\big|\, x \leq \delta_{2^{-k}}\bigr] \bigl(} + \P\left[\left| \{x_1,\dots,x_i\} \cap [0,\delta_{4/n}]\right| = 1 \right] \bigr)\notag\\[6pt]
&= \E\bigl[x \,\big|\, x \leq \delta_{2^{-k}}\bigr] \Biggl( \biggl(1- \frac{4}{n}\biggr)^{\!\!i} + \frac{4i}{n}\biggl(1-\frac{4}{n}\biggr)^{\!\!i-1}\Biggr), \notag
\intertext{which implies (for $n\geq5$)}
\E\bigl[\OPT_n] 
&\geq \E\bigl[x \,\big|\, x \leq \delta_{2^{-k}}\bigr] \sum_{i=1}^n \Biggl( \biggl(1-\frac{4}{n}\biggr)^{\!\!i} + \frac{4i}{n}\biggl(1-\frac{4}{n}\biggr)^{\!\!i-1} \Biggr) \notag\\[6pt]
&= \E\bigl[x \,\big|\, x \leq \delta_{2^{-k}}\bigr] \Biggl(\frac{n}{2}\Biggl(1 - 3\biggl(1-\frac{4}{n}\biggr)^{\!\!n} \Biggr) + \biggl(1-\frac{4}{n}\biggr)^{\!\!n} -1 \Biggr) \notag\\[6pt]
&\geq \E\bigl[x \,\big|\, x \leq \delta_{2^{-k}}\bigr] \Biggl(\frac{n}{2}\biggl(1-\frac{3}{\e^4}\biggr)+ \biggl(1-\frac{4}{n}\biggr)^{\!\!n} - 1 \Biggr).\label{eq:final_probability}
\end{align}
Combining \eqref{eq:final_probability} with \eqref{eq:alg_n} and using $n \geq 5$, we obtain
\begin{align*}
\E[\ALG_n] &\leq \frac{8\e-8}{2\e-3}\E[\OPT_n] +  \frac{1 - \frac{\e-1}{2\e-3}(\frac{5}{2} - \frac{55}{6\e^2})}{\frac{1}{2}\bigl(1-\frac{3}{\e^4}\bigr)+ \left(1-\frac{4}{n}\right)^{\!\!n} -\frac{1}{n}}\E[\OPT_n]\\[6pt]
&\leq \biggl( \frac{8\e-8}{2\e-3} + \frac{1 - \frac{\e-1}{2\e-3}(\frac{5}{2} - \frac{55}{6\e^2})}{\frac{1}{2}\bigl(1-\frac{3}{\e^4}\bigr)+ \left(1-\frac{4}{5}\right)^{\!\!5} -\frac{1}{5}}\biggr)\E[\OPT_n] \leq \ratioarbitrary \cdot\E[\OPT_n]
\end{align*}
as claimed.
\hfill\Halmos
\end{proof}

%
%

\section{Unknown Distributions}\label{sec:unknown_distribution}

In this section, we again consider an arbitrary distribution $X$ with distribution
function $F$. In contrast to before, we assume that
$X$ is unknown to us. In particular, we do not have access
to the quantiles of $X$. We first give a bound for the
cost of the offline optimum that does not rely on quantiles.
In the following, we let $\Ex{x} := \Exx{x\sim X}{x}$.

\begin{restatable}{lemma}{lemundistofflineopt}
For arbitrary distributions $X$, \label{lem:un_dist_offline_opt2}
$\Exp{\OPT_n}\geq\Exp x+\sum_{i=1}^{\lfloor\log n\rfloor}2^{i-1}\int_{0}^{\infty}(1-F(x))^{2^{i}}\intd x$.
\end{restatable}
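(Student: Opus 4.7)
{Proof plan.}
My starting point will be Proposition~\ref{pro:opt}, which gives
$\Exp{\OPT_n} = \sum_{i=1}^{n} \int_0^{\infty}(1-F(x))^{i}\intd x$. The first summand
$i=1$ is exactly $\int_0^\infty (1-F(x))\intd x$, and since $X$ is non-negative this
is the standard tail-integral expression for $\Exp{x}$. So I can peel off that term
and the task reduces to showing
\begin{equation*}
\sum_{i=2}^{n} \int_0^{\infty} (1-F(x))^i \intd x \;\geq\; \sum_{i=1}^{\lfloor \log n\rfloor} 2^{i-1} \int_0^\infty (1-F(x))^{2^i} \intd x.
\end{equation*}

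The plan for this remaining inequality is a purely combinatorial dyadic grouping of
the indices on the left-hand side. For each $i\in\{1,\ldots,\lfloor\log n\rfloor\}$,
consider the block of integers $B_i := \{2^{i-1}+1,\ldots,2^i\}$. Each block $B_i$
contains exactly $2^{i-1}$ indices, all of which lie in $\{2,\ldots,n\}$ because
$2^{\lfloor \log n\rfloor}\leq n$, and the blocks $B_1,\ldots,B_{\lfloor \log n\rfloor}$
are pairwise disjoint. Since $1-F(x)\in[0,1]$, the map $j\mapsto (1-F(x))^j$ is
monotonically non-increasing in $j$ pointwise in $x$, so for every $j\in B_i$ we have
$(1-F(x))^j \geq (1-F(x))^{2^i}$. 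Summing this pointwise inequality over $j\in B_i$
and then integrating yields
$\sum_{j\in B_i}\int_0^\infty(1-F(x))^j\intd x \geq 2^{i-1}\int_0^\infty (1-F(x))^{2^i}\intd x$.

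Finally, since the blocks $B_1,\ldots,B_{\lfloor\log n\rfloor}$ are disjoint subsets
of $\{2,\ldots,n\}$ and all integrands are non-negative, I can sum the block
inequalities over $i$ to obtain the desired bound on
$\sum_{i=2}^n \int_0^\infty (1-F(x))^i \intd x$. Adding back the $\Exp{x}$ term I
peeled off at the start gives the claim. There is no real obstacle here; the only
subtlety is choosing block endpoints so that the block $B_i$ has size $2^{i-1}$
(matching the coefficient on the right-hand side) while still having its largest
index equal to $2^i$ (matching the exponent).
\hfill\Halmos
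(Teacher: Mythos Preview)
Your proof is correct and follows essentially the same approach as the paper: both peel off the $i=1$ term as $\Exp{x}$ and then use a dyadic grouping of the remaining indices, bounding each block of size $2^{i-1}$ below by $2^{i-1}$ copies of the term with exponent $2^i$ via monotonicity of $j\mapsto(1-F(x))^j$. The only cosmetic difference is that the paper first reduces to the case where $n$ is a power of $2$ and indexes the blocks from the top $(n/2,n],(n/4,n/2],\ldots$, while you index from the bottom and handle general $n$ directly; your version is arguably a bit cleaner for avoiding the WLOG step.
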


\begin{proof}{Proof.}
Since the left hand side of the inequality to prove is increasing in $n$ while the right hand side only increases when $n$ is a power of $2$, we may assume without loss of generality that $n$ is a power of $2$. By Proposition~\ref{pro:opt}, we have
$\Exp{\OPT_n}=\sum\nolimits _{i\in[n]}\int_{0}^{\infty}(1-F(x))^{i}\intd x$.
Using that $(1-F(x))^{i}$ is decreasing with $i$, we split the sum into the ranges $(\nicefrac{n}{2},n],(\nicefrac{n}{4},\nicefrac{n}{2}],(\nicefrac{n}{8},\nicefrac{n}{4}],\dots$
and bound each part by the last term in the corresponding range, i.e., 
\begin{align*}
\Exp{\OPT_n} & = \Exp x+\sum\nolimits _{i=2}^{n}\int_{0}^{\infty}(1-F(x))^{i}\intd x\\[6pt]
 & \geq \Exp x+\frac{n}{2}\int_{0}^{\infty}(1-F(x))^{n}\intd x+\frac{n}{4}\int_{0}^{\infty}(1-F(x))^{n/2}\intd x  +\dots\\
 &\quad\dots+\int_{0}^{\infty}(1-F(x))^{2}\intd x\\[6pt]
 & = \Exp x+\sum\nolimits _{i=1}^{\log n}\frac{n}{2^{i}}\int_{0}^{\infty}(1-F(x))^{n/2^{i-1}}\intd x\\[6pt]
 & = \Exp x+\sum\nolimits _{i=1}^{\log n}2^{\log n-i}\int_{0}^{\infty}(1-F(x))^{2^{\log n-i+1}}\intd x\\[6pt]
 & = \Exp x+\sum\nolimits _{i=0}^{\log n-1}2^{i}\int_{0}^{\infty}(1-F(x))^{2^{i+1}}\intd x\\[6pt]
 & = \Exp x+\sum\nolimits _{i=1}^{\log n}2^{i-1}\int_{0}^{\infty}(1-F(x))^{2^{i}}\intd x,
\end{align*}
as claimed.\hfill \Halmos
\end{proof}

We now describe our algorithm for unknown distributions (cf.~Algorithm~\ref{alg:unknown_dist}).
Without knowledge of the quantiles of $X$, we have no good
way to directly adjust the cost threshold $\ct$. 
Instead, for some integral value $\lambda > 1$ to be fixed later, we devote a $\nicefrac{1}{\lambda+1}$ fraction of the time spent in each state~$j$ to sample $X$
in order to estimate a suitable value for~$\ct$ and then wait for an appropriate candidate to appear. Specifically, in state~$j$ we sample for $2^j-1$ time units and then observe the applicants for another $\lambda (2^j-1)$ time units. Thus, the maximum number of time units spent in state $j$ is $\bar{t}_{j}=(1+\lambda)(2^{j}-1)$. When observing the applicants we hire any candidate whose cost does not exceed the minimum cost while sampling. The hiring time is $t_{j}=(1+\lambda)2^{j+2}$ time units. Since
\begin{align*}
\sum_{i=0}^{j+1} \bar{t}_{i} = (1+\lambda)\sum_{i=0}^{j+1} (2^{i}-1) = (1+\lambda)(2^{j+2}-j-3) \leq t_j
\end{align*}
we are guaranteed to hire a new applicant (or terminate the algorithm) during
the hiring time.  

\begin{algorithm}[tb]
\caption{A $\ratiounknown$-competitive algorithm for unknown distributions.}
\label{alg:unknown_dist}
 $\ct \set \infty$ \tcp*[r]{threshold cost}
 $t_\mathrm{sample} \set 0$ \tcp*[r]{remaining time until threshold is fixed}
 $t_\mathrm{wait} \set 1$ \tcp*[r]{remaining time once threshold is fixed}
 $j \set 0$ \tcp*[r]{state of the algorithm}
\For{$i \set 1, \dots, n$  }{ 
  \If{$t_\mathrm{sample} > 0$}{
    $\ct \set \min\{\ct, x_i\}$\;
    $t_\mathrm{sample} \set t_\mathrm{sample} - 1$\;
  }
  \ElseIf{$t_\mathrm{wait} > 0$}{
    $t_\mathrm{wait} \set t_\mathrm{wait} - 1$\;
    \If{$x_i \le \ct$}{
      hire applicant $i$ for $(1+\lambda)2^{j+2}$ time steps\;
      \If{$i + (1+\lambda)2^{j+2} > n$}{
       {\bf stop}\;
      }
      $j \set j + 1$; $\ct \set \infty$; $t_\mathrm{sample} \set 2^j - 1$; $t_\mathrm{wait} \set \lambda t_\mathrm{sample}$\;
    }
  }
  \Else{
    $j \set j - 1$; $\ct \set \infty$; $t_\mathrm{sample} \set 2^j - 1$; $t_\mathrm{wait} \set \lambda t_\mathrm{sample}$\;
  }
}
\end{algorithm}

The maximum value of $j$ that can be reached during the execution
of the algorithm is bounded by the fact that $(1+\lambda)2^{j+2}\leq n$, i.e.,
$j \leq \lceil \log\frac{n}{1+\lambda} \rceil -2$.

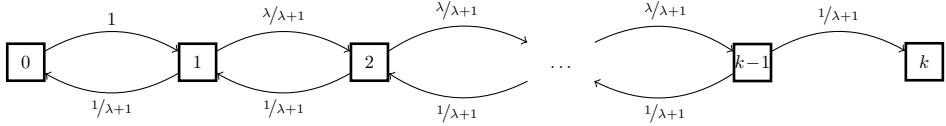
\begin{figure}
\centering
\begin{tikzpicture}[scale=0.3, every node/.style={scale=0.7}]
\matrix (m) [matrix of nodes, row sep=3em, column sep=5em,
	nodes={draw, rectangle, line width=1, text centered, minimum size=2em}]
    {\\  $0$ & $1$ & $2$  & |[draw=none]|$\phantom{1k}\dots\phantom{1k}$ & $\!\!k\!-\!1\!\!$  & $k$ \\ };

\draw[->] (m-2-1) edge [bend left] node [midway,above] {$1$} (m-2-2);
\draw[->] (m-2-2) edge [bend left] node [midway,below] {$\nicefrac{1}{\lambda+1}$} (m-2-1);

\draw[->] (m-2-2) edge [bend left] node [midway,above] {$\nicefrac{\lambda}{\lambda+1}$} (m-2-3);
\draw[->] (m-2-3) edge [bend left] node [midway,below] {$\nicefrac{1}{\lambda+1}$} (m-2-2);

\draw[->] (m-2-3) edge [bend left] node [midway,above] {$\nicefrac{\lambda}{\lambda+1}$} (m-2-4);
\draw[->] (m-2-4) edge [bend left] node [midway,below] {$\nicefrac{1}{\lambda+1}$} (m-2-3);

\draw[->] (m-2-4) edge [bend left] node [midway,above] {$\nicefrac{\lambda}{\lambda+1}$} (m-2-5);
\draw[->] (m-2-5) edge [bend left] node [midway,below] {$\nicefrac{1}{\lambda+1}$} (m-2-4);

\draw[->] (m-2-5) edge [bend left] node [midway,above] {$\nicefrac{1}{\lambda+1}$} (m-2-6);


\foreach \x [count=\y] in {2,3}{
}

\end{tikzpicture}
\caption{Markov chain $M(p,k)$ with $p = \lambda/(\lambda+1)$. \label{fig:markov_unknown}}
\end{figure}

Again, we introduce a Markov chain that has one state for each possible value of $j$ and an absorbing state $k$, see Figure~\ref{fig:markov_unknown}. The probability that we do not hire an applicant in state $j$ with $0<j<k$ equals the probability
that the smallest cost observed while sampling is lower than the smallest
cost observed while waiting. Since $t_{\mathrm{wait}}=\lambda t_{\mathrm{sample}}$,
we have a hiring probability of $p=\nicefrac{\lambda}{\lambda+1}$.
With this probability, the Markov chain transitions to state~$j+1$,
otherwise to state $j-1$.

As the Markov chain already has homogenous transition probabilities equal to $p = \lambda/(\lambda+1)$, Lemma~\ref{lem:markov_a_visit} directly implies the following result.

\begin{restatable}{lemma}{lemarbdistexpectedvisits}
The expected number of visits to each state $j$ of the Markov chain is at most~$\frac{1}{2p-1}$.\label{lem:arb_dist_expected_visits}
\end{restatable}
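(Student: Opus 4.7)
The plan is essentially a direct invocation of the earlier Markov-chain analysis. Observe that the Markov chain $M(p,k)$ depicted in Figure~\ref{fig:markov_unknown} has structurally identical transition probabilities to the chain $\hat M(p,k)$ studied in \S~\ref{sec:markov_a} (Figure~\ref{fig:markov_a}(b)): both are birth-death chains on the state space $\{0,1,\dots,k\}$ with a reflecting boundary at state~$0$ (transitioning to state~$1$ with probability~$1$), an absorbing state at~$k$, and at every intermediate state~$j$ a transition to $j+1$ with probability $p$ and to $j-1$ with probability $1-p$. The only difference is the numerical value of~$p$: here $p = \lambda/(\lambda+1)$, whereas in \S~\ref{sec:uniform_first} $p = 1-\nicefrac{1}{e}$. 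Crucially, the bound stated in Lemma~\ref{lem:markov_a_visit} is expressed in terms of $p$ alone, so it applies verbatim.

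Thus the first step I would take is to verify that the hypothesis of Lemma~\ref{lem:markov_a_visit} is satisfied, namely $p > \nicefrac{1}{2}$ so that $2p-1 > 0$ and the bound $\frac{1}{2p-1}$ is meaningful and finite. This is immediate from our standing assumption $\lambda > 1$, which gives
\[
p = \frac{\lambda}{\lambda+1} > \frac{1}{2}.
\]
The second step is simply to invoke Lemma~\ref{lem:markov_a_visit} with this value of~$p$, which bounds the expected number of visits to each state of $M(p,k)$ by $\frac{1}{2p-1}$, as claimed.

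Since the bulk of the work is deferred to \S~\ref{sec:markov_a}, there is no real obstacle to overcome here; the only subtlety worth remarking is that the (cosmetic) labelling of the transition from state $k-1$ to state~$k$ in Figure~\ref{fig:markov_unknown} plays no role in the counting argument of Lemma~\ref{lem:markov_a_visit}, because $k$ is absorbing and contributes to the visit counts of the remaining states only through the first passage to~$k$. Hence the proof amounts to a one-line reduction to the previously established visit-count bound.
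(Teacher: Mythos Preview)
Your proposal is correct and matches the paper's own argument essentially verbatim: the paper simply remarks that the Markov chain already has homogeneous transition probabilities $p=\lambda/(\lambda+1)$ and that Lemma~\ref{lem:markov_a_visit} therefore directly yields the bound. Your additional check that $p>\nicefrac{1}{2}$ (from $\lambda>1$) and your observation about the cosmetic labelling of the last transition in Figure~\ref{fig:markov_unknown} are both appropriate and do not deviate from the paper's approach.
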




Combining Lemma~\ref{lem:un_dist_offline_opt2} and Lemma~\ref{lem:arb_dist_expected_visits} yields the main result of this section.

\begin{restatable}{theorem}{thmunknown}
For $\lambda=3$,  Algorithm~\ref{alg:unknown_dist} is strictly $\ratiounknown$-competitive for unknown distributions.
\end{restatable}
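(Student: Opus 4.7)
The plan is to upper-bound $\E[\ALG_n]$ by decomposing the algorithm's behaviour according to the state $j\in\{0,1,\dots,k\}$ of the Markov chain from Figure~\ref{fig:markov_unknown}, where $k=\lceil\log(n/(1+\lambda))\rceil-2$ is the largest state index before the algorithm halts, and then comparing the result to the lower bound for $\E[\OPT_n]$ from Lemma~\ref{lem:un_dist_offline_opt2}. For $\lambda=3$ the hire probability in each state $j\ge 1$ is $p=\lambda/(\lambda+1)=3/4$, so Lemma~\ref{lem:arb_dist_expected_visits} immediately gives an upper bound $V_j\le 1/(2p-1)=2$ on the expected number of visits to each state; this single bound will drive the entire analysis.

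Next I would bound the expected cost of a single visit to each state separately. A visit to state $j\ge 1$ first observes $|S|=2^j-1$ samples to set the threshold $\tau_j=\min(S)$, then watches up to $|W|=\lambda(2^j-1)$ further applicants and, if one with cost at most $\tau_j$ appears, hires it for $d_j=(1+\lambda)2^{j+2}$ time steps. Because the hired applicant's cost $H_j$ is bounded by $\tau_j$ on the hire event, a crude but clean estimate gives
\begin{equation*}
\E\bigl[H_j\,\chi_{\text{hire}}\bigr]\;\le\;\E\bigl[\tau_j\,\chi_{\text{hire}}\bigr]\;\le\;\E[\tau_j]\;=\;\int_0^{\infty}(1-F(x))^{2^j-1}\,\intd x.
\end{equation*}
State~$0$ is special: with the initial threshold $\tau=\infty$ the very first applicant is hired deterministically for $d_0=(1+\lambda)\cdot 2^2=16$ time steps at expected cost $\E[X]=\int_0^\infty(1-F(x))\,\intd x$. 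Since the cost incurred by different visits to the same state is independent of the number of visits, multiplying by $V_j$ and $d_j$ and summing over states yields
\begin{equation*}
\E[\ALG_n]\;\le\;V_0\,d_0\,\E[X]\;+\;\sum_{j=1}^{k}V_j\,d_j\int_0^{\infty}(1-F(x))^{2^j-1}\,\intd x.
\end{equation*}

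To finish, I would apply the elementary estimate $(1-F(x))^{2^j-1}\le(1-F(x))^{2^{j-1}}$ (valid for $j\ge 1$), which rewrites each summand as a constant multiple of the $(j{-}1)$-th term $2^{j-2}\int_0^\infty(1-F(x))^{2^{j-1}}\,\intd x$ appearing in the lower bound of Lemma~\ref{lem:un_dist_offline_opt2}. After re-indexing the sum, absorbing the $j=1$ contribution into the leading $\E[X]$ term, and substituting $\lambda=3$, the ratio $\E[\ALG_n]/\E[\OPT_n]$ is bounded by~$\ratiounknown$. The main obstacle will be the tightness of the two estimates just used: the bound $\E[H_j\,\chi_{\text{hire}}]\le\E[\tau_j]$ is loose because the hire event is positively correlated with $\tau_j$ being large (so $\E[\tau_j\mid\text{hire}]\ge\E[\tau_j]$), and the exponent shift $2^j{-}1\mapsto 2^{j-1}$ costs another multiplicative factor; the choice $\lambda=3$ is made precisely to balance these losses against the sampling overhead $1/(1+\lambda)$ (fraction of time spent learning) and the visit-count penalty $(\lambda+1)/(\lambda-1)$ (penalty for a hire probability that is too aggressive) inherent in the Markov chain.
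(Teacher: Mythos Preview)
Your decomposition via the Markov chain and the use of Lemma~\ref{lem:arb_dist_expected_visits} coincide with the paper's approach, but the per-visit cost estimate you propose is too crude to reach the constant $\ratiounknown=48$. With your chain of inequalities $\E[H_j\chi_{\text{hire}}]\le\E[\tau_j]=\int_0^\infty(1-F(x))^{2^j-1}\intd x\le\int_0^\infty(1-F(x))^{2^{j-1}}\intd x$, the contribution of state $j\ge 2$ is at most
\[
V_j\,d_j\int_0^\infty(1-F(x))^{2^{j-1}}\intd x \;=\; 2\cdot(1+\lambda)\,2^{j+2}\int_0^\infty(1-F(x))^{2^{j-1}}\intd x \;=\; 2^{j+5}\int_0^\infty(1-F(x))^{2^{j-1}}\intd x,
\]
whereas the matching term in Lemma~\ref{lem:un_dist_offline_opt2} (with index $i=j-1$) is $2^{\,j-2}\int_0^\infty(1-F(x))^{2^{j-1}}\intd x$. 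The ratio is $2^7=128$, not $48$; the $j\in\{0,1\}$ terms give a ratio of $96$ against the $\E[X]$ term, so your argument yields only a $128$-competitive bound. The exponent shift $2^j{-}1\mapsto 2^{j-1}$ is the culprit: it throws away almost half the exponent and cannot be compensated by any choice of~$\lambda$.

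The paper avoids the shift by a sharper bound on the hired cost. The observation is that the hired applicant is itself one more independent draw lying below $\tau_j$, so its cost is the minimum of the $2^j-1$ samples \emph{together with} that one extra draw; hence the expected hired cost is bounded by the expected minimum of $2^j$ i.i.d.\ draws, namely $\int_0^\infty(1-F(x))^{2^j}\intd x$. This aligns the exponents in $\E[\ALG_n]$ directly with those in Lemma~\ref{lem:un_dist_offline_opt2}, and combined with the hire probability $p=\lambda/(\lambda+1)$ gives the coefficient $\frac{\lambda(\lambda+1)}{\lambda-1}\,2^{j+2}$, hence a per-term ratio of $8\lambda(\lambda+1)/(\lambda-1)=48$ at $\lambda=3$. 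As a side remark, your worry about positive correlation is misplaced for the bound you actually wrote: $\E[\tau_j\chi_{\text{hire}}]\le\E[\tau_j]$ holds simply because $0\le\chi_{\text{hire}}\le 1$, irrespective of any dependence between $\tau_j$ and the hire event.
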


\begin{proof}{Proof.}
Using Lemma~\ref{lem:arb_dist_expected_visits} with $p= \frac{\lambda}{\lambda+1}$, we conclude that the algorithm visits
each state at most $\frac{1}{2p-1} = \frac{\lambda+1}{\lambda-1}$ times in expectation.
In state~$j$ with $0<j<k$ with probability $p= \frac{\lambda}{\lambda+1}$ an applicant is
hired for $(1+\lambda)2^{j+2}$ units of time. 
The cost of the applicant is determined by drawing $2^{j}-1$ numbers to determine a minimum $\ct$, and then continuing to draw until we find the first cost smaller than $\ct$.
We can bound the expected cost of the applicant by the expected cost when drawing $2^j$ numbers and taking the minimum, i.e.,
\[
\Exp{x\,|\,x\leq\ct}\leq\mathbf{E}_{x_{i}\sim X}[\min\nolimits _{i\in\{1,\dots,2^{j}\}}\{x_{i}\}]=\int_{0}^{\infty}(1-F(x))^{2^{j}}\intd x.
\]
The algorithm stops at the latest when an applicant is hired in state $k-1 = \lceil \log \frac{n}{\lambda+1} \rceil - 2$ as the applicant is hired for at least $n$ time steps.  
Since the number of visits to a state, the probability of hiring in a state, and the expected cost when hiring
 are independent, we obtain
\[
\Exp{\ALG_n}\leq\frac{\lambda(\lambda+1)}{\lambda-1}\sum_{j=0}^{k}\Bigg(2^{j+2}\int_{0}^{\infty}(1-F(x))^{2^{j}}\intd x\Bigg).
\]
Together with Lemma~\ref{lem:un_dist_offline_opt2} and $k-1\leq\lfloor\log n\rfloor-2$,
this yields
\begin{align*}
\frac{\Exp{\ALG_n}}{\Exp{\OPT_n}} &\leq \frac{4 \Exp x\frac{(\lambda + 1)^2}{\lambda - 1} +\sum_{j=1}^{k-1}\bigl(\frac{\lambda(\lambda + 1)}{\lambda -1}2^{j+2}\int_{0}^{\infty}(1-F(x))^{2^{j}}\intd x \bigr)}{\Exp x+\sum_{j=1}^{\lfloor\log n\rfloor}2^{j-1}\int_{0}^{\infty}(1-F(x))^{2^{j}}\intd x\phantom{)}}\\
&\leq \max\Biggl\{\frac{4 \Exp x\frac{(\lambda + 1)^2}{\lambda - 1}}{\Exp x}, \frac{\sum_{j=1}^{k-1}\bigl(\frac{\lambda(\lambda + 1)}{\lambda -1}2^{j+2}\int_{0}^{\infty}(1-F(x))^{2^{j}}\intd x \bigr)}{\sum_{j=1}^{\lfloor\log n\rfloor}2^{j-1}\int_{0}^{\infty}(1-F(x))^{2^{j}}\intd x\phantom{)}}\Biggr\}\\
&\leq \max\Biggl\{4\frac{(\lambda +1)^2}{\lambda -1}, 8 \frac{\lambda(\lambda +1)}{\lambda -1}\Biggr\} \leq \ratiounknown,
\end{align*}
as claimed.\hfill \Halmos
\end{proof}
\section{Sequential Employment}

\label{sec:sequential_employment}

We now turn our attention to the number of applicants that are
concurrently under employment. We show that there is no constant competitive algorithm for the problem that the covering constraint for the required number of employed candidates is fulfilled with equality in every step.

We can easily adapt the algorithms in the previous sections to be competitive in a setting where not
more than two applicants may be employed during any period of time.
\begin{lemma}\label{lem:only-two}
We can adapt each of the above algorithms to employ not more than two applicants concurrently and ensure them to only lose a factor of at most $2$ in their competitive ratio. 
\end{lemma}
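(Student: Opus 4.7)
The plan is to modify each of Algorithms~\ref{alg:multi_uniform}--\ref{alg:unknown_dist} by (i) doubling the duration of every hire and (ii) coarsening the sequence of threshold levels so that each actual hire corresponds to halving the threshold twice in Algorithms~\ref{alg:multi_uniform}--\ref{alg:arb_dist} (or advancing two states in the Markov chain of Algorithm~\ref{alg:unknown_dist}). Equivalently, the modified algorithm only keeps the odd-indexed hires of the original one and signs each of them for twice its original duration.

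First I would verify that this modification maintains the covering constraint while guaranteeing at most two concurrent contracts. Let $s_j$ and $\ell_j$ denote the start time and duration of the $j$-th hire made by the original algorithm at threshold $\tau_j$. Two properties are used: (i)~the search time at the next threshold is at most half the hire duration at the current threshold, which can be read off directly (for Algorithm~\ref{alg:multi_uniform}, search time $2/\tau$ versus hire time $4/\tau$; for Algorithm~\ref{alg:multi_uniform2}, $\lceil c/\tau\rceil$ versus $\lceil 2c/\tau\rceil$; for Algorithm~\ref{alg:arb_dist}, $1/q$ versus $2/q$; for Algorithm~\ref{alg:unknown_dist}, the analogous inequality is immediate); and (ii)~hire durations at least double between successive threshold levels, $\ell_{j+1}\ge 2\ell_j$. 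Combining (i) and (ii) gives $s_{j+1}-s_j\le \ell_j/2$ and hence
\[
s_{j+2}-s_j \;\le\; \tfrac{1}{2}\ell_j+\tfrac{1}{2}\ell_{j+1} \;\le\; \tfrac{3}{2}\ell_j \;<\; 2\ell_j.
\]
So a contract of doubled length $2\ell_j$ starting at $s_j$ covers all of $[s_j,s_{j+2}]$, and the odd-indexed kept hires form a valid covering. Moreover, at the moment a kept hire begins, the immediately preceding kept hire's (doubled) contract may still be running, but the one before that already ended because $s_{j+2}-s_{j-2}\geq 2\ell_{j-2}$; hence at most two kept contracts are ever concurrent.

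Second I would bound the expected cost. Each kept hire contributes twice its original contribution (duration doubles, the applicant cost distribution is unchanged), while only every other hire is kept. A coupling with the underlying Markov chains from \S\ref{sec:markov} shows that the expected number of transitions to a fresh hire in the coarsened chain is at most the expected number of hires in the original chain, so the expected cost of the modified algorithm is at most $2\cdot\E[\ALG_n]$. Combined with Theorems~\ref{thm:uniform_upper}, \ref{thm:arbitrary}, and~\ref{thm:unknown} this yields the claim. The main obstacle is a careful treatment of the boundary: the last hire of the original algorithm may be an even-indexed one, whose omission would leave the tail of the horizon uncovered. This is resolved by always promoting the final hire to be kept (with doubled duration), which adds at most one hire's worth of expected cost and is therefore absorbed in the factor~$2$.
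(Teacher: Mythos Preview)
Your argument rests on two structural claims about the original algorithms that do not hold. You assert (i) $s_{j+1}-s_j\le \ell_j/2$ and (ii) $\ell_{j+1}\ge 2\ell_j$, but both fail because the threshold $\tau$ is \emph{not} monotone: after a hire at threshold $\tau$, the search may fail at $\tau/2$, then at $\tau$, then at $2\tau$, and so on, so the next hire can occur at an arbitrarily higher threshold. Concretely, in Algorithm~\ref{alg:multi_uniform}, if the first search at $\tau/2$ fails, the next hire may happen at threshold $2\tau$ after roughly $3.5/\tau$ steps (not $\le 2/\tau$), with duration $2/\tau=\ell_j/2$ (not $\ge 2\ell_j$). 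So neither your covering bound $s_{j+2}-s_j<2\ell_j$ nor the concurrency bound $s_{j+2}-s_{j-2}\ge 2\ell_{j-2}$ is justified. There is also an internal inconsistency: the step $\tfrac12\ell_j+\tfrac12\ell_{j+1}\le\tfrac32\ell_j$ needs $\ell_{j+1}\le 2\ell_j$, the opposite of your (ii).

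The paper's argument avoids all of this with a much simpler modification: double every hiring period and \emph{stay idle} (discard all applicants) during the first half; only in the second half do you run the original search. The one property needed is that, in every algorithm, the duration of a newly hired applicant exceeds the \emph{remaining} duration of the previous contract at the moment of hiring. This holds by construction --- the geometric sum of search times after a hire is bounded by the hire length, and the new hire's length is chosen so that the same geometric tail fits inside it --- and it does not require monotonicity of $\tau$ or $\ell_j$. Consequently, during the idle first half of a doubled contract the previous (doubled) contract necessarily expires, giving at most two concurrent employees; and the cost exactly doubles because every hire is kept with doubled duration.
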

\begin{proof}{Proof.}
We double the hiring times of the algorithms and stay idle during
the first half of the hiring period, i.e., we discard all applicants
encountered during that period. This doubling causes a loss of a factor not larger than $2$. Further, it has the effect that after waiting
for half of the hiring time, effectively, the remaining hiring time
is as before. This in turn implies that the employment period of any
previously hired applicant runs out while staying idle for a new applicant.
This is because the hiring time of a new applicant was defined to
be larger than the remaining hiring time of the previous one, and thus only ever two applicants are employed concurrently. \hfill \Halmos
\end{proof}

Lemma~\ref{lem:only-two} allows us to generalize our algorithms for input sequences of unknown length. Without knowledge of~$n$, we cannot stop our algorithm once an applicant is 
hired for more than the remaining time. 
However, if no more than two applicants are employed concurrently, it is guaranteed that we never employ more than a single 
additional applicant.

\begin{corollary}
Algorithms~\ref{alg:multi_uniform}--\ref{alg:unknown_dist} can be adapted to be competitive even when
	$n$ is not known.
\end{corollary}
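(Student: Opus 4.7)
The plan is to combine the doubling adaptation from Lemma~\ref{lem:only-two} with the removal of the $n$-dependent termination conditions. For each of Algorithms~\ref{alg:multi_uniform}, \ref{alg:multi_uniform2}, and~\ref{alg:unknown_dist}, I first apply Lemma~\ref{lem:only-two} to double the hiring times and introduce an initial idle period of equal length, ensuring that at most two applicants are ever under contract concurrently. I then delete the checks of the form ``if $i+\cdot > n$ then stop'', so that the algorithm no longer consults~$n$.

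The first step of the analysis is to verify that this modification preserves feasibility, i.e., that every time step $i \in [n]$ is covered by at least one active contract. The doubling adaptation from Lemma~\ref{lem:only-two} guarantees that a new hire always begins before the current contract expires, so coverage is maintained for all time steps regardless of~$n$.

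The second step is to bound the competitive ratio. The Markov chains analyzed in Sections~\ref{sec:uniform_first}, \ref{sec:uniform2}, and~\ref{sec:unknown_distribution} depend on~$n$ only through the number of states $k = \Theta(\log n)$; the expected number of hires and the expected per-hire cost are governed by the same expressions as before, since the transition probabilities and per-state costs are determined solely by the algorithm's internal threshold state, not by~$n$. Dropping the stopping rule introduces at most one additional ``final'' hire whose expected cost is bounded by the same conditional per-hire expectation used in the analysis. Combined with the factor-$2$ loss from the doubling, this yields a constant competitive ratio for each of the three algorithms.

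The main obstacle is arguing that this final hire --- whose duration may extend beyond time~$n$ --- does not incur substantially more cost than a typical hire. This holds because the threshold cost $\tau$ at the time of the final hire is determined solely by the current state of the underlying Markov chain (which is independent of~$n$), and so the expected cost of the last contract is of the same order as the expected per-hire cost appearing in the original competitive ratio analysis. The resulting additive $O(1)$ overhead is absorbed into the multiplicative constant, so the modified algorithms remain competitive in the unknown-$n$ setting.
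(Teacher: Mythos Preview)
Your approach is essentially the same as the paper's. The paper gives only a one-sentence justification: apply Lemma~\ref{lem:only-two} to restrict to two concurrent applicants, drop the $n$-dependent stop, and observe that the two-concurrent property guarantees at most a single additional applicant is hired compared to the stopping version. You expand this into a more careful argument, explicitly checking feasibility and bounding the cost of the extra hire via the Markov-chain state, but the underlying idea is identical.

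One small remark: your phrasing about the Markov chain ``depending on $n$ only through the number of states $k$'' is slightly imprecise, since after removing the stop condition the chain has no absorbing state at all. The cleaner way to say it---which is implicit in the paper's argument---is that once the algorithm is in the two-concurrent regime, the idle period following any hire whose (original) duration would have triggered the stop already extends past time~$n$, so at most one further hire can occur before the loop terminates. This directly bounds the overshoot without appealing to the Markov chain structure, and makes the ``single additional applicant'' claim immediate.
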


The question remains, whether we can stay competitive when only a single
applicant may be employed at a time. We refer to this setting as the
setting of \emph{sequential employment.} 
In the remaining part of this section, we show that the competitive ratio is $\Omega\!\left(\notnicefrac{\sqrt{n}}{\log n}\right)$ for any online algorithm, even when $X=\U[0,1]$. 
Note that the offline
optimum only uses sequential employment.

Let $\E_{n}$ denote the expected cost of the best online algorithm for $n$ applicants under sequential employment.
We give an optimal online algorithm (cf.~Algorithm~\ref{alg:sequential}) based on the values $\E_1, \E_2,\dots, \E_{n-1}$.
Since a single applicant needs to be employed at any time, the only
decision of the algorithm regards the respective hiring times. Interestingly,
our algorithm hires all but the last applicant only for a single unit
of time.

Before we prove this result, we need the following technical lemma.

\begin{restatable}{lemma}{lemsequentialtechnical}
The function $G(\tau):=\Prob{x\geq\tau}(\tau-\Exp{x\,|\,x\geq\tau})$
is non-decreasing.\label{lem:sequential_G}
\end{restatable}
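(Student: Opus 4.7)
The plan is to rewrite $G$ in a form from which monotonicity follows by inspection. Let $\bar{F}(\tau) := 1 - F(\tau) = \Pr[x \geq \tau]$. By definition of conditional expectation for a continuous distribution,
\begin{equation*}
\bar{F}(\tau) \cdot \E[x \mid x \geq \tau] = \int_\tau^\infty x f(x) \intd x,
\end{equation*}
so that
\begin{equation*}
G(\tau) = \tau \bar{F}(\tau) - \int_\tau^\infty x f(x) \intd x.
\end{equation*}

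From here I see two routes. The quickest one is to differentiate directly. Because $X$ is continuous with density $f$, the map $\tau \mapsto \int_\tau^\infty x f(x) \intd x$ is differentiable with derivative $-\tau f(\tau)$, and $\tau \mapsto \tau \bar{F}(\tau)$ has derivative $\bar{F}(\tau) - \tau f(\tau)$. Adding the two contributions gives
\begin{equation*}
G'(\tau) = \bar{F}(\tau) - \tau f(\tau) + \tau f(\tau) = \bar{F}(\tau) \geq 0,
\end{equation*}
so $G$ is non-decreasing, which is what needs to be shown.

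An equivalent, slightly more illuminating route is integration by parts on $\int_\tau^\infty x f(x) \intd x$: writing $f = -\bar{F}'$ and using that $x \bar{F}(x) \to 0$ as $x \to \infty$ (which is implicit under the standing assumption that $X$ has finite expectation, since $\E[\OPT_n]$ and hence $\E[X]$ is assumed finite throughout the paper), we obtain $\int_\tau^\infty x f(x) \intd x = \tau \bar{F}(\tau) + \int_\tau^\infty \bar{F}(x) \intd x$. Substituting back yields the compact identity
\begin{equation*}
G(\tau) = -\int_\tau^\infty \bar{F}(x) \intd x,
\end{equation*}
which is manifestly non-decreasing in $\tau$ (and non-positive, tending to $0$).

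The main obstacle, such as it is, is purely bookkeeping: one must ensure the tail integrals are finite, which relies on $\E[X] < \infty$ (needed already so that $\E[x \mid x \geq \tau]$ is finite and $G$ is well defined). Once integrability is granted, either derivation is a two-line computation. I would present the direct differentiation argument, since it avoids invoking the limit $x\bar F(x)\to 0$ and only requires the elementary fact that $F$ has density $f$.
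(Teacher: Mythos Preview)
Your proof is correct and follows essentially the same approach as the paper: both start from the identical rewriting $G(\tau) = \tau\bar F(\tau) - \int_\tau^\infty x f(x)\,\intd x$, and then verify monotonicity by elementary calculus. The only cosmetic difference is that the paper bounds the finite difference $G(\tau')-G(\tau)$ directly (using $\int_\tau^{\tau'} x f(x)\,\intd x \ge \tau\int_\tau^{\tau'} f(x)\,\intd x$) rather than differentiating, which sidesteps any appeal to the fundamental theorem of calculus but is otherwise the same computation.
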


\begin{proof}{Proof.}
We rewrite $G(\ct) = \ct\Prob{x\geq\ct}-\int_{\tau}^{\infty}xf(x)\intd x$ where $f$ is the density of $X$. Then, for $\ct'>\ct$, we have
\begin{align*}
G(\ct')-G(\ct) & = \ct'\Prob{x\geq\ct'}-\ct\Prob{x\geq\ct}+\int_{\ct}^{\ct'}xf(x)\intd x\\[6pt]
 & \geq \ct'\int_{\ct'}^{\infty} f(x)\intd x-\ct\int_{\ct}^{\infty} f(x)\intd x+\ct\int_{\ct}^{\ct'}\,f(x)\intd x\\[6pt]
 & \geq 0,
\end{align*}
which concludes the proof. \hfill \Halmos
\end{proof}

\begin{algorithm}[tb]
\caption{An optimal online algorithm for sequential employment.}
\label{alg:sequential}
\For{$i \set 1, \dots, n$}{ 
  \If{$x_i < \tau_{n-i} = \frac{\E_{n-i}}{n-i}$} {
    hire applicant $i$ for remaining time $n-i+1$\;
    \bf{stop}\;
  }
  \Else {
    hire applicant $i$ for one unit of time\;
  }
}
\end{algorithm}
We are now in position to prove that Algorithm~\ref{alg:sequential} is optimal.

\begin{restatable}{theorem}{thmsequentialopt}
Algorithm~\ref{alg:sequential} is an optimal online algorithm for
sequential employment.
\end{restatable}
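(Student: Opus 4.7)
The plan is to proceed by strong induction on $m$, simultaneously establishing two statements: (a) Algorithm~\ref{alg:sequential} is optimal for the $m$-step problem and attains expected cost $E_m$; and (b) the sequence $(E_j)_{j=0}^{m}$ is concave in $j$, i.e., $E_{j+1}-E_j$ is non-increasing. The base cases $m\in\{0,1,2\}$ are immediate: only $t\in\{1,m\}$ are feasible actions for $m\leq 2$, and hiring the first applicant for both of its two time slots yields $E_2\leq 2\Ex{x}$, so $E_2-E_1\leq \Ex{x}=E_1-E_0$.

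For the inductive step, fix $m\geq 3$ and assume (a) and (b) for all $m'<m$. By the Bellman recursion and the inductive optimality of the sub-problems,
\[
E_m \;=\; \int_0^\infty \min_{t\in\{1,\dots,m\}} \bigl\{tx + E_{m-t}\bigr\} f(x)\,\intd x .
\]
As a function of the integer variable $t$, the inner expression $tx+E_{m-t}$ is the sum of the linear term $tx$ and the sequence $(E_{m-t})_t$, which is concave in $t$ by (b) of the IH. Hence the whole expression is concave in $t$, so its minimum over $\{1,\dots,m\}$ is attained at an endpoint, $t=1$ or $t=m$. A direct comparison of $x+E_{m-1}$ against $mx$ yields precisely the threshold $\tau_{m-1}=E_{m-1}/(m-1)$ used by the algorithm, which establishes (a) for index $m$.

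To extend (b) to index $m$, I would use the identity $E_{j+1}=\Ex{x}+H_j$ with $H_k:=\Exx{x\sim X}{\min(E_k,kx)}$, which follows from the threshold rule derived above (since $\min(x+E_{k-1},kx)=x+\min(E_{k-1},(k-1)x)$). The key observation is that, for each fixed $x$, the sequence $k\mapsto \min(E_k,kx)$ is the pointwise minimum of a concave sequence (by (b)) and a linear sequence, and \emph{the minimum of concave functions is itself concave}. This last fact is a direct consequence of the superadditivity of the minimum, $\min(a+b,c+d)\geq \min(a,c)+\min(b,d)$, applied to a midpoint-concavity inequality. Taking expectations preserves concavity, so $H_k$ is concave in $k$ for $k\leq m-1$; consequently $E_m-E_{m-1}=H_{m-1}-H_{m-2}\leq H_{m-2}-H_{m-3}=E_{m-1}-E_{m-2}$, extending concavity of $E$ to index $m$.

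The main obstacle is the joint propagation of optimality and concavity, which hinges on the slightly nonobvious fact that a minimum of concave functions is concave. Lemma~\ref{lem:sequential_G} enters naturally through the equivalent closed form $E_m=m\Ex{x}+(m-1)G(\tau_{m-1})$ of the recursion: combined with the inductive concavity of $E$, the non-decreasing property of $G$ yields the monotonicity $\tau_m\leq \tau_{m-1}$ of the thresholds, confirming the consistency of the algorithm's threshold rule across sub-problem sizes and providing an alternative derivation of the structural properties used in the inductive step.
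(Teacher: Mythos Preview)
Your argument is correct and takes a genuinely different route from the paper. The paper carries, as its inductive invariant, the monotonicity of the thresholds $\tau_{n}\le\tau_{n-1}$; from this it argues directly that for $x_1<\tau_{n-1}$ every intermediate $t$ is beaten by $t=n$, and for $x_1\ge\tau_{n-1}$ every intermediate $t$ is beaten by $t=1$, using only the identity $E_{j}=j\tau_{j}$ and the ordering of the $\tau_j$. The monotonicity itself is then propagated via the closed form $\tau_{n}=\Ex{x}+\tfrac{n-1}{n}G(\tau_{n-1})$ together with Lemma~\ref{lem:sequential_G}. You instead carry the stronger invariant that $(E_j)$ is concave (which, with $E_0=0$, already implies $\tau_j$ is non-increasing), and you use this to get the endpoint property of the Bellman minimization in one stroke; the concavity is then extended via the clean observation that $k\mapsto\min(E_k,kx)$ is a pointwise minimum of a concave and a linear sequence, hence concave, so $H_k=\Ex{\min(E_k,kx)}$ is concave and $E_{k+1}=\Ex{x}+H_k$ inherits it. The upshot: your approach is more structural and dispenses with Lemma~\ref{lem:sequential_G} in the main induction (the lemma becomes merely a side remark), whereas the paper's approach is more hands-on but relies essentially on that monotonicity lemma for the inductive step. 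Both are equally valid; yours arguably exposes more clearly \emph{why} only the extreme hiring durations can be optimal.
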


\begin{proof}{Proof.}
Let $\ct_{i}:=\lfrac{\E_{i}}{i}$ be the threshold employed by Algorithm~\ref{alg:sequential} when $i$ applicants remain. For technical reasons, let $\tau_0$ be any constant greater than $\tau_1$. We prove the theorem by induction on $n$, additionally showing that $\ct_{n}\leq\ct_{n-1}$. For $n=1$, the algorithm is obviously optimal and $\tau_1 \leq \tau_0$ by definition. Consider the first applicant of cost $x_{1}$. With $\E_{0}:=0$, the expected cost of the optimal online algorithm follows the recursion 
\begin{align}
\min_{t\in\{1,\dots,n\}}\{x_{1}t+\E_{n-t}\}.\label{eq:opt_online}
\end{align}

Consider the case $x_{1}<\tau_{n-1} = \E_{n-1} / (n-1)$. We proceed to show that the minimum
(\ref{eq:opt_online}) is attained for $t=n$. By induction, for all
$t\in\{1,\dots,n-1\}$, we have $x_{1}<\ct_{n-1}\leq\ct_{t}$, and
thus
\[
nx_{1}=tx_{1}+(n-t)x_{1}<tx_{1}+\E_{n-t}.
\]

Now consider the case $x_{1}\geq\ct_{n-1}$. We need to show that
the minimum (\ref{eq:opt_online}) is attained for $t=1$. By induction,
for all $t\in\{2,\dots,n\}$, we have $\ct_{n-1}\leq\ct_{n-t}$, and
thus
\begin{eqnarray*}
tx_{1}+\E_{n-t} & = & x_{1}+(t-1)x_{1}+(n-t)\ct_{n-t}\\
 & \geq & x_{1}+(t-1)\ct_{n-1}+(n-t)\ct_{n-1}\\
 & = & x_{1}+\E_{n-1}.
\end{eqnarray*}

It remains to show $\ct_{n}\leq\ct_{n-1}$. From the above, we have
\[
\E_{n}=n\Prob{x<\ct_{n-1}} \Exp{x\,|\,x<\ct_{n-1}}+\Prob{x\geq\ct_{n-1}}(\Exp{x\,|\,x\geq\ct_{n-1}}+\E_{n-1}).
\]
Using 
\[
\Exp x=\Prob{x<\ct_{n-1}}\Exp{x\,|\,x<\ct_{n-1}}+\Prob{x\geq\ct_{n-1}}\Exp{x\,|\,x\geq\ct_{n-1}},
\]
this yields
\begin{eqnarray*}
\ct_{n} & = & \Exp x+\frac{1}{n}\Prob{x\geq\ct_{n-1}}\bigl(\E_{n-1}-(n-1)\Exp{x\,|\,x\geq\ct_{n-1}}\bigr)\\[6pt]
 & = & \Exp x+\frac{n-1}{n}\Prob{x\geq\ct_{n-1}}(\ct_{n-1}-\Exp{x\,|\,x\geq\ct_{n-1}}).
\end{eqnarray*}

Using Lemma~\ref{lem:sequential_G} (with $\ct_{n-1}\leq\ct_{n-2}$ by induction) and the fact that the second term is negative, we obtain
\begin{align*}
\tau_{n} & \leq  \Exp x+\frac{n-2}{n-1}\Prob{x\geq\ct_{n-1}}(\ct_{n-1}-\Exp{x\,|\,x\geq\ct_{n-1}})\\[6pt]
 & \leq  \Exp x+\frac{n-2}{n-1}\Prob{x\geq\ct_{n-2}}(\ct_{n-2}-\Exp{x\,|\,x\geq\ct_{n-2}})\\
 & = \ct_{n-1},
\end{align*}
which concludes the proof. \hfill \Halmos
\end{proof}

We derive the optimal competitive ratio for the case where $X=\U[0,1]$.

\begin{restatable}{lemma}{lemseqEsim}
For $X=\U[0,1]$, we have\label{lem:sequential_Esimplification}
\[
\E_{n}=\begin{cases}
\nicefrac{1}{2}, & \mathrm{for\,}n=1,\\
\E_{n-1}+\nicefrac{1}{2}-\frac{\E_{n-1}^{2}}{2(n-1)}, & \mathrm{for\,}n>1.
\end{cases}
\]
\end{restatable}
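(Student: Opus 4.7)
The plan is to derive the recursion directly from the proof of the preceding theorem, where the expected cost of the optimal online algorithm was already decomposed according to whether the first applicant is hired long-term (when $x_1 < \tau_{n-1}$) or only for one time step (when $x_1 \geq \tau_{n-1}$). Specifically, we use the identity established in that proof,
\begin{equation*}
\E_n = n\,\Pr[x < \tau_{n-1}]\,\E[x \mid x < \tau_{n-1}] + \Pr[x \geq \tau_{n-1}]\bigl(\E[x \mid x \geq \tau_{n-1}] + \E_{n-1}\bigr),
\end{equation*}
with $\tau_{n-1} = \E_{n-1}/(n-1)$. The base case $n=1$ is immediate: the single applicant must be hired for one time step, so $\E_1 = \E[x] = 1/2$.

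For the inductive step, I would first observe that $\tau_{n-1} = \E_{n-1}/(n-1) \leq 1$, since each of the $n-1$ time steps contributes at most $1$ to the offline cost, hence also to $\E_{n-1}$. This means the standard uniform formulas apply: $\Pr[x < \tau_{n-1}] = \tau_{n-1}$, $\Pr[x \geq \tau_{n-1}] = 1 - \tau_{n-1}$, $\E[x \mid x < \tau_{n-1}] = \tau_{n-1}/2$, and $\E[x \mid x \geq \tau_{n-1}] = (1+\tau_{n-1})/2$. Substituting these yields
\begin{equation*}
\E_n = \frac{n\tau_{n-1}^2}{2} + (1-\tau_{n-1})\frac{1+\tau_{n-1}}{2} + (1-\tau_{n-1})\E_{n-1} = \frac{1}{2} + \frac{(n-1)\tau_{n-1}^2}{2} + (1-\tau_{n-1})\E_{n-1}.
\end{equation*}

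Finally, replacing $\tau_{n-1}$ by $\E_{n-1}/(n-1)$ and collecting terms gives
\begin{equation*}
\E_n = \frac{1}{2} + \frac{\E_{n-1}^2}{2(n-1)} + \E_{n-1} - \frac{\E_{n-1}^2}{n-1} = \E_{n-1} + \frac{1}{2} - \frac{\E_{n-1}^2}{2(n-1)},
\end{equation*}
which is the claimed formula. The whole argument is essentially a plug-and-simplify, so I do not anticipate any genuine obstacle; the only point that warrants a sentence of justification is the bound $\tau_{n-1} \leq 1$ needed to legitimately apply the uniform formulas.
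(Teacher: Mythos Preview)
Your proposal is correct and follows essentially the same route as the paper: both start from the identity $\E_n = n\,\Pr[x < \tau_{n-1}]\,\E[x \mid x < \tau_{n-1}] + \Pr[x \geq \tau_{n-1}]\bigl(\E[x \mid x \geq \tau_{n-1}] + \E_{n-1}\bigr)$ established in the preceding theorem, substitute the uniform-distribution formulas, and simplify. Your explicit remark that $\tau_{n-1}\le 1$ is a small addition the paper leaves implicit; otherwise the arguments are identical.
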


\begin{proof}{Proof.}
The case $n=1$ follows from $\Exp x=\nicefrac{1}{2}$. For $n>1$,
we use the fact that Algorithm~\ref{alg:sequential} is optimal.
We obtain
\begin{align*}
\E_{n} & = n\Prob{x<\ct_{n-1}}\Exp{x\,|\,x<\ct_{n-1}}+\Prob{x\geq\ct_{n-1}}(\Exp{x\,|\,x\geq\ct_{n-1}}+\E_{n-1})\\[6pt]
 & = n \ct_{n-1}\cdot\frac{1}{2}\ct_{n-1}+(1-\ct_{n-1})\left(\frac{1+\ct_{n-1}}{2}+\E_{n-1}\right)\\[6pt]
 & = \frac{n\ct_{n-1}^{2}}{2}+\frac{1}{2}+\E_{n-1}-\frac{1}{2}\ct_{n-1}^{2}-\E_{n-1}\ct_{n-1}\\[6pt]
 & = \E_{n-1}+\frac{1}{2}-\frac{\E_{n-1}^{2}}{2(n-1)},
\end{align*}
which concludes the proof. \hfill \Halmos
\end{proof}

With this, we can bound the expected cost of any online algorithm.

\begin{lemma}
\label{lem:EnInSqrt}
For $X = \mathcal{U}[0,1]$, we have $\sqrt{n+1}-1\leq\E_{n}\leq\sqrt{n}$.
\end{lemma}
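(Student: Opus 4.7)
The plan is to prove both inequalities by induction on $n$, exploiting the recurrence $\E_n = \E_{n-1} + \nicefrac{1}{2} - \nicefrac{\E_{n-1}^2}{2(n-1)}$ established in Lemma~\ref{lem:sequential_Esimplification}. Defining $f_n(x) := x + \nicefrac{1}{2} - \nicefrac{x^2}{2(n-1)}$, so that $\E_n = f_n(\E_{n-1})$ for $n \geq 2$, the crucial preliminary observation is that $f_n'(x) = 1 - \nicefrac{x}{(n-1)} \geq 0$ on $[0, n-1]$, so $f_n$ is non-decreasing there. This means every inductive bound on $\E_{n-1}$ lying in $[0, n-1]$ translates into a corresponding bound on $\E_n$ by simply evaluating $f_n$ at the extremal value.

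For the upper bound $\E_n \leq \sqrt{n}$, the base case $n=1$ is immediate and the inductive step is particularly clean, because $\sqrt{n-1}$ is the unique positive fixed point of $f_n$. Thus the hypothesis $\E_{n-1} \leq \sqrt{n-1}$ together with monotonicity yields $\E_n = f_n(\E_{n-1}) \leq f_n(\sqrt{n-1}) = \sqrt{n-1} \leq \sqrt{n}$.

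The main obstacle is the inductive step for the lower bound $\E_n \geq \sqrt{n+1}-1$. After checking the base case $\nicefrac{1}{2} \geq \sqrt{2}-1$, the hypothesis $\E_{n-1} \geq \sqrt{n}-1$ (which lies in $[0, n-1]$) combined with monotonicity of $f_n$ gives $\E_n \geq f_n(\sqrt{n}-1)$. The key algebraic simplification exploits the factorization $n-1 = (\sqrt{n}-1)(\sqrt{n}+1)$ to collapse the quadratic correction, producing $f_n(\sqrt{n}-1) = \nicefrac{n}{(\sqrt{n}+1)}$. To close the induction, I would then invoke the identity $\sqrt{n+1}-1 = \nicefrac{n}{(\sqrt{n+1}+1)}$ (obtained by multiplying by the conjugate), so that the desired inequality reduces to the obvious bound $\sqrt{n}+1 \leq \sqrt{n+1}+1$.
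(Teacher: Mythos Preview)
Your proof is correct and noticeably cleaner than the paper's. Both arguments rest on the recurrence from Lemma~\ref{lem:sequential_Esimplification}, but the paper treats the two bounds asymmetrically: the upper bound is obtained by contradiction (if $\E_n > \sqrt{n}$ then $\E_{n+1} < \E_n$, contradicting monotonicity of the sequence $\E_n$), while the lower bound proceeds by an induction that splits into two cases depending on whether $\E_n \geq \sqrt{n+1} - 0.8$, and requires numerically checking all $n < 7$ by hand. Your approach, by contrast, handles both bounds uniformly via the monotonicity of $f_n$ on $[0,n-1]$ together with the fixed-point identity $f_n(\sqrt{n-1}) = \sqrt{n-1}$ and the factorization $n-1 = (\sqrt{n}-1)(\sqrt{n}+1)$; no case split or small-$n$ verification is needed. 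One small point worth making explicit: in the lower-bound step you need $\E_{n-1} \leq n-1$ (not just $\sqrt{n}-1 \leq n-1$) for monotonicity of $f_n$ to apply on the whole interval $[\sqrt{n}-1,\E_{n-1}]$; this follows immediately from the upper bound $\E_{n-1} \leq \sqrt{n-1}$ already established, so the two inductions should be run simultaneously or the upper bound proved first.
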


\begin{proof}{Proof.}
For the sake of contradiction, assume that $\E_{n}>\sqrt{n}$ for
some value of $n$. With Lemma~\ref{lem:sequential_Esimplification},
we obtain
\[
\E_{n+1}=\E_{n}+\frac{1}{2}-\frac{\E_{n}^{2}}{2n}<\E_{n}+\frac{1}{2}-\frac{n}{2n}=\E_{n},
\]
which is a contradiction with $\E_{n}$ being non-decreasing.

Let $h(n):=\sqrt{n+1}-1$. It is easy to check that $\E_{n}\geq h(n)$
for $n<7$. For $n\geq7$, we use induction on $n$. To that end,
assume $\E_{n}\geq h(n)$ holds and consider $\E_{n+1}$. Clearly,
$\E_{n+1}\geq\E_{n}$. If $\E_{n}\geq\sqrt{n+1}-0.8$, it thus suffices
to show that $h(n+1)-h(n)\leq0.2$. Since $h$ is concave and $n\geq7$,
we indeed have 
\[
h(n+1)-h(n)\leq h'(n)=\frac{1}{2\sqrt{n+1}}\leq0.2.
\]
Finally, let $\E_{n}<\sqrt{n+1}-0.8$. Using $n\geq7$, we show that
$\E_{n}$ grows faster than $h(n)$:
\begin{align*}
\E_{n+1}-\E_{n} & = \frac{1}{2}-\frac{\E_{n}^{2}}{2n} \geq \frac{1}{2}-\frac{(\sqrt{n+1}-0.8)^{2}}{2n} = \frac{160\sqrt{n+1}-164}{200n}\\[6pt]
 & \geq \frac{\sqrt{n+1}}{2(n+1)} = h'(n) \geq h(n+1)-h(n),
\end{align*}
which concludes the proof. \hfill \Halmos
\end{proof}

Together with Lemma~\ref{lem:opt_uniform}, we immediately get the following bound on the competitive ratio of any online algorithm.

\begin{restatable}{theorem}{thmsequentialsqrtlog}\label{thm:sequential_CR}
The competitive ratio of the best online algorithm for sequential
employment and a uniform distribution $X=\U[0,1]$ is $\Theta\left(\sqrt{n} / \log n \right)$.
\end{restatable}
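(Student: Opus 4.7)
The plan is to combine the bounds on $\E_n$ from Lemma~\ref{lem:EnInSqrt} with the formula for the offline optimum from Lemma~\ref{lem:opt_uniform} and observe that the ratio of the two gives exactly $\Theta(\sqrt{n}/\log n)$. Since Algorithm~\ref{alg:sequential} was shown to be optimal among online algorithms under sequential employment, its expected cost equals $\E_n$, and so the competitive ratio of the best online algorithm under sequential employment is precisely $\E_n/\E[\OPT_n]$.

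For the upper bound, I would invoke the upper half of Lemma~\ref{lem:EnInSqrt} to get $\E_n \leq \sqrt{n}$ and combine with $\E[\OPT_n] = \Hn_{n+1}-1$ from Lemma~\ref{lem:opt_uniform}, yielding
\[
\frac{\E_n}{\E[\OPT_n]} \;\leq\; \frac{\sqrt{n}}{\Hn_{n+1}-1} \;=\; O\!\left(\frac{\sqrt{n}}{\log n}\right),
\]
since $\Hn_{n+1} - 1 = \ln(n) + \gamma - 1 + o(1) = \Theta(\log n)$.

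For the matching lower bound, I would use the lower half of Lemma~\ref{lem:EnInSqrt}, which gives $\E_n \geq \sqrt{n+1} - 1$ for \emph{every} online algorithm under sequential employment (because $\E_n$ is the cost of the optimal one). Combining with $\E[\OPT_n] = \Hn_{n+1}-1 = \Theta(\log n)$ gives
\[
\frac{\E_n}{\E[\OPT_n]} \;\geq\; \frac{\sqrt{n+1}-1}{\Hn_{n+1}-1} \;=\; \Omega\!\left(\frac{\sqrt{n}}{\log n}\right).
\]

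The two bounds together establish the claimed $\Theta(\sqrt{n}/\log n)$. There is no real obstacle here: the heavy lifting has already been done in Lemmas~\ref{lem:opt_uniform}, \ref{lem:sequential_Esimplification}, and \ref{lem:EnInSqrt}, and in the optimality of Algorithm~\ref{alg:sequential}. The only care needed is to note that the lower bound applies to \emph{all} online algorithms under sequential employment, which follows from the definition of $\E_n$ as the expected cost of the best such algorithm.
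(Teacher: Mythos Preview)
Your proposal is correct and is exactly the argument the paper intends: it simply combines Lemma~\ref{lem:EnInSqrt} with Lemma~\ref{lem:opt_uniform}, noting that $\Hn_{n+1}-1=\Theta(\log n)$ and that $\E_n$, being the cost of the optimal online algorithm under sequential employment, lower-bounds the cost of every such algorithm. The paper itself gives no further detail beyond stating that the theorem follows immediately from these lemmas.
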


\section{Analysis of the Markov Chains}
\label{sec:markov}

In this section, we study the Markov chains that govern the evolution of the threshold values of our algorithms. 

\subsection{Markov Chain \boldmath$\hat{M}(p,k)$}
\label{sec:markov_a}

We start with the simple Markov chain $\hat{M}(p,k)$ used in \S~\ref{sec:uniform_first} and \S~\ref{sec:unknown_distribution}. The Markov chain has states $0,\dots,k$ and transition probabilities as shown in Figure~\ref{fig:markov_a_appendix}.

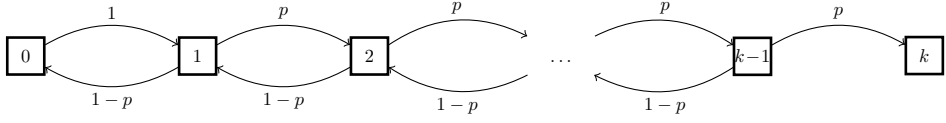
\begin{figure}[bt]
 
\centering
\begin{tikzpicture}[scale=0.3, every node/.style={scale=0.7}]
\matrix (m) [matrix of nodes, row sep=3em, column sep=5em,
	nodes={draw, rectangle, line width=1, text centered, minimum size=2em}]
    { \\ $0$ & $1$ & $2$  & |[draw=none]|$\phantom{1k}\dots\phantom{1k}$ & $\!\!k\!-\!1\!\!$  & $k$ \\ };

\draw[->] (m-2-1) edge [bend left] node [midway,above] {$1$} (m-2-2);
\draw[->] (m-2-2) edge [bend left] node [midway,below] {$1-p$} (m-2-1);

\draw[->] (m-2-2) edge [bend left] node [midway,above] {$p$} (m-2-3);
\draw[->] (m-2-3) edge [bend left] node [midway,below] {$1-p$} (m-2-2);

\draw[->] (m-2-3) edge [bend left] node [midway,above] {$p$} (m-2-4);
\draw[->] (m-2-4) edge [bend left] node [midway,below] {$1-p$} (m-2-3);

\draw[->] (m-2-4) edge [bend left] node [midway,above] {$p$} (m-2-5);
\draw[->] (m-2-5) edge [bend left] node [midway,below] {$1-p$} (m-2-4);

\draw[->] (m-2-5) edge [bend left] node [midway,above] {$p$} (m-2-6);


\foreach \x [count=\y] in {2,3}{
}

\end{tikzpicture}

\caption{Markov chain used in \S~\ref{sec:uniform_first} and \S~\ref{sec:unknown_distribution}. Nodes correspond to states. \label{fig:markov_a_appendix}}
\end{figure}

In the following, we compute the expected number of visits to each state.
\begin{lemma}
\label{lem:markov_a_visit}
Let $p > 1/2$ and $k \in \mathbb{N}$. Starting in state $0$, the expected number of visits to each state~$j$ of the Markov chain $\hat{M}(p,k)$ is at most $\frac{1}{2p-1}$.	
\end{lemma}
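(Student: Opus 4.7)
The plan is to compute, for each state $j$, the expected number of visits $\E[N_j]$ starting from state $0$ via a strong-Markov-property plus gambler's-ruin argument. Since $p > 1/2$ the chain has positive drift and is absorbed at $k$ almost surely; by the nearest-neighbor transition structure, the walk cannot skip any state on its way to $k$, so for every $j < k$ state $j$ is visited with probability one. The strong Markov property applied at the first visit to $j$ then gives $\E[N_j \mid \text{start at } 0] = \E[N_j \mid \text{start at } j]$, and the latter equals $1/(1-r_j)$, where $r_j$ is the probability of returning to $j$ before absorption. The task thus reduces to computing $r_j$.

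For $0 < j < k$ I would condition on the first step out of $j$. With probability $1-p$ we move to $j-1$, from which a return to $j$ occurs with probability $1$ (since $k$ can only be reached by crossing $j$). With probability $p$ we move to $j+1$, from which the classical gambler's-ruin formula with barriers $j$ and $k$ yields a return probability of $\beta(1-\beta^{k-j-1})/(1-\beta^{k-j})$, where $\beta := (1-p)/p < 1$. Combining these two contributions and using the identity $p\beta = 1-p$, the exponential terms telescope and one should obtain
\[
1 - r_j \;=\; \frac{2p-1}{1 - \beta^{k-j}},
\]
hence $\E[N_j] = (1-\beta^{k-j})/(2p-1) \leq 1/(2p-1)$, as desired.

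The two boundary cases are handled separately. For $j=k$, trivially $\E[N_k]=1 \leq 1/(2p-1)$ since $p \leq 1$. For $j=0$, the first step is deterministic to state $1$, so $r_0$ is directly the gambler's-ruin probability of hitting $0$ before $k$ starting at $1$, namely $r_0 = (\beta - \beta^k)/(1 - \beta^k)$; using $1-\beta=(2p-1)/p$, I would obtain $\E[N_0] = p(1-\beta^k)/(2p-1) \leq 1/(2p-1)$, again using $p \leq 1$.

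The main obstacle is purely algebraic: verifying that the two contributions to $r_j$ in the interior case combine so that $1-r_j$ simplifies exactly to $(2p-1)/(1-\beta^{k-j})$. The probabilistic structure (almost-sure absorption, geometric distribution of return times, gambler's ruin) is entirely standard. An alternative would be to solve the linear system $v_j = \delta_{0,j} + \sum_i v_i P_{i,j}$ directly with the ansatz $v_j = a + b\beta^{k-j}$, but the strong-Markov reduction seems cleaner and separates the boundary behavior at $0$ from the generic case.
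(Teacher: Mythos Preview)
Your argument is correct, and the algebra you flag as the main obstacle does go through: for $0<j<k$ one has
\[
1-r_j \;=\; p - p\cdot\frac{\beta-\beta^{k-j}}{1-\beta^{k-j}}
\;=\; \frac{p(1-\beta)}{1-\beta^{k-j}}
\;=\; \frac{2p-1}{1-\beta^{k-j}},
\]
since $1-\beta=(2p-1)/p$, exactly as you claim. The boundary cases are also fine, and your closed forms $(1-\beta^{k-j})/(2p-1)$ and $p(1-\beta^{k})/(2p-1)$ agree with the paper's explicit expressions for $v_j$ and $v_0$ after a one-line simplification.

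The route, however, is genuinely different from the paper's. The paper writes down the balance equations $v_j=(1-p)v_{j+1}+pv_{j-1}$ (with the appropriate boundary modifications at $0$, $k-1$, $k$), solves the resulting linear recurrence via its characteristic polynomial, and then bounds the explicit solution. Your approach replaces this algebraic solve by two probabilistic reductions: the strong Markov property collapses the problem to a return-probability computation at each fixed $j$, and the classical gambler's-ruin formula handles that return probability directly. What your approach buys is conceptual transparency---the bound $1/(2p-1)$ appears as $1/(p(1-\beta))$, i.e.\ the reciprocal of the escape probability for the unbounded biased walk, and the factor $1-\beta^{k-j}$ is visibly the finite-size correction from the absorbing barrier. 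What the paper's approach buys is self-containment: it does not invoke gambler's ruin as a black box, and it fits the same template used later for the two-row chain $\hat{N}(p,k)$, where no off-the-shelf ruin formula is available.
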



\begin{proof}{Proof.}
Let $v_{j}$ denote the
expected number of visits to state $j$, when starting from state
$0$. We derive that the values $v_j$, $j \in \{0,\dots,k\}$ satisfy the following equations
\begin{subequations} \label{eq:recurrence_exp_visits}
\begin{align}
v_{k} & = 1,\label{eq:recurrence_exp_visits_a1}\\[6pt]
v_{k} & = pv_{k-1}, \label{eq:recurrence_exp_visits_a2}\\[6pt]
v_{k-1} & = p v_{k-2},\label{eq:recurrence_exp_visits_b}\\[6pt]
v_{j} & =(1-p)v_{j+1}+p v_{j-1} & & \text{ for all }j\in\{2,\dots,k-2\},\label{eq:recurrence_exp_visits_c}\\[6pt]
v_{1} & = v_{0}+(1-p) v_{2},\label{eq:recurrence_exp_visits_d}\\[6pt]
v_{0} & =1 + (1-p)v_{1},\label{eq:recurrence_exp_visits_e}
\end{align}
\end{subequations} where \eqref{eq:recurrence_exp_visits_a1} follows
from the fact that $k$ is the absorbing state, \eqref{eq:recurrence_exp_visits_a2} uses that state $k$ is reached only from state $k-1$. Equation \eqref{eq:recurrence_exp_visits_b}
follows since state $k-1$ can be reached from state
$k-2$ only. Equation \eqref{eq:recurrence_exp_visits_c} follows from the fact, that we reach state $j$ from $j-1$ and $j+1$ and leave states $j-1$ and $j+1$ to $j$ with
a probability of $p$ and~$1-p$, respectively. As state~0 is left
with probability~$1$ towards its successor, Equation~(\ref{eq:recurrence_exp_visits_d})
holds as special case. Further, for state $0$, we get Equation \eqref{eq:recurrence_exp_visits_e}
since 0 is the starting state and can only be reached from state $1$.

Note that \eqref{eq:recurrence_exp_visits_a1} and \eqref{eq:recurrence_exp_visits_a2} imply $v_{k-1} = 1/p$ which by \eqref{eq:recurrence_exp_visits_b} implies $v_{k-2} = 1/p^2$. With these start values \eqref{eq:recurrence_exp_visits_c} uniquely defines a homogenous recurrence relation on $v_{1}, \dots, v_{k-1}$ with
\begin{align*}
v_{j} &= \frac{1}{p}v_{j+1} - \frac{1-p}{p}v_{j+2}	& &\text{ for all $j\in \{2,\dots,k-2\}$}.
\intertext{Solving this recurrence by the method of characteristic equations yields that the characteristic polynomial $x^2 - \frac{1}{p}x + \frac{1-p}{p}$ has roots $\frac{1}{p}-1$ and $1$ so that the explicit solution is}
v_{j} &= \lambda_1 \biggl(\frac{1}{p} -1\biggr)^{\!\!k-j-1} + \lambda_2	
\end{align*}
for some parameters $\lambda_1, \lambda_2 \in \mathbb{R}$. Choosing $\lambda_1$ and $\lambda_2$ such that the equations $v_{k-1} = 1/p$ and $v_{k-2} = 1/p^2$ are satisfied gives
\begin{align*}
\lambda_1 &= \frac{1}{2p - 1} \biggl(1- \frac{1}{p}\biggr), & \lambda_2 &= \frac{1}{p} - \frac{1}{2p - 1} \biggl(1- \frac{1}{p}\biggr).
\end{align*}
As a result, for~$j \in \{1,\dots,k\}$, we obtain
\begin{align}
v_{j} &=  \frac{1}{2p - 1} \biggl(1- \frac{1}{p}\biggr)\biggl(\frac{1}{p} -1\biggr)^{\!\!k-j-1} + \frac{1}{p} - \frac{1}{2p - 1} \biggl(1- \frac{1}{p}\biggr)\notag\\[6pt]
&=	\frac{1}{2p-1}\biggl[\biggl(\frac{1}{p}-1\biggr)-\biggl(\frac{1}{p}-1\biggr)^{\!\!k-j} \biggr] + \frac{1}{p}. \label{eq:recursion_gone}
\end{align}

Finally, $v_0$ is defined via \eqref{eq:recurrence_exp_visits_e}. Observe that, together with \eqref{eq:recursion_gone}, this satisfies \eqref{eq:recurrence_exp_visits_d} as required.

It remains to show that $v_j \leq \frac{1}{2p-1}$ for all $j \in \{0,\dots,k\}$. For $j\in \{1,\ldots,k-1\}$
we use Equation~(\ref{eq:recursion_gone}) and the fact that $p>\notnicefrac{1}{2}$ to obtain
\begin{align*}
v_{j} & =\frac{1}{2p-1}\left[\biggl(\frac{1}{p}-1\biggr) -\left(\frac{1}{p}-1\right)^{k-j}\right]+\frac{1}{p}\\[6pt]
 & \leq\frac{1}{2p-1}\left(\frac{1}{p}-1\right)+\frac{1}{p}\\
 & = \frac{p}{p\left(2p-1\right)} =\frac{1}{2p-1}.
\end{align*}
For $j=0$ we have by Equation~\eqref{eq:recurrence_exp_visits_e}
\begin{align*}
v_0 = 1 + (1-p)v_1 	\leq 1 +\frac{1-p}{2p-1} = \frac{p}{2p-1} \leq \frac{1}{2p-1}
\end{align*}
which completes the proof. \hfill \Halmos
\end{proof}


\subsection{Markov Chain \boldmath$\hat{N}(p,k)$}
\label{sec:markov_b}

In this section, we study the Markov chain $\hat{N}(p,k)$ used in \S~\ref{sec:uniform2} and \S~\ref{arbitrary_distributions}. The Markov chain has states $A_j$ and $B_j$, for $j\in\{0,\dots,k\}$ and transition probabilities as shown in Figure~\ref{fig:markov_b_appendix}. 

We start to bound the expected number of transitions from an $A$-state to a $B$-state.

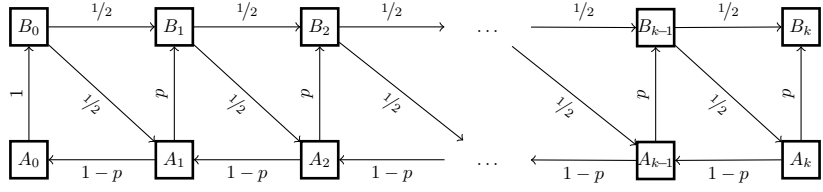
\begin{figure}[bt]
    \centering
\begin{tikzpicture}[scale=0.3, every node/.style={scale=0.7}]
\matrix (m) [matrix of nodes, row sep=3.5em, column sep=4em,
	nodes={draw, rectangle, line width=1, text centered, minimum size=2em}]
    { $B_0$ & $B_1$ & $B_2$ &|[draw=none]| $\phantom{B_1}\dots\phantom{B_1}$  & $\!\!\!B_{k\!-\!1}\!\!\!$ & $B_{k}$ \\
      $A_0$ & $A_1$ & $A_2$ &|[draw=none]| $\phantom{A_1}\dots\phantom{A_1}$  & $\!\!\!A_{k\!-\!1}\!\!\!$  & $A_{k}$ \\ };
\draw[->] (m-2-1) -- (m-1-1) node [above,midway,sloped] {$1$};
\draw[->] (m-2-2) -- (m-1-2) node [above,midway,sloped] {$p$};
\draw[->] (m-2-3) -- (m-1-3) node [above,midway,sloped] {$p$};
\draw[->] (m-2-5) -- (m-1-5) node [above,midway,sloped,rotate=180] {$p$};
\draw[->] (m-2-6) -- (m-1-6) node [above,midway,sloped,rotate=180] {$p$};

\draw[->] (m-2-2) -- (m-2-1) node [below,midway] {$1-p$};
\draw[->] (m-2-3) -- (m-2-2) node [below,midway] {$1-p$};
\draw[->] (m-2-4) -- (m-2-3) node [below,midway] {$1-p$};
\draw[->] (m-2-5) -- (m-2-4) node [below,midway] {$1-p$};
\draw[->] (m-2-6) -- (m-2-5) node [below,midway] {$1-p$};

\foreach \x [count=\y] in {2,3,4,5,6}{
    \draw[->] (m-1-\y) -- (m-1-\x) node [above,midway,sloped] {$\nicefrac{1}{2}$};
    \draw[->] (m-1-\y) -- (m-2-\x) node [below,midway,sloped] {$\nicefrac{1}{2}$};
}


\end{tikzpicture}
\caption{Markov chain $\hat{N}(p,k)$ with homogenous transition probability $p$ and $k+1$ states used in \S~\ref{sec:uniform2} and \S~\ref{arbitrary_distributions}. Nodes correspond to states. \label{fig:markov_b_appendix}}
\end{figure}


\begin{lemma}
\label{lem:markov_b_transitions1}
Starting in state $A_0$ of Markov chain $\hat{N}(p,k)$, the expected number of transitions from an $A$-state to a $B$-state is at most
\begin{align*}
h = \frac{kp}{3p-1} - \frac{4p(1-2p)}{(3p-1)^2} + \biggl(\frac{1-p}{3p-1}\biggr)^{\!\!2}\!\!\biggl(\frac{2(1-p)}{1+p}\biggr)^{\!\!k}.	
\end{align*}
\end{lemma}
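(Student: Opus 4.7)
The plan is to count $A$-to-$B$ transitions indirectly via the expected visits $a_j$ and $b_j$ to states $A_j$ and $B_j$ respectively. Since $B_j$ can be entered only from $A_j$ or from $B_{j-1}$ (the latter with probability $\tfrac{1}{2}$), the expected number of $A_j$-to-$B_j$ transitions equals $b_j - \tfrac{1}{2} b_{j-1}$ for $j \ge 1$, and $b_0$ for $j=0$. Telescoping the sum over $j$,
\[
h \;=\; b_k \,+\, \frac{1}{2}\sum_{j=0}^{k-1} b_j,
\]
and $b_k = 1$ because $B_k$ is absorbing. So it suffices to determine $b_0,\dots,b_{k-1}$.

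Next, I would write down the standard first-step equations for the expected visits: $a_0 = 1 + (1-p) a_1$, $a_j = \tfrac{1}{2} b_{j-1} + (1-p) a_{j+1}$ for $1 \le j \le k-1$, $a_k = \tfrac{1}{2} b_{k-1}$, together with $b_0 = a_0$ and $b_j = p\, a_j + \tfrac{1}{2} b_{j-1}$ for $j \ge 1$. Solving the last relation for $a_j = \tfrac{1}{p}\bigl(b_j - \tfrac{1}{2} b_{j-1}\bigr)$ and substituting into the equation for $a_j$ eliminates the $a$'s and produces the homogeneous second-order linear recurrence
\[
2(1-p)\, b_{j+1} \,-\, (3-p)\, b_j \,+\, (1+p)\, b_{j-1} \;=\; 0
\]
valid for $1 \le j \le k-1$. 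Its characteristic polynomial $2(1-p) x^2 - (3-p) x + (1+p)$ has discriminant $(3p-1)^2$, so the roots are $1$ and $\frac{1+p}{2(1-p)}$, and the general solution reads $b_j = C_1 + C_2 \bigl(\frac{1+p}{2(1-p)}\bigr)^{j}$.

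The two constants $C_1, C_2$ are pinned down by the boundary data at the top of the chain: $b_k = 1$ together with $b_{k-1} = \frac{2}{1+p}$, the latter following by combining $a_k = \tfrac{1}{2} b_{k-1}$ with $b_k = p\, a_k + \tfrac{1}{2} b_{k-1}$. Substituting the resulting closed form of $b_j$ into $h = 1 + \tfrac{1}{2}\sum_{j=0}^{k-1} b_j$ and evaluating the geometric series with ratio $\frac{2(1-p)}{1+p} < 1$ (which holds whenever $p > \tfrac{1}{3}$) yields an expression of the desired shape. The only piece of algebra that will need care is matching the constant term: the computation naturally produces $1 - \bigl(\frac{1-p}{3p-1}\bigr)^{2}$, which must be recast as $-\frac{4p(1-2p)}{(3p-1)^2}$ via the identity $(3p-1)^2 - (1-p)^2 = 4p(2p-1)$. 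Apart from this bookkeeping, I do not anticipate any subtler obstacle.
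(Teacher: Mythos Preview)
Your proposal is correct and complete; the recurrence, its roots, the boundary values $b_k=1$, $b_{k-1}=\tfrac{2}{1+p}$, and the final algebraic identity all check out. (One point you leave implicit: after fixing $C_1,C_2$ from the top two conditions, the bottom relation $b_0 = 1 + \tfrac{1-p}{p}\bigl(b_1-\tfrac12 b_0\bigr)$ is automatically satisfied, as it must be since the flow equations have a unique solution.)

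Your route is genuinely different from the paper's, though. The paper takes the \emph{backward} view: it lets $a_j$ (resp.\ $b_j$) denote the expected number of remaining $A$-to-$B$ transitions when starting from $A_j$ (resp.\ $B_j$), writes the first-step recurrences $b_k=0$, $b_j=\tfrac12 b_{j+1}+\tfrac12 a_{j+1}$, $a_j=p(b_j+1)+(1-p)a_{j-1}$, $a_0=1+b_0$, and then simply \emph{verifies} a guessed closed form; the answer is just $a_0$. You instead take the \emph{forward} occupation-measure view: $a_j,b_j$ are expected visit counts, you express each $A_j\!\to\!B_j$ count as $b_j-\tfrac12 b_{j-1}$, telescope to $h=b_k+\tfrac12\sum_{j<k}b_j$, derive a homogeneous second-order recurrence on the $b_j$'s, solve it via the characteristic equation, and sum the resulting geometric series. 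The paper's guess-and-check is shorter once the closed form is in hand; your derivation is longer but is constructive and would let a reader discover the formula rather than merely confirm it.
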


\begin{proof}{Proof.}
Let $a_{j}$ (respectively~$b_{j}$) denote the expected
number of transitions from an $A$-state to a $B$-state, when starting
from state $A_{j}$ (respectively~$B_{j}$). We get \begin{subequations}
\label{eq:recurrence2} 
\begin{align}
b_{k} & =0,\label{eq:recurrenceB1}\\[6pt]
b_{j} & =\frac{1}{2}b_{j+1}+\frac{1}{2}a_{j+1} &  & \text{ for all }j\in\{0,\dots,k-1\},\label{eq:recurrenceB2}\\[6pt]
a_{j} & =p(b_{j}+1)+(1-p) a_{j-1} &  & \text{ for all }j\in\{1,\dots,k\},\label{eq:recurrenceA1}\\[6pt]
a_{0} & =1+b_{0}.\label{eq:recurrenceA2}
\end{align}
\end{subequations}
Defining $\beta=\frac{2(1-p)}{1+p}$, for $j\in\{0,\ldots,k\}$, it is straightforward to check that (\ref{eq:recurrenceB1}), (\ref{eq:recurrenceB2}) and
(\ref{eq:recurrenceA1}) are fulfilled by 
\begin{align*}
a_{j} & =\frac{(k-j+2)p}{3p-1}-\beta^{j}\frac{2p(1-p)}{(3p-1)^{2}}+\beta^{k}\frac{(1-p)^{2}}{(3p-1)^{2}},\quad\text{and}\\[6pt]
b_{j} & =\frac{(k-j)p}{3p-1}-\beta^{j}\frac{(1-p)^{2}}{(3p-1)^{2}}+\beta^{k}\frac{(1-p)^{2}}{(3p-1)^{2}}.
\end{align*}
It follows that the expected number of transitions from an $A$-state
to a $B$-state when starting at $A_{0}$ is 
\begin{align*}
a_{0} & =\frac{(k+2)p}{3p-1}-\frac{2p(1-p)}{(3p-1)^{2}}+\beta^{k}\frac{(1-p)^{2}}{(3p-1)^{2}} \\
&=\frac{kp}{3p-1}-\frac{4p(1-2p)}{(3p-1)^{2}}+\frac{2^{k}(1-p)^{k+2}}{(3p-1)^{2}(1+p)^{k}},
\end{align*}
which completes the proof. \hfill \Halmos
\end{proof}


\begin{lemma}
\label{lem:markov_b_transitions2}
Starting in state $A_0$ of Markov chain $\hat{N}(p,k)$ for each $j \in \{0,\dots,k\!-\!1\}$ the expected number of transitions from $B_j$ to $A_{j+1}$ is at most $\frac{p}{3p-1}$.
\end{lemma}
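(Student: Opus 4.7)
My approach is to analyze directly the expected number of visits $v_j^A$ and $v_j^B$ to the states $A_j$ and $B_j$ of $\hat{N}(p,k)$ when the chain starts at $A_0$. By the Markov property, each visit to $B_j$ (with $j<k$) contributes an independent probability-$\tfrac{1}{2}$ transition to $A_{j+1}$, so the expected number of $B_j\to A_{j+1}$ transitions equals $\tfrac{1}{2}v_j^B$. It therefore suffices to show $v_j^B\le \tfrac{2p}{3p-1}$ for every $0\le j\le k-1$.

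To bound $v_j^B$, I would rely on two conservation identities. The first is a cut argument: separating the states of level at most $j$ from those of higher level, and using that the chain starts below the cut (at $A_0$) and is absorbed above it (at $B_k$), the expected net upward flow across the cut must equal $1$. Since the only upward edges originate at $B_j$ (which leaves the level with probability $1$) and the only downward edges are $A_{j+1}\to A_j$ (with probability $1-p$), this yields
\[
v_j^B - (1-p)\,v_{j+1}^A \;=\; 1, \qquad 0\le j\le k-1.
\]
The second is the in/out-balance at the non-starting, non-absorbing node $A_{j+1}$, namely $v_{j+1}^A = \tfrac{1}{2}v_j^B + (1-p)v_{j+2}^A$ for $0\le j\le k-2$, with boundary $v_k^A = \tfrac{1}{2}v_{k-1}^B$. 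Plugging the first identity into the second eliminates $v_j^B$ and produces the backward recursion
\[
v_{j+1}^A \;=\; \tfrac{1}{1+p} \;+\; \beta\, v_{j+2}^A, \qquad \beta := \tfrac{2(1-p)}{1+p},
\]
with terminal value $v_k^A = \tfrac{1}{1+p}$, obtained by combining $v_k^A = \tfrac{1}{2}v_{k-1}^B$ with the cut identity at $j=k-1$.

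For $p>\tfrac{1}{2}$ we have $\beta<1$, and the unique fixed point of this recursion is $v^\ast := \tfrac{1}{(1+p)(1-\beta)} = \tfrac{1}{3p-1}$. A backward induction on $i$ from $k$ down to $1$ then shows $v_i^A \le v^\ast$: the base case $v_k^A=\tfrac{1}{1+p} \le v^\ast$ reduces to $p\le 1$, and the inductive step $v_i^A \le \tfrac{1}{1+p} + \beta v^\ast = v^\ast$ is immediate since $v^\ast$ is fixed by the recursion. Substituting $v_{j+1}^A \le \tfrac{1}{3p-1}$ back into the cut identity gives $v_j^B \le 1 + \tfrac{1-p}{3p-1} = \tfrac{2p}{3p-1}$, yielding the desired bound $\tfrac{p}{3p-1}$.

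The main obstacle I expect is finding the right combination of the two flow equations, so that $v_j^B$ is eliminated and the resulting recursion is a strict contraction whose fixed point is exactly $\tfrac{1}{3p-1}$; once this is set up, the monotone induction is essentially automatic and avoids the more intricate closed-form derivation used in Lemma~\ref{lem:markov_b_transitions1}.
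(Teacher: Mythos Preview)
Your argument is correct. The cut identity $v_j^B-(1-p)v_{j+1}^A=1$ is exactly the expected net flow across the level-$j$ boundary (the chain starts below and is absorbed above, and $B_j$ is the unique upward exit while $A_{j+1}\to A_j$ is the unique downward one); the in/out balance at $A_{j+1}$ is standard since $A_{j+1}$ is neither initial nor absorbing; and the contraction recursion $v_{j+1}^A=\frac{1}{1+p}+\beta v_{j+2}^A$ with $\beta=\frac{2(1-p)}{1+p}$, terminal value $v_k^A=\frac{1}{1+p}$, and fixed point $\frac{1}{3p-1}$ is derived and used correctly. One small remark: $\beta<1$ already holds for $p>\tfrac13$, not only $p>\tfrac12$; the weaker hypothesis would match the paper's statement and is all you actually use.

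Your route is genuinely different from the paper's. The paper argues probabilistically: it notes that, after leaving $B_j$, returning to $B_j$ is equivalent to first hitting $A_j$ (because $A_j$ is the only re-entry point to level $\le j$ from above, and from $A_j$ one is guaranteed to reach $B_j$ before absorption), sets up a recursion for the hitting probabilities $a_i(j),b_i(j)$ of $A_j$, verifies an upper-bound solution $b_j(j)\le\frac{1-p}{2p}$, and then reads off the expected number of visits to $B_j$ as the mean of a geometric random variable, $\frac{1}{1-b_j(j)}\le\frac{2p}{3p-1}$. Your flow/balance argument sidesteps the somewhat delicate ``return to $B_j$ iff hit $A_j$'' step and the guessed upper solution, replacing them with two exact linear identities and a one-line contraction. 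The trade-off is that the paper's method makes the probabilistic structure (return probability, geometric count of visits) explicit, whereas yours is purely algebraic but shorter and requires no ansatz.
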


\begin{proof}{Proof.}
As the expected number of such transitions is half the expected number of 
visits to state~$B_j$, it suffices to bound the latter quantity.

Suppose we are in state $B_j$. 
The probability of coming back to $B_j$ equals the probability of hitting~$A_j$ from~$B_j$.
Denote by $a_i(j)$, $b_i(j)$ the hitting probability of state~$A_i$ from~$A_j$ and $B_j$, respectively. We have
\begin{subequations}
\label{eq:recurrence_arbitrary} 
\begin{align}
b_i(k) & =0,\label{eq:recurrence_arbitrary1}\\[6pt]
b_i(j) & =\frac{1}{2} b_i(j+1)+\frac{1}{2}a_i(j+1) &  & \text{ for all }j\in\{i,\dots,k-1\},\label{eq:recurrence_arbitrary2}\\[6pt]
a_i(j) & =p b_i(j)+(1-p) a_i(j-1) &  & \text{ for all }j\in\{i+1,\dots,k\},\label{eq:recurrence_arbitrary3}\\[6pt]
a_i(i) & =1\label{eq:recurrence_arbitrary4}
\end{align}
\end{subequations}
Let $\beta= \frac{2(1-p)}{p+1} < 1$ (as $p>\nicefrac{1}{3}$). It is easy to check that for $j\in\{i, \ldots, k\}$
\begin{align*}
a_i(j) & \le \beta^{j-i} \\
b_i(j) & \le \frac{1-p}{2p}\beta^{j-i} ,
\end{align*}
gives an upper bound on the solution of \eqref{eq:recurrence_arbitrary}
as these values satisfy equalities \eqref{eq:recurrence_arbitrary2}, \eqref{eq:recurrence_arbitrary3}, \eqref{eq:recurrence_arbitrary4}, and only overestimate \eqref{eq:recurrence_arbitrary1}.
We can interpret the visits to state~$B_j$ after the first visit as a geometric random variable with success probability~$1-b_j(j)$.
Thus, the expected number of visits to~$B_j$ is given by
\begin{align*}
1 + \frac{1-(1-b_j(j))}{1-b_j(j)} = \frac{1}{1-b_j(j)} & \leq \frac{1}{1-\frac{1-p}{2p}} = \frac{2p}{3p-1}. 
\end{align*}
We conclude that the expected number of transitions from $B_j$ to $A_{j+1}$ is at most $\frac{p}{3p-1}$, proving the claim. \hfill \Halmos
\end{proof}

\section{Conclusion}
\label{sec:conclusion}

We considered prophet inequalities with a covering constraint and a minimization objective. We gave constant competitive algorithms for this type of problem and established concurrent employment as a necessary feature of such algorithms.

We note that our results extend to slightly more general settings, where 
(a) we relax the covering constraint by associating a penalty~$B < \infty$ with time steps where no contract is active,
(b) multiple applicants arrive in each time step,
(c) applicants may be hired fractionally.

A crucial limitation of our model is the assumption that costs are distributed independently, and it remains an interesting question how to address correlated costs.

\bibliographystyle{informs2014} 
\bibliography{secretary_leasing_literature} 


\end{document}